\newcolumntype{C}{>{$}c<{$}}
\pgfplotsset{compat=1.10}
\definecolor{color0}{RGB}{230,159,0}
\definecolor{color1}{RGB}{86,180,233}
\definecolor{color2}{RGB}{0,158,115}
\definecolor{color3}{RGB}{240,228,66}
\definecolor{color4}{RGB}{0,114,178}
\definecolor{color5}{RGB}{213,94,0}
\definecolor{color6}{RGB}{204,121,167}
\newcommand{\calC}{\mathcal{C}}
\newcommand{\RR}{\mathbb{R}}
\newcommand{\NN}{\mathbb{N}}
\newcommand{\ZZ}{\mathbb{Z}}
\newcommand{\EE}{\mathbb{E}}
\renewcommand{\Pr}{\mathbb{P}}
\newcommand{\OPT}{\mathrm{OPT}}
\newcommand{\R}{\mathbf{R}}
\newcommand{\barnk}{\bar{n}_k}
\newcommand{\kipi}{{\normalfont\text{$(k,n)$-SPI}}}
\newcommand{\ssap}{\normalfont\text{SSAP}}
\newcommand{\nls}{\normalfont\text{NLS}}
\newcommand{\dummylabel}[2]{\def\@currentlabel{#2}\label{#1}}
\newcommand{\mycomment}[1]{}
\newtheorem{theorem}{Theorem}
\newtheorem{lemma}{Lemma}
\newtheorem{proposition}{Proposition}
\newtheorem{claim}{Claim}
\newcommand{\vvcom}[1]{\todo[color=blue!25!white,inline]{Victor: #1}}
\newcommand{\spcom}[1]{\todo[color=red!25!white,inline]{Sebastian: #1}}
\tikzstyle{component}=[draw opacity=0.4,draw=black,line width=1.0cm,line cap=round,line join=round]
\newcommand{\com}[1]{\textcolor{blue}{#1}}
\newlength{\algofontsize}
\def\equationautorefname~#1\null{(#1)\null}
\patchcmd{\hyper@makecurrent}{%
    \ifx\Hy@param\Hy@chapterstring
        \let\Hy@param\Hy@chapapp
    \fi
}{%
    \iftoggle{inappendix}{
        \@checkappendixparam{chapter}%
        \@checkappendixparam{section}%
        \@checkappendixparam{subsection}%
        \@checkappendixparam{subsubsection}%
        \@checkappendixparam{paragraph}%
        \@checkappendixparam{subparagraph}%
    }{}%
}{}{\errmessage{failed to patch}}
\newcommand*{\@checkappendixparam}[1]{%
    \def\@checkappendixparamtmp{#1}%
    \ifx\Hy@param\@checkappendixparamtmp
        \let\Hy@param\Hy@appendixstring
    \fi
}
\apptocmd{\appendix}{\toggletrue{inappendix}}{}{\errmessage{failed to patch}}
\begin{document}
\algrenewcommand\algorithmicrequire{\textbf{Input:}}
\algrenewcommand\algorithmicensure{\textbf{Output:}}

\title{Splitting Guarantees for Prophet Inequalities via Nonlinear Systems
\vspace{.4cm}
}

\author{
Johannes Brustle
\thanks{Department of Computer and Systems Sciences, Sapienza University of Rome, Italy.}
\and Sebastian Perez-Salazar
\thanks{Department of Computational Applied Mathematics and Operations Research, Rice University, USA.}
\thanks{Ken Kennedy Institute, Rice University, USA.}
\and Victor Verdugo
\thanks{Institute for Mathematical and Computational Engineering, Pontificia Universidad Católica de Chile, Chile.}
\thanks{Department of Industrial and Systems Engineering, Pontificia Universidad Católica de Chile, Chile.}
}

\date{}
\maketitle

\begin{abstract}
The prophet inequality is one of the cornerstone problems in optimal stopping theory and has become a crucial tool for designing sequential algorithms in Bayesian settings. 
In the i.i.d. $k$-selection prophet inequality problem, we sequentially observe $n$ non-negative random values sampled from a known distribution. Each time, a decision is made to accept or reject the value, and under the constraint of accepting at most $k$ items. 
For $k=1$, Hill and Kertz [Ann. Probab. 1982] provided an upper bound on the worst-case approximation ratio that was later matched by an algorithm of Correa et al. [Math. Oper. Res. 2021].
The worst-case tight approximation ratio for $k=1$ is computed by studying a differential equation that naturally appears when analyzing the optimal dynamic programming policy.
A similar result for $k>1$ has remained elusive.

{
In this work, we introduce a nonlinear system of differential equations for the i.i.d. $k$-selection prophet inequality that generalizes Hill and Kertz's equation when $k=1$. 
Our nonlinear system is defined by $k$ constants that determine its functional structure, and their summation provides a lower bound on the optimal policy's asymptotic approximation ratio for the i.i.d. $k$-selection prophet inequality.
To obtain this result, we introduce for every $k$ an infinite-dimensional linear programming formulation that fully characterizes the worst-case tight approximation ratio of the $k$-selection prophet inequality problem for every $n$, and then we follow a dual-fitting approach to link with our nonlinear system for sufficiently large values of $n$.
As a corollary, we use our provable lower bounds to establish a tight approximation ratio for the stochastic sequential assignment problem in the i.i.d. non-negative regime.}
\end{abstract}

\thispagestyle{empty}
\newpage
\setcounter{page}{2}

\section{Introduction}
The prophet inequality problem is one of the cornerstone problems in optimal stopping theory~\citep{krengel1977semiamarts}. 
In the i.i.d. version of the problem, introduced by~\cite{HillKertzStopRule}, a sequence of $n$ i.i.d.\ non-negative values $X_1,\ldots,X_n$ are presented one by one to a decision-maker. 
At each time, the decision-maker faces the choice of either selecting the value or rejecting it entirely, moving on to observe the next value if available, with no option to reconsider previously rejected values. 
The quality of the policy, or algorithm, implemented by the decision-maker is measured by means of the approximation ratio with respect to the expected value of the optimal hindsight (offline benchmark) solution, that is, $\EE[\max\{X_1,\ldots,X_n\}]$. 

~\cite{HillKertzStopRule} provided an algorithm that guarantees an approximation ratio of $1-1/e$ and an upper bound of $\gamma \approx 0.745$ on the approximation ratio by studying the optimal dynamic program for the worst-case distributions. 
Later,~\cite{kertz1986stop} used the recursion from the optimal dynamic program in~\citep{HillKertzStopRule} to provide an ordinary differential equation (ODE)---that we termed Hill and Kertz equation for simplicity and in honor to both authors---where the $\gamma$ bound is embedded as a \emph{unique constant} that guarantees crucial analytical properties of the solution of the ODE: $y'= y(\ln y -1) -1/\gamma + 1$, $y(0)=1$, $y(1)=0$. However, the lower bound on the approximation remained $1-1/e$ for many years until~\cite{correa2021posted} used the Hill and Kertz equation to provide an algorithm that attains an approximation ratio of at least $\gamma$ for any $n$.

In recent years, there has been substantial progress in understanding the approximation limits for prophet inequality problems, mainly driven by their applicability in mechanism design~\citep{lucier2017economic}. One of the most prominent settings is the i.i.d.\ $k$-selection prophet inequality problem, where the decision-maker selects at most $k$ values from the $n$ observed and aims to maximize the expected sum of values selected. 
The offline benchmark in this case is $ \sum_{t=n-k+1}^n \EE[X_{(i)} ]$ where $X_{(1)}\leq X_{(2)}\leq \cdots \leq X_{(n)}$ are the ordered statistics of the random values $X_1,\ldots,X_n$.
Observe that when $k=1$, this setting corresponds to the classic i.i.d. prophet inequality problem.
We refer to this problem as the \emph{$k$-selection prophet inequality}. When the length of 
the sequence is $n$ and $k$ selections can be made, we refer for short to this problem as $\kipi{}$.

The approximation ratio of the i.i.d.\ $k$-selection prophet inequality problem has been proven to be at least $1-k^ke^{-k}/k! \approx 1-1/\sqrt{2\pi k}$ (see, e.g., ~\citep{chakraborty2010approximation,yan2011mechanism,Duetting2020,beyhaghi2021improved,arnosti2023tight}). 
Using a different approach, \cite{jiang2022tightness} recently introduced an optimization framework to characterize worst-case approximation ratios for prophet inequality problems, including the i.i.d.\ $k$-selection setting for a fixed $n$; however, it is unclear how to use their framework to obtain provable worst-case lower bounds for $k\ge 2$.

Over the years, it has remained elusive to get a result analogous to the Hill and Kertz equation for $k\geq 2$, that is, to obtain provable lower bounds on the approximation ratios via studying a closed-form differential equation related to the optimal dynamic programming solution. 
This motivates the central question of this work:
\emph{Can we find a closed-form system of differential equations that provides approximation ratios for the i.i.d.\ $\kipi{}$?} Our findings provide the first step towards provable approximations ratios for $\kipi{}$ via a system of differential equations that naturally extends the Hill and Kertz equation for any $k\geq 1$.


\subsection{Our Contributions and Techniques}

Our first main result characterizes the optimal approximation ratio via an infinite-dimensional linear program in the space of quantiles. In our second main result, we provide a closed-form nonlinear system of differential equations that gives provable lower bounds on the approximation ratio for \kipi{} when $n$ is large enough. In our third result, applying our new provable lower bounds for $\kipi{}$, we find a tight approximation ratio for the stochastic assignment problem. Below, we present more details about our results.

\noindent{\bf Exact Formulation for $\kipi{}$.} {Our first step toward characterizing the asymptotic approximation ratio is an alternative formulation to that of~\cite{jiang2022tightness}. While their formulation provides an exact expression for the approximation ratio of $\kipi{}$ for any $n$ and $k$, its analysis is nontrivial, and even its asymptotic behavior becomes difficult to study beyond the case $k=1$. To address this, we introduce a new infinite-dimensional linear program that \emph{characterizes} the optimal approximation ratio for $\kipi{}$, extending the formulation of~\cite{perez2022iid} to multiple selections. This formulation is inspired by writing the optimal dynamic programming formulation in the quantile space. We take a minimax approach where we search the worst-case distribution while optimizing for the dynamic program's value. We show that in the continuous space of quantiles $[0,1]$, such a problem is linear; see formulation~\ref{form:LP_dual} in Section~\ref{sec:exact_LP_formulation} for the details of the formulation. Even though our formulation is equivalent to the one of~\cite{jiang2022tightness}, using our formulation, we can provide an analysis as $n$ grows that organically provides the nonlinear system of differential equations that we later use to get provable lower bounds. We provide further discussion about the differences between the formulation in~\citep{jiang2022tightness} and ours in Section~\ref{sec:final_remarks}.} In Section~\ref{sec:exact_LP_formulation}, we provide the exact linear programming formulation~\ref{form:LP_dual} and the proof that characterizes the optimal approximation ratio for $\kipi{}$.

\noindent{\bf Approximation via a Nonlinear System.} The analysis of our infinite-dimensional program as $n$ approaches infinity leads us to introduce a system of $k$ coupled nonlinear differential equations, extending the Hill and Kertz equation ($k=1$). 
This new nonlinear system is parameterized by $k$ nonnegative values $\theta_1,\ldots,\theta_k$, and we look for functions $y_1,\ldots,y_k$ satisfying the following in the interval $[0,1)$:
\begin{align}
    (\Gamma_k(-\ln y_k))' & =  k! \left( 1 - 1/(k\theta_k) \right) - \Gamma_{k+1}(-\ln y_k),\label{ode1}\\
    (\Gamma_{k}(-\ln y_j))' & =  k!-\Gamma_{k+1}(-\ln y_j) - \frac{\theta_{j+1}}{\theta_j}(k!-\Gamma_{k+1}(-\ln y_{j+1}))\text{ for every } j\in [k-1],\label{ode2}\\
    y_j(0) & =1 \text{ and }\lim_{t\uparrow 1}y_j(t) =0  \hspace{.4cm}\text{for every } j\in [k],\label{ode3}
\end{align}
where $\Gamma_{\ell}(x)=\int_x^\infty t^{\ell-1} e^{-t}\, \mathrm{d}t$ is the upper incomplete gamma function. {Note that for $k=1$ in~\eqref{ode1}-\eqref{ode3} we recover the Hill and Kertz differential equation.} Now, if $\gamma_{n,k}$ denotes the optimal approximation ratio for \kipi, we can prove that there is $n_0=n_0(k)$ such that for $n\geq n_0$ we have 
\begin{align}
\gamma_{n,k}\geq \left(1- 24 k \frac{\ln(n)^2}{n}\right)\sum_{j=1}^k \theta_j^{\star}, \label{ineq:approximation_informal_result}
\end{align}
where $\theta_1^{\star}, \ldots, \theta_k^\star$ are the values for which there exists a solution to the nonlinear system of differential equations \eqref{ode1}-\eqref{ode3}. We remark that the lower bound on the approximation ratio given by the nonlinear system in inequality~\eqref{ineq:approximation_informal_result} is simply obtained by the summation of the $k$ constants that define it. For instance, when $k=2$, the two constants for which the nonlinear system has a solution are $\theta_1^{\star}\approx 0.346$ and $\theta_2^{\star}\approx 0.483$, and therefore we get a provable lower bound of $\approx 0.829$ on $\gamma_{n,2}$, the optimal approximation ratio for $2$ selections, for any $n$ large enough.

To prove inequality~\eqref{ineq:approximation_informal_result} we employ a dual-fitting approach within our infinite-dimensional linear program. Namely, we introduce a dual infinite-dimensional program of our exact formulation, and using the solution of the nonlinear system~\eqref{ode1}-\eqref{ode3}, we explicitly construct feasible solutions for this dual.
Then, the theorem is obtained by a weak-duality argument. The details are presented in Section~\ref{sec:LB}. We remark that our analysis requires a careful study of the nonlinear system \eqref{ode1}-\eqref{ode3}; which we also provide in Section \ref{sec:LB}. The small multiplicative loss $(1-24k \ln(n)^2/n)$ appears when we construct the dual feasible solution, and it's needed in our analysis to ensure feasibility in the dual problem; note that this loss vanishes as $n$ grows.
We finally note that any feasible solution to our dual program can be implemented using a quantile-based algorithm; therefore, we can implement an algorithm that has an approximation ratio at least $(1-24k\ln(n)^2/n)\sum_{i=1}^k \theta_i^\star$.

\begin{table}[h!]
        \centering
        \begin{tabular}{|c|c|c|c|c|c|}
        \hline
            $k$ & 1 & 2 & 3 & 4 & 5  \\
            \hline\hline
            Our approach ($\sum_{i=1}^k \theta_i^\star$) ($n\to \infty$) & 0.7454 & 0.8293 & 0.8648 & 0.8875 & 0.9035  \\
            \hline
            \cite{beyhaghi2021improved} (any $n\geq 1$)  & 0.6543 & 0.7427 & 0.7857 & 0.8125 & 0.8311 \\
            \hline
            \cite{jiang2022tightness} ($n = 1000$) & 0.7445 & 0.8347 & 0.8718 & 0.8925 & 0.9063 \\
            \hline
        \end{tabular}
        \caption{{  Comparison of known provable lower bounds for $\gamma_{n,k}$ when $n$ is large and $k\in \{1,\ldots,5\}$. The bounds of ~\cite{beyhaghi2021improved} hold for every $n$.} The bounds of ~\cite{jiang2022tightness} are numerically-verified lower bounds on $\gamma_{n,k}$ for $n = 1000$.}
        \label{tab:table_1}
    \end{table}
    
{\noindent\bf Application to the Stochastic Assignment Problem.} Finally, in Section \ref{sec:ssap}, as an application of our new provable lower bounds for $\kipi{}$ we provide the tight optimal approximation ratio for the classic stochastic sequential assignment problem (\ssap{} for short) by~\cite{derman1972sequential}. In the \ssap, there are $n$ non-negative values $r_1,\ldots,r_n$ and $n$ i.i.d.\ non-negative values $X_1,\ldots,X_n$ that are observed one at the time by a decision-maker. At each time $t$, the decision-maker must assign irrevocably $X_t$ to one of the remaining available $r_i$ values that have not been assigned yet. Assigning $X_t$ to $r_i$ provides a reward of $r_i\cdot X_t$, and the goal is to maximize the expected sum of rewards. This problem extends $\kipi{}$ and relates to several online matching problems~\citep{mehta2007adwords,goyal2019online}. 

We revisit the $\ssap{}$ through the lens of prophet inequalities and provide an exact value of its asymptotic approximation ratio. Specifically, we first characterize the optimal approximation ratio for \ssap{} to be equal to $\alpha_n=\min_{k\in [n]}\gamma_{n,k}$. This immediately implies that $\limsup_n \alpha_n \leq \gamma \approx 0.745$. 
To the best of the authors' knowledge, the best current provable lower bounds over $\gamma_{n,k}$ are the following: (1) for $k=1$, $\gamma_{n,1} \geq  \gamma \approx 0.745$~\citep{correa2021posted}; (2) for any $k\geq 1$, $\gamma_{n,k}\geq 1-k^k e^{-k}/k!$ (see, e.g., \citep{Duetting2020,beyhaghi2021improved}); (3) the values reported by~\cite{beyhaghi2021improved} in Table~\ref{tab:table_1}. These three results together imply in principle that $\alpha_n\geq 0.7427$; hence, there is a constant gap between the lower and the upper bound on $\alpha_n$. 
    The $1-k^ke^{-k}/k!$ lower bound for the $\kipi{}$ is at least 0.78 for $k\geq 3$ which is in particular larger than $\liminf_n\gamma_{n,1}\approx 0.745$. 
Nevertheless, to the best of the authors' knowledge, no monotonicity in $k$ is known for the values $\gamma_{n,k}$. Therefore, our new provable $0.829$ lower bound for $k=2$ allows us to conclude that the approximation ratio for the \ssap{} is exactly $\gamma \approx 0.745$ for $n$ sufficiently large, fully characterizing the approximation ratio of the problem.

\subsection{Related Work}

The basic prophet inequality problem, as introduced by \citet{krengel1977semiamarts}, was resolved by using a dynamic program that gave a tight approximation ratio of $1/2$. 
\citet{SamuelCahn1984} later showed that a simple threshold algorithm yields the same guarantee. 
Since then, there have been several generalizations spanning combinatorial constraints,  different valuation functions and arrival orders, resource augmentation, and limited knowledge of the distributions \citep{Kleinberg2012,ehsani2018prophet,Correa2018,CorreaSurvey}. 

A major reason for the renewed interest in prophet inequalities is their relevance to auctions, specifically posted priced mechanisms (PPM) in online sales~\citep{Alaei2011,Chawla2010,Duetting2020,Hajiaghayi2007,Kleinberg2012,Correa2023combinatorial}. It was implicitly shown by \cite{Chawla2010} and \cite{Hajiaghayi2007} that every prophet-type inequality implies a corresponding approximation guarantee in a PPM, and the converse holds as well \citep{Correa2019PricingtoProphets}. 
Using these well-known reductions, our lower bounds for the i.i.d. $k$-selection prophet inequality problem also yield PPM's for the problem of selling $k$ homogeneous goods to $n$ unit-demand buyers who arrive sequentially with independent and identically distributed valuations.

Linear and convex programming have been a powerful tool for the design of online algorithms. For instance, in online and Bayesian matching problems~\citep{mehta2007adwords,goyal2019online}, online knapsack~\citep{babaioff2007knapsack,kesselheim2014primal}, secretary problem~\citep{buchbinder2014secretary,chan2014revealing,perez2021robust}, factor-revealing linear programs~\citep{feldman2016online,lee2018optimal}, and competition complexity~\citep{brustle2022competition}. 
Similar to us, ~\citet{perez2022iid} use a quantile-based linear programming formulation to provide optimal policies in the context of decision-makers with a limited number of actions.

Our analysis provides a new nonlinear system of differential equations, which extends the ordinary differential equation by Hill and Kertz for $k=1$, and provides provable lower bounds on the asymptotic approximation ratio.
The standard Hill and Kertz equation has been used in various recent works~\citep{correa2021posted, Liu2021, perez2022iid, brustle2022competition} to provide guarantees in single-selection problems. Likewise, ODE methods have been used in other online selection problems~\citep{assaf2002ratio,ekbatani2024prophet,mucci1973differential,perez2021robust} to provide asymptotic guarantees. {Recently, the concurrent work by~\citep{molina2025prophet} derived a nonlinear system similar to~\eqref{ode1}-\eqref{ode3} for a variation of $\kipi{}$. They obtained this system through a nontrivial extension of the quantile approach introduced by~\cite{correa2021posted}. Our approach differs from theirs in that we take an optimization-based perspective, similar to~\citep{jiang2022tightness,perez2022iid}.}

\section{Preliminaries}

An instance of $\kipi{}$ is given by a tuple $(n,k,F)$, where $n$ is the number of values $X_1, \ldots, X_n$ that are drawn i.i.d according to the continuous distribution $F$ supported on $\RR_+$. This assumption is commonly made in the literature (e.g., \citet{Liu2021}), as we can perturb a discrete distribution by introducing random noise at the cost of a negligible loss in the objective. 
Given an instance of \kipi{}, observe that we can always scale the values $X_1,\ldots,X_n$ by a positive factor so the optimal value is equal to 1, and the reward of the optimal policy is scaled by the same amount.
In particular, the approximation ratio of the optimal policy remains the same.

Given an instance $(n,k,F)$, we use dynamic programming to compute the optimal reward of the optimal sequential policy. 
Let $A_{t,\ell}(F)$ be the reward of the optimal policy when $\ell\leq k$ choices are still to be made in periods $\{t,\ldots,n\}$.
Then, for every $t\in [n]$ and $\ell \in [k]$, the following holds:
\begin{align}
    A_{t,\ell}(F) &= \max_{x\geq 0} \Big\{  (\EE[X\mid X\geq x]  + A_{t+1,\ell-1}(F))\Pr[X\geq x] + A_{t+1,\ell}(F)\Pr[X<x]  \Big\}, \label{eq:DP_recursion} \\
    A_{n+1,\ell}(F) & = 0,\text{ and }   
    A_{t,0}(F)  = 0.\hspace{.3cm} \label{eq:DP_init_condi_2}
\end{align}
Equation~\eqref{eq:DP_recursion} corresponds to the continuation value condition in optimality; the term in the braces is the expected value obtained when a threshold $x$ is chosen when at period $t$ and $\ell$ choices can still be made. In~\eqref{eq:DP_init_condi_2} we have the border conditions. 
In particular, it holds 
\begin{align}\label{eq:gamma-def}
\gamma_{n,k} = \inf\Big\{A_{1,k}(F): \text{instances }(n,k,F)\text{ with }\OPT_{n,k}(F)=1\Big\}. 
\end{align}

\section{An Infinite-Dimensional Formulation}\label{sec:exact_LP_formulation}

In this section, we provide the characterization of the optimal approximation ratio for $\kipi{}$, $\gamma_{n,k}$, via an infinite-dimensional linear program. For every positive integers $k$ and $n$, with $n\ge k$, consider the following infinite dimensional linear program:
\begin{align}
	\inf \quad\,\,  d_{1,k}\hspace{2.5cm}& \tag*{\normalfont{\mbox{[P]$_{n,k}$}}}\label{form:LP_dual}\\
	\normalfont\text{s.t.} \quad d_{t,\ell} \geq \int_0^q h(u) \, \mathrm{d}u \, +\,  & q d_{t+1,\ell-1} + (1-q)d_{t+1,\ell},  \text{ for every } t\in[n], \ell\in[k],\text{ and } q\in [0,1], \label{const:dynamic_constr_dual} \\
	\int_0^{1} g_{n,k}(u) h(u) \, \mathrm{d}u &\geq 1,\label{const:max_value_const_dual} \\
	h(u) &\geq h(w)\hspace{.2cm} \text{ for every } u\leq w, \text{with }u,w\in [0,1],\label{const:nonincreasing_dual}\\
    d_{t,\ell}&\ge 0\hspace{0.3cm} \text{for every }t\in [n+1]\text{ and every }\ell\in \{0\}\cup [k],\label{const:d-positive-dual}\\
    h(u)&\ge 0\hspace{.2cm}\text{ for every }u\in [0,1]\text{ and }h\text{ is continuous in }(0,1),\label{const:positive-f-dual}
\end{align}
where $g_{n,k}(u)=\sum_{j=n-k+1}^n j\binom{n}{j}(1-u)^{j-1} u^{n-j}$ for every $u\in [0,1]$. $d_{t,\ell}$ represents that the welfare of the optimal dynamic policy after having seen $t\leq n$ values and $\ell \leq k$ of them remain to be selected. The family of constraints (\ref{const:dynamic_constr_dual}) asserts that for any quantile $q$, $d_{t,\ell}$ is at least the welfare obtained by using $q$ as a threshold for selection in the current round, $t$. That is, the first two terms of the right hand side cover the case when the value is indeed selected, whereas the third term covers the case when it is not. Constraint (\ref{const:max_value_const_dual}) represents a normalization to instances where the optimal welfare $ \sum_{t=n-k+1}^n \EE[X_{(i)} ]\geq 1$. 
In the program \ref{form:LP_dual}, the variables $h(u)$ represent the values of a non-negative and non-increasing function $h$ in $[0,1]$, and therefore we have infinitely many of them, whereas the variables $d$ are finitely many. More specifically, $h(u)$ represents $F^{-1}(1-u)$, where $F$ is the c.d.f of a probability distribution and minimizing over all possible choices of $h$ is equivalent to search the worst-case distribution for $\kipi{}$.
We remark the continuity for $h$ in the program is mainly for the sake of simplicity in our analysis but does not represent a strict requirement.

The following structural result formalizes the interpretation of the variables provided in the previous paragraph. The remainder of the section is dedicated to its proof.
\begin{theorem}\label{thm:new-LP}
The optimal approximation ratio for \kipi{} is equal to the optimal value of \ref{form:LP_dual}.
\end{theorem}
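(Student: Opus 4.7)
The plan is to identify \ref{form:LP_dual} with a dynamic-programming characterization of the optimal policy written in quantile space, together with a normalization of the prophet's expected value.

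First, I would set up the \emph{quantile reformulation}: to each continuous distribution $F$ of the values, associate the function $f\colon[0,1]\to\RR_+$ defined by $f(u)=F^{-1}(1-u)$. This $f$ is non-negative, non-increasing, and continuous on $(0,1)$, matching constraints \eqref{const:nonincreasing_dual}--\eqref{const:positive-f-dual}. Using the standard density $j\binom{n}{j}u^{j-1}(1-u)^{n-j}$ for the $j$-th order statistic of $n$ i.i.d.\ uniform variables together with the substitution $v=1-u$, one obtains
\[
\OPT_{n,k}(F) \;=\; \sum_{j=n-k+1}^{n}\EE[X_{(j)}] \;=\; \int_0^1 f(u)\, g_{n,k}(u)\, \mathrm{d}u.
\]
Hence \eqref{const:max_value_const_dual} is the normalization $\OPT_{n,k}(F)\ge 1$, which is without loss of generality when computing approximation ratios by homogeneity.

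Second, I would interpret the constraints \eqref{const:dynamic_constr_dual} as the \emph{Bellman inequalities in quantile space}. With $U_t:=1-F(X_t)$ uniform on $[0,1]$ and $f$ non-increasing, the optimal policy is a threshold policy on $U$, so the Bellman equation takes the form
\[
V_{t,\ell}(f)\;=\;\max_{q\in[0,1]}\Bigl\{\int_0^q f(u)\,\mathrm{d}u + q\,V_{t+1,\ell-1} + (1-q)\,V_{t+1,\ell}\Bigr\},\quad V_{t,0}=V_{n+1,\ell}=0,
\]
which is exactly the tight version of the family of inequalities \eqref{const:dynamic_constr_dual} ranging over $q$. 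A backward induction with the minimal choices $d_{n+1,\ell}=0$ and $d_{t,0}=0$ (both unconstrained apart from non-negativity via \eqref{const:d-positive-dual}) then shows that for each fixed $f$, the smallest $d_{1,k}$ feasible in \ref{form:LP_dual} is precisely $V_{1,k}(f)$, i.e., the optimal expected value for the distribution whose reversed quantile function is $f$.

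Combining these two steps, the value of \ref{form:LP_dual} equals $\inf_f V_{1,k}(f)$ over $f$ non-increasing, non-negative, continuous, and satisfying $\int_0^1 g_{n,k}f\ge 1$. Since both $V_{1,k}(f)$ and $\int g_{n,k}f$ are positively homogeneous in $f$, rescaling converts this into $\inf_f V_{1,k}(f)/\int_0^1 g_{n,k}f$, and inverting the quantile map recovers a distribution from any admissible $f$, so this equals $\gamma_{n,k}=\inf_F V_{1,k}(F)/\OPT_{n,k}(F)$. The main obstacle is the restriction to \emph{continuous} $f$ baked into \ref{form:LP_dual}: distributions in the original problem may have atoms or unbounded support, forcing the associated $f$ to have jumps or singularities. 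I would handle this by a density/approximation argument, uniformly approximating any admissible $f$ from above and below by continuous non-increasing non-negative functions (e.g., via piecewise-linear interpolation off a shrinking grid, or convolution with a smooth mollifier composed with a monotone rearrangement). Since $V_{1,k}(\cdot)$ and the linear functional $\int g_{n,k}(\cdot)$ are continuous under such approximations, the infimum is preserved; the dynamic-programming identification in Steps 1 and 2 is then essentially mechanical.
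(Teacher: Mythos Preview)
Your proposal is correct and follows essentially the same route as the paper. The paper splits the argument into two lemmas—one showing that any normalized instance $(n,k,F)$ yields a feasible LP solution with objective $A_{1,k}(F)$, and one showing that any feasible $(d,f)$ dominates the dynamic-programming values $A_{t,\ell}(G)$ for the distribution $G$ of $f(1-Q)$—whereas you collapse both directions into a single identification (for fixed $f$, the minimal feasible $d_{1,k}$ equals the Bellman value $V_{1,k}(f)$, then optimize over $f$); but the ingredients are the same quantile reformulation, the same reading of \eqref{const:dynamic_constr_dual} as the Bellman recursion, and the same order-statistic identity for $g_{n,k}$, with the paper likewise remarking that the continuity hypothesis on $f$ is inessential.
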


We denote by $v_{n,k}$ the optimal value of~\ref{form:LP_dual}. 
We show that $v_{n,k}=\gamma_{n,k}$ in Theorem \ref{thm:new-LP} by proving both inequalities, $v_{n,k}\leq \gamma_{n,k}$ and $v_{n,k}\geq \gamma_{n,k}$, separately. 
For the first inequality, we argue that any instance of \kipi{} with $\OPT_{n,k}(F)=1$, produces a feasible solution to~\ref{form:LP_dual} with an objective value equal to the reward of the optimal sequential policy. 
For the second, we show that any feasible solution $(d,f)$ to~\ref{form:LP_dual} produces an instance of \kipi{} such that the reward of the optimal policy is no larger than the objective value of the instance $(d,f)$.

Before we prove the inequalities, we leave a proposition with some preliminary properties that we use in our analysis. 
The proof can be found in Appendix \ref{app:exact_LP_formulation}.
\begin{proposition}\label{prop:useful_LP}
Let $F$ be a continuous and strictly increasing distribution over the non-negative reals.
Then, the following properties hold:
\begin{enumerate}[itemsep=0pt,label=\normalfont(\roman*)]
    \item For every $n$ and $k$ with $n\ge k$, we have $\OPT_{n,k}(F) = \int_0^1 g_{n,k}(u) F^{-1}(1-u)\, \mathrm{d}u$.\label{prop:change_variable_opt}
    \item Suppose that $X$ is a random variable distributed according to $F$.
    Then, for every $x\ge 0$, it holds $\EE[X\mid X\geq x]\Pr[X \geq x] = \int_0^q F^{-1}(1-u)\, \mathrm{d}u$, where $q=\Pr[X\geq x]$.\label{prop:change_variable_simple_exp}
\end{enumerate}
\end{proposition}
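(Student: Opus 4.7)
The plan is to prove both parts by the same substitution $u = 1 - F(x)$, which is well-defined because $F$ is continuous and strictly increasing, so $F^{-1}(1-u)$ exists uniquely for every $u\in[0,1]$ and $\mathrm{d}u = -f(x)\,\mathrm{d}x$ where $f$ is the density of $F$ (if $F$ is absolutely continuous; otherwise one argues with distribution functions directly via integration by parts).

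For part \ref{prop:change_variable_opt}, I would start from the standard order-statistic density formula: for the ordering $X_{(1)}\le\cdots\le X_{(n)}$, the $j$-th order statistic has density $\frac{n!}{(j-1)!(n-j)!}F(x)^{j-1}(1-F(x))^{n-j}f(x)$. Therefore
\[
\EE[X_{(j)}]=\int_0^\infty x\cdot j\binom{n}{j}F(x)^{j-1}(1-F(x))^{n-j}f(x)\,\mathrm{d}x.
\]
Applying the substitution $u=1-F(x)$ (the limits flip from $x\in[0,\infty)$ to $u\in[1,0]$, absorbing the sign) yields
\[
\EE[X_{(j)}]=\int_0^1 F^{-1}(1-u)\cdot j\binom{n}{j}(1-u)^{j-1}u^{n-j}\,\mathrm{d}u.
\]
Summing over $j=n-k+1,\ldots,n$ and exchanging sum and integral gives the integrand $F^{-1}(1-u)\,g_{n,k}(u)$, which is exactly the claim.

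For part \ref{prop:change_variable_simple_exp}, I would rewrite $\EE[X\mid X\ge x]\,\Pr[X\ge x]=\EE[X\cdot\mathbf{1}_{X\ge x}]=\int_x^\infty y\,f(y)\,\mathrm{d}y$, and then apply the same substitution $u=1-F(y)$. The lower limit $y=x$ becomes $u=1-F(x)=q$ and $y=\infty$ becomes $u=0$, producing
\[
\int_x^\infty y\,f(y)\,\mathrm{d}y=\int_0^q F^{-1}(1-u)\,\mathrm{d}u,
\]
as desired.

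Neither step involves a real obstacle; the only care needed is with the bounds of integration and the sign of $\mathrm{d}u$, both of which are handled uniformly by the fact that $F$ is continuous and strictly increasing (so $u\mapsto F^{-1}(1-u)$ is a well-defined bijection between $(0,1)$ and the support of $F$). If one wishes to avoid assuming a density, the same identities follow by integration by parts using only the distribution function $F$, but given the paper's setup in Section~2, invoking continuity of $F$ is sufficient.
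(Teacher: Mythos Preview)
Your proof is correct and essentially identical to the paper's: both parts are obtained via the substitution $u=1-F(x)$ (equivalently $x=F^{-1}(1-u)$), invoking the standard order-statistic density for part~(i) and the identity $\EE[X\mid X\ge x]\Pr[X\ge x]=\int_x^\infty y\,f(y)\,\mathrm{d}y$ for part~(ii). The only cosmetic difference is direction: the paper starts from the $u$-integral and changes variables to $x$, whereas you start from the $x$-integral and change to $u$.
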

We use the following two lemmas to prove Theorem \ref{thm:new-LP}.
\begin{lemma}\label{lem:LP-exact-1}
    Let $F$ be a continuous and strictly increasing distribution over the non-negative reals such that $\OPT_{n,k}(F)=1$, and let $h(u)=F^{-1}(1-u)$ for every $u\in [0,1]$.
    Then, $(A(F),h)$ is feasible for \ref{form:LP_dual}, where $A(F)=(A_{t,\ell}(F))_{t,\ell}$ is defined according to \eqref{eq:DP_recursion}-\eqref{eq:DP_init_condi_2}.
\end{lemma}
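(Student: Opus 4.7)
The plan is to verify each of the constraints \eqref{const:dynamic_constr_dual}--\eqref{const:positive-f-dual} in turn, with the main content being the dynamic-programming constraint \eqref{const:dynamic_constr_dual}, which will follow directly from the DP recursion \eqref{eq:DP_recursion} after the change of variables $q=\Pr[X\ge x]$ provided by Proposition~\ref{prop:useful_LP}\ref{prop:change_variable_simple_exp}.

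First, I would check the easy constraints. Constraint \eqref{const:nonincreasing_dual} holds since $F$ is strictly increasing, so $F^{-1}$ is strictly increasing and hence $f(u)=F^{-1}(1-u)$ is (strictly) non-increasing in $u$. Constraint \eqref{const:positive-f-dual} holds because $F$ is supported on $\RR_+$ (so $f\ge 0$) and $F^{-1}$ is continuous on $(0,1)$ by the strict monotonicity and continuity of $F$. Constraint \eqref{const:d-positive-dual} holds because the policy that never selects any value yields $0$, and the optimal policy does at least as well, so $A_{t,\ell}(F)\ge 0$. Constraint \eqref{const:max_value_const_dual} follows immediately from Proposition~\ref{prop:useful_LP}\ref{prop:change_variable_opt}: we have $\int_0^1 g_{n,k}(u)f(u)\,\mathrm{d}u=\int_0^1 g_{n,k}(u)F^{-1}(1-u)\,\mathrm{d}u=\OPT_{n,k}(F)=1$.

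The key step is constraint \eqref{const:dynamic_constr_dual}. Fix $t\in[n]$, $\ell\in[k]$, and $q\in[0,1]$. Because $F$ is continuous and strictly increasing, the survival map $x\mapsto \Pr[X\ge x]$ is a bijection between $[0,\infty)$ (modulo a tail) and $[0,1]$, so there exists $x\ge 0$ with $\Pr[X\ge x]=q$. By Proposition~\ref{prop:useful_LP}\ref{prop:change_variable_simple_exp},
\begin{equation*}
\EE[X\mid X\ge x]\,\Pr[X\ge x]=\int_0^q F^{-1}(1-u)\,\mathrm{d}u=\int_0^q f(u)\,\mathrm{d}u.
\end{equation*}
Substituting into the DP recursion \eqref{eq:DP_recursion}, the expression inside the max becomes $\int_0^q f(u)\,\mathrm{d}u+q\,A_{t+1,\ell-1}(F)+(1-q)\,A_{t+1,\ell}(F)$. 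Taking the maximum over $x\ge 0$ (equivalently, over $q\in[0,1]$) gives $A_{t,\ell}(F)$, and in particular for every individual $q\in[0,1]$ we obtain
\begin{equation*}
A_{t,\ell}(F)\ge \int_0^q f(u)\,\mathrm{d}u+q\,A_{t+1,\ell-1}(F)+(1-q)\,A_{t+1,\ell}(F),
\end{equation*}
which is exactly \eqref{const:dynamic_constr_dual} with $d_{t,\ell}=A_{t,\ell}(F)$. The boundary conditions \eqref{eq:DP_init_condi_2} give $d_{n+1,\ell}=d_{t,0}=0$, which are consistent with \eqref{const:d-positive-dual}.

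The mild obstacle is ensuring that a feasible threshold $x\ge 0$ realizing an arbitrary $q\in[0,1]$ actually exists, so that the DP max over thresholds dominates the LP supremum over quantiles. This is precisely what the continuity and strict monotonicity hypotheses on $F$ supply; the boundary values $q\in\{0,1\}$ correspond to the degenerate ``always reject'' and ``always accept'' actions and are covered by passing to the supremum. Putting all constraints together shows that $(A(F),f)$ is feasible for \ref{form:LP_dual}, completing the proof.
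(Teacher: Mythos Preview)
Your proof is correct and follows essentially the same approach as the paper's: verify the easy constraints \eqref{const:nonincreasing_dual}--\eqref{const:positive-f-dual} directly from the properties of $F$, obtain \eqref{const:max_value_const_dual} from Proposition~\ref{prop:useful_LP}\ref{prop:change_variable_opt}, and deduce \eqref{const:dynamic_constr_dual} from the DP recursion \eqref{eq:DP_recursion} via the quantile change of variables in Proposition~\ref{prop:useful_LP}\ref{prop:change_variable_simple_exp}. Your write-up is slightly more explicit than the paper's (e.g., you justify $A_{t,\ell}(F)\ge 0$ and discuss the boundary quantiles), but the argument is the same.
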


\begin{proof}
    By construction, we have $h\geq 0$ and $h$ is non-increasing, therefore constraints \eqref{const:nonincreasing_dual}-\eqref{const:positive-f-dual} are satisfied by $(A(F),h)$. 
    Furthermore, we have
    $$\int_0^1 g_{n,k}(u) h(u)\, \mathrm{d}u=\int_0^1 g_{n,k}(u) F^{-1}(1-u)\, \mathrm{d}u = \OPT_{n,k}(F) = 1,$$
    where the second equality holds by Proposition~\ref{prop:useful_LP}\ref{prop:change_variable_opt}.
    Then, constraint~\eqref{const:max_value_const_dual} is satisfied by $(A(F),h)$.
    Let $q\in [0,1]$, and $x\geq 0$ such that $q=\Pr[X\geq x]$. Then, for every $t\in [n]$ and every $\ell\in [k]$, we have
    \begin{align*}
        A_{t,\ell}(F) & \geq (\EE[X\mid X\geq x] + A_{t+1,\ell-1}(F))\Pr[X\geq x] + A_{t+1,\ell}(F)\Pr[X<x] \\
        & = \int_0^q h(u)\, \mathrm{d}u + q A_{t+1,\ell-1}(F)+ (1-q)A_{t+1,\ell}(F),
    \end{align*}
    where in the first inequality we used condition \eqref{eq:DP_recursion}, while in the second inequality, we used that $q=\Pr[X\geq x]$ and Proposition \ref{prop:useful_LP}\ref{prop:change_variable_simple_exp}. 
    Then, $(A(F),h)$ satisfies constraint \eqref{const:dynamic_constr_dual}, and we conclude that $(A(F),h)$ is feasible for \ref{form:LP_dual}.
\end{proof}

\begin{lemma}\label{lemm:LP-exact-2}
    Let $(d,h)$ be any feasible solution for \ref{form:LP_dual}.
    Then, there exists a probability distribution $G$, such that $d_{t,\ell}\geq A_{t,\ell}(G)$ for every $t\in [n]$ and $\ell\in [k]$.
\end{lemma}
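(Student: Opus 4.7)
The plan is to produce $G$ as the distribution whose quantile function is $f$, and then to verify $d_{t,\ell} \ge A_{t,\ell}(G)$ by backward induction on $t$, matching the DP recursion \eqref{eq:DP_recursion} with the LP constraint \eqref{const:dynamic_constr_dual} via the change-of-variable identity from Proposition \ref{prop:useful_LP}\ref{prop:change_variable_simple_exp}.

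Concretely, I would let $G$ be the law of $X = f(1-U)$ with $U \sim \mathrm{Unif}[0,1]$; since $f$ is non-negative, continuous in $(0,1)$, and non-increasing, this is a well-defined probability distribution on $\RR_+$. For the induction, the base case $t = n+1$ and the secondary boundary $\ell = 0$ follow immediately from $A_{n+1,\ell}(G) = A_{t,0}(G) = 0 \le d_{t,\ell}$, where the non-negativity of $d$ is \eqref{const:d-positive-dual}. For the inductive step, fix $t \in [n]$ and $\ell \in [k]$ and consider any threshold $x \ge 0$; set $q = \Pr_{X \sim G}[X \ge x] \in [0,1]$. The same change of variables used in Proposition \ref{prop:useful_LP}\ref{prop:change_variable_simple_exp}, applied to the monotone $f$, shows that the bracketed term in \eqref{eq:DP_recursion} at the threshold $x$ equals
\[
\int_0^q f(u)\, \mathrm{d}u + q\, A_{t+1,\ell-1}(G) + (1-q)\, A_{t+1,\ell}(G).
\]
The induction hypothesis at time $t+1$ bounds this above by $\int_0^q f(u)\, \mathrm{d}u + q\, d_{t+1,\ell-1} + (1-q)\, d_{t+1,\ell}$, which by constraint \eqref{const:dynamic_constr_dual} is at most $d_{t,\ell}$. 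Taking the maximum over $x \ge 0$ in \eqref{eq:DP_recursion} then yields $A_{t,\ell}(G) \le d_{t,\ell}$, which closes the induction.

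The main subtlety in the plan is that $G$ need not be continuous when $f$ is flat on some subinterval of $[0,1]$, so the survival function $x \mapsto \Pr[X \ge x]$ does not necessarily hit every $q \in [0,1]$. This is not actually an obstacle for the induction above, since \eqref{const:dynamic_constr_dual} holds for \emph{every} $q \in [0,1]$ while the DP only needs to evaluate it at the $q$'s realized by admissible thresholds. If one prefers to invoke Proposition \ref{prop:useful_LP} verbatim, which presumes a continuous and strictly increasing distribution, a clean workaround is to replace $f$ by the strictly decreasing perturbation $f_\varepsilon(u) = f(u) + \varepsilon(1-u)$, run the argument for the associated continuous distribution $G_\varepsilon$ incurring an additive loss of order $n\varepsilon$, and then let $\varepsilon \downarrow 0$ using continuity of the integrals and of the finite-horizon DP values.
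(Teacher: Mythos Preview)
Your proposal is correct and follows essentially the same approach as the paper: define $G$ as the law of $f(1-U)$ for $U$ uniform on $[0,1]$, then verify $d_{t,\ell}\ge A_{t,\ell}(G)$ by backward induction on $t$, matching the DP recursion against constraint \eqref{const:dynamic_constr_dual}. The only cosmetic difference is that the paper first establishes $G^{-1}(1-u)\le f(u)$ and then invokes Proposition~\ref{prop:useful_LP}\ref{prop:change_variable_simple_exp}, whereas you compute $\EE[X\mathbf{1}_{X\ge x}]=\int_0^q f(u)\,\mathrm{d}u$ directly from $X=f(1-U)$; your route is arguably cleaner since it sidesteps the continuity issue you flag without needing the perturbation workaround.
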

\begin{proof}
    Given a feasible solution $(d,h)$ for \ref{form:LP_dual}, 
    consider the random variable $h(1-Q)$, where $Q$ is a uniform random variable over the interval $[0,1]$, and let $G$ be the probability distribution of $h(1-Q)$. 
Then, $G$ is continuous and non-decreasing.
Since $h$ is a non-increasing function, we have
$\Pr[h(1-Q) \leq h(u)] \geq \Pr[1-Q \geq u] = \Pr[1-u \geq Q] = 1-u,$ and therefore $G^{-1}(1-u)\le h(u)$.
    We prove the inequalities in the lemma statement for $d$ and the probability distribution $G$ via backward induction in $t\in \{1,\ldots,n+1\}$. 
    For $t=n+1$ we have $d_{n+1,\ell}=0=A_{n+1,\ell}(G)$ for any $\ell\in [k]$. 
    Assume the result holds true for $\{t+1,\ldots,n+1\}$.
    If $\ell=0$, we have $d_{t,\ell}=0=A_{t,\ell}(G)$, and for $\ell\in [k]$, we have
\begin{align*}
    d_{t,\ell} & \geq \sup_{q\in [0,1]} \left\{ \int_0^q h(u)\, \mathrm{d}u + q d_{t+1,\ell-1} + (1-q)d_{t+1,\ell}  \right\} \\
    & \geq \sup_{q\in [0,1]} \left\{  \int_0^q h(u) \, \mathrm{d}u + q A_{t+1,\ell-1}(G) + (1-q)A_{t+1,\ell}(G)  \right\}\\
    & \geq \sup_{q\in [0,1]} \left\{  \int_0^q G^{-1}(1-u) \, \mathrm{d}u + q A_{t+1,\ell-1}(G) + (1-q)A_{t+1,\ell}(G)  \right\}\\
    & = \max_{x\geq 0} \left\{  (\EE[X\mid X\geq x]  + A_{t+1,\ell-1}(G))\Pr[X\geq x] + A_{t+1,\ell}(G)\Pr[X<x]  \right\}\\
    & = A_{t,\ell}(G),
\end{align*}
where the first inequality holds since $(d,h)$ satisfies constraint~\eqref{const:dynamic_constr_dual}; the second inequality holds by induction; the third holds since $G^{-1}(1-u)\le h(u)$, and the first equality by Proposition \ref{prop:useful_LP}\ref{prop:change_variable_simple_exp}.
This concludes the proof of the lemma.
\end{proof}

\begin{proof}[Proof of Theorem \ref{thm:new-LP}]
By Lemma \ref{lem:LP-exact-1}, for every probability distribution $F$ we have that $(A(F),h)$ is feasible for \ref{form:LP_dual}, and its objective value is equal to $A_{1,k}(F)$.
This implies that $A_{1,k}(F)\ge v_{n,k}$ for every $F$, and therefore $\gamma_{n,k}\ge v_{n,k}$.
By Lemma \ref{lemm:LP-exact-2}, for every feasible solution $(d,h)$ in \ref{form:LP_dual} there exists a probability distribution $G$ such that $d_{1,k}\ge A_{1,k}(G)$, and therefore, the optimal value of \ref{form:LP_dual} is lower bounded by the infimum in \eqref{eq:gamma-def}, which is equal to $\gamma_{n,k}$.
We conclude that $v_{n,k}\ge \gamma_{n,k}$, and therefore both values are equal.
\end{proof}

\section{Lower Bound on the Approximation Ratio}\label{sec:LB}

In this section, we prove the following result.

\begin{theorem}\label{thm:comp-ratio}
For every $k\geq 1$, there exists $n_0\in \NN$, such that for every $n\ge n_0$ we have 
$$\gamma_{n,k}\geq \left(1- 24 k \frac{\ln(n)^2}{n}\right)\sum_{j=1}^k \theta_j^{\star},$$ where $\theta_1^{\star}, \ldots, \theta_k^\star$ are the values for which there exists a solution to the nonlinear system of differential equations \eqref{ode1}-\eqref{ode3}.
\end{theorem}

Before proving this theorem, we provide a warm-up for the case of $k=2$, i.e., when 2 selections are possible. The goal of Section \ref{subsec:warm-up} is to provide the main insights into the derivation of the nonlinear system~\eqref{ode1}-\eqref{ode3} and to sktech the main steps in the proof of Theorem~\ref{thm:comp-ratio}. In Section~\ref{subsec:proof_of_lower_bound_asympt} we provide the full detailed proof of Theorem~\ref{thm:comp-ratio} which holds for every $k$.

\subsection{Warm-Up: The case of $k=2$}\label{subsec:warm-up}

In this subsection, we sketch the deduction of the nonlinear system~\eqref{ode1}-\eqref{ode3}. We focus on the case $k=2$ and provide a weak dual formulation of~\ref{form:LP_dual}. From this weak dual, we can find a 
recursion that in the limit converges to a solution of~\eqref{ode1}-\eqref{ode3} with $k=2$. While this approach only gives us a lower bound on $\liminf_{n} \gamma_{n,2}$, we strengthen this asymptotic result by showing how to transform a solution of the system~\eqref{ode1}-\eqref{ode3} into a feasible solution to our dual LP by incurring in a slight loss when $n$ is large.

\paragraph{A Dual for $k=2$.} Consider the following infinite-dimensional linear program:
\begin{align}
    \sup \quad  v \hspace{1cm} &\tag*{\normalfont{\mbox{[D]$_{n,2}$}}}\label{form:dual-relaxation_k_2}\\
	\normalfont\text{s.t.} \quad  \int_0^1 \beta_{1,\ell}(q)\, \mathrm{d}q &\leq \mathbf{1}_{2}(\ell),  \hspace{.3cm} \text{for all }\ell\in [2],\label{const:constr_1_P_n_2-relax} \\
	\int_0^1 \beta_{t+1,2}(q) \, \mathrm{d}q&\leq \int_0^1 (1-q) \beta_{t,2}(q) \, \mathrm{d}q,\hspace{.3cm}   \text{for all } t\in [n-1] ,\label{const:constr_t_P_n_2-relax}\\
	\int_0^1 \beta_{t+1,1}(q) \, \mathrm{d}q&\leq \int_0^1 (1-q) \beta_{t,1}(q) \, \mathrm{d}q+ \int_0^1 q \beta_{t,2}(q) \, \mathrm{d}q,   \hspace{.2cm}\text{for all } t\in [n-1],\label{const:constr_t_P_n_22-relax}\\
	 v g_{n,2}(u)&\leq \sum_{t=1}^n \int_u^1 \beta_{t,1}(q) + \beta_{t,2}(q) \, \mathrm{d}q,  \text{ for }u \in [0,1]  , \label{const:P_n_approx-relax_2} \\
	\beta_{t,\ell}(q)&\ge 0\hspace{.3cm} \text{ for all }q\in [0,1],t\in [n]\text{ and }\ell\in [2],\label{const:P_n_initial-relax_2}
\end{align}
where $\mathbf{1}_2(\ell)\in \{0,1\}$ and $\mathbf{1}_2(\ell) =1$ if and only if $\ell=2$. This linear program can be interpreted as follows. Fix an algorithm that makes decisions based on quantiles. Now, the variables $\beta_{t,\ell}(q)$ can be interpreted as the probability densities of the events that \emph{the algorithm chooses quantile $q$ for time $t$ when $\ell$ items remain to be chosen, observes the $t$-th value $x_t$, and selects it if $x_t \geq F^{-1}(1-q)$}. Variable $v\geq 0$ captures the approximation ratio of such algorithm. Constraints~\eqref{const:constr_1_P_n_2-relax}-\eqref{const:constr_t_P_n_22-relax} capture the valid transitions from time $t$ to $t+1$. Specifically, for the algorithm to observe a value at time $t$ when $\ell$ items can still be chosen, it must have observed a value at time $t-1$ under one of the following conditions: Either (1) there are $\ell$ items that can be chosen at $t-1$ but the algorithm did not select the $t-1$ observed value; or (2) there are $\ell+1$ items that can be chosen at $t-1$ and the algorithm selected the $t-1$ observed value. Constraint~\eqref{const:P_n_approx-relax_2} relates the offline density of least two out of $n$ values being in the top $u$ quantiles with the density that the algorithm selects values in the same quantile. Among all possible quantiles $u\in [0,1]$, the largest $v$ such that the ratio of these densities is at least $v$ corresponds to a lower bound on the approximation ratio of the algorithm. 

In Lemma~\ref{lem:weak_duality}, we show that~\ref{form:dual-relaxation_k_2} is a weak dual to the exact formulation~\ref{form:LP_dual}, for $k=2$. That is, the optimal value $w_{n,2}$ of~\ref{form:dual-relaxation_k_2} is at most $\gamma_{n,2}$, which is the optimal approximation ratio for $2$ selections. Thus, finding solutions to the weak dual provide a mechanism to give provable lower bounds on $\gamma_{n,2}$. We remark that~\ref{form:dual-relaxation_k_2} is not necessarily a strong dual to~\ref{form:LP_dual}. Variables $\beta_{t,\ell}$ correspond to constraints~\eqref{const:dynamic_constr_dual} and variable $v$ corresponds to constraint~\eqref{const:max_value_const_dual}; however, in~\ref{form:dual-relaxation_k_2} there are no dual variables for constraints~\eqref{const:nonincreasing_dual}. 

\paragraph{A Feasible Solution and Its Limit.} We now construct a particular feasible solution to the weak dual~\ref{form:dual-relaxation_k_2} and show that it produces a set of points that converge to a solution of~\eqref{ode1}-\eqref{ode3} for $k=2$. Inspired by the quantile-based solution of~\cite{correa2021posted} and the LP characterization by~\cite{perez2022iid}, we propose the following solution to~\ref{form:dual-relaxation_k_2}:
\begin{align}
\beta_{t,2}(q) = \theta_2 \cdot (-g'_{n,2}(q)) \, \mathbf{1}_{(\varepsilon_{t-1},\varepsilon_t)}(q), \qquad \beta_{t,1}(q) = \theta_1 \cdot (-g'_{n,2}(q)) \, \mathbf{1}_{(\mu_{t-1},\mu_t)}(q), \label{eqs:beta_definition}
\end{align}
for $t=1,\ldots,n$, where $\theta_1,\theta_2\geq 0$ and $0=\varepsilon_0<\varepsilon_1< \cdots <\varepsilon_n$ and $0=\mu_0 = \mu_1 < \mu_2 < \cdots < \mu_n$. Note that if $\varepsilon_n=\mu_n=1$, then $(\beta,\theta_1+\theta_2)$ is a feasible solution to~\ref{form:LP_dual}; hence, $\theta_1+\theta_2\leq \gamma_{n,2}$. Now, assuming that constraints~\eqref{const:constr_1_P_n_2-relax}-\eqref{const:constr_t_P_n_22-relax} are tightened by $\beta$, we can deduce the following implicit recursions for $\varepsilon_t$ and $\mu_t$:
\begin{align}
g_{n,2}(\varepsilon_{t+1}) - g_{n,2}(\varepsilon_t) & = -\frac{1}{\theta_2} - \frac{2}{n+1}\left( g_{n+1,3}(\varepsilon_t) -  (n+1) \right) & t \in \{0\}\cup [n], \label{eq:recursion_for_eps}\\
g_{n,2} (\mu_{t+1}) - g_{n,2}(\mu_t) & = \frac{\theta_2}{\theta_1} \frac{2}{n+1}\left( g_{n+1,3}(\epsilon_t) - (n+1) \right) & t\in \{0\}\cup [n], \label{eq:recursion_for_mu} \\
& \quad  - \frac{2}{n+1}\left( g_{n+1,3}(\mu_t) - (n+1) \right) \nonumber
\end{align}
In other words, by fixing $\theta_2 >0$, we can find a sequence of increasing $\varepsilon_t$'s. Likewise, by fixing $\theta_1,\theta_2>0$, we can find a sequence of increasing $\mu_t$'s. Our goal is to find a pair $(\theta_1^\star,\theta_2^\star)$ such that $\varepsilon_n(\theta_2)=1$ and $\mu_n(\theta_1,\theta_2)=1$.
A simple inductive argument shows that $\varepsilon_t$ is decreasing in $\theta_2$. Indeed, by differentiating~\eqref{eq:recursion_for_eps} in $\theta_2$, we obtain $g_{n,2}'(\varepsilon_{t+1}(\theta_2))\varepsilon_{t+1}'(\theta_2) = {1}/{\theta_2^2} + \varepsilon_{t}'(\theta_2) (1-\varepsilon_{t}(\theta_2))g_{n,2}'(\varepsilon_t(\theta_2))$.
From here, and also using~\eqref{eq:recursion_for_eps}, we can show that: (1) $\varepsilon_t$ blows up when $\theta_1$ goes to $0$ (2) $\varepsilon_t$ tends to $0$ when $\theta_2$ goes to $+\infty$ and (2) there this a unique $\theta_2^\star > 0$ such that $\varepsilon_n(\theta_2^\star)=1$. 
Analogously, having already found and fixed $\theta_2^\star>0 $, we can show that there is a unique $\theta_1^\star>0$ such that $\mu_n(\theta_1^\star,\theta_2^\star)=1$. This shows that our solution $(\beta,\theta_1^\star+\theta_2^\star)$ is feasible to~\ref{form:dual-relaxation_k_2}. 

The pair $(\theta_1^\star, \theta_2^\star)$ depends non-trivially on $n$. Thus, to tackle this dependency, we perform a limit analysis on~\eqref{eq:recursion_for_eps} and~\eqref{eq:recursion_for_mu}. For this, we first perform a change of variable $y_{1,t}=e^{- n \mu_t}$ and $y_{2,t}=e^{- n \varepsilon_t}$ for all $t$. We see that for all $\ell\in [2]$, $y_{\ell,t} \in [0,1]$, $y_{\ell,1}=1$, $y_{\ell,n}=e^{-n}$ for all $\ell\in [2]$. Now, consider the function $\hat{y}_\ell : [0,1]\to [0,1]$ such that, $\hat{y}_\ell(0)=1$, $\hat{y}_\ell(t/n)= y_{\ell,t}$, for all $t\in \{0\}\cup [n]$, and $\hat{y}_\ell$ is linear between in $[(t-1)/n ,t/n]$ for $t\in [n]$. Now using~\eqref{eq:recursion_for_eps} and~\eqref{eq:recursion_for_mu}, the approximation $g_{n,k}(u)\approx  n \Gamma_k(n u)/(k-1)!$,\footnote{in Proposition~\ref{prop:g-n-properties} we prove formally this approximation for the derivative of $g_{n,k}$.} and assuming that $(\theta_1^\star,\theta_2^\star)$ converges when $n\to \infty$, we obtain that $(\hat{y}_1,\hat{y}_2)$ converges to a solution of the nonlinear system~\eqref{ode1}-\eqref{ode3} with $k=2$. A formal proof of the existence of a solution to the nonlinear system~\eqref{ode1}-\eqref{ode3} for general $k$ appears in Lemma~\ref{prop:nls}.

\paragraph{From the Limit to a Feasible Solution.} The limit analysis of the solution $\beta$ only bounds $\liminf_n \gamma_{n,2}$. To provide an analysis for finite $n$ as in Theorem~\ref{thm:comp-ratio}, we construct an explicit solution to~\ref{form:dual-relaxation_k_2} from the solution $(y_1,y_2)$ of~\eqref{ode1}-\eqref{ode3}. While we are tempted to take $\varepsilon_t=-\ln (y_2(t/n))/n$ and $\mu_t=-\ln (y_1(t/n))/n$ and defining $\beta$ as in~\eqref{eqs:beta_definition}, this poses two nontrivial challenges. Firstly, $\varepsilon_n$ and $\mu_n$ are larger than $1$ since $y_1(1)=y_2(1)=0$. Secondly, $\mu_1$ is positive; yet for $\beta$ to be feasible to~\ref{form:dual-relaxation_k_2}, $\mu_1$ must be $0$, due to constraint~\eqref{const:constr_1_P_n_2-relax}.

To address the first challenge, we take $\bar{n}\leq n$, and define $\varepsilon_t=-\ln (y_2(t/\bar{n}))/\bar{n}$ for $t\in \{0\}\cup [\bar{n}-1]$ and $\varepsilon_{\bar{n}}=1$ for some $\bar{n}\leq n$. Similarly, $\mu_0=\mu_1=0$, $\mu_t=-\ln (y_1(t/\bar{n}))/\bar{n}$, for $t\in [\bar{n}-1]\setminus\{1\}$, and $\mu_{\bar{n}}=1$. The value $\bar{n}$ appears when approximating $g_{n,k}'(u)$ with $\bar{n}\Gamma_k({\bar{n}} u)'/(k-1)!$ (see Proposition~\ref{prop:g-n-properties}); for $k=2$ we have $\bar{n}=n-3$. For $n$ large enough, we show that $-\ln( y_{\ell}(1-1/\bar{n}))/\bar{n} < 1$ for $\ell\in [2]$; hence, our new sequence of $\varepsilon$'s and $\mu$'s is well-defined.

For the second challenge, on a first read, it might seem that defining $\mu_1=0$ solves the problem. However, this is not the case because, in constraint~\eqref{const:constr_t_P_n_22-relax} for $t=2$, the left-hand side could be larger than the right-hand side as our choice of $\varepsilon_t$ and $\mu_t$ mimics an Euler approximation of the nonlinear system~\eqref{ode1}-\eqref{ode3}; hence, $\mu_2$ depends on $\mu_1$ in the Euler approximation; however, we defined it to be $0$. To address this, we add enough mass to $\beta_{1,2}$ so constraint~\eqref{const:constr_t_P_n_22-relax} approximately holds. A complete description of the solution appears in~\eqref{eq:description_of_solution}.

In general, these two fixes introduce two sources of multiplicative loss in the objective value $\theta_1^\star+\theta_2^\star$. The first is via $\bar{n}$, and the second is via the mass addition to $\beta_{1,2}$. We show that these multiplicative losses are in the order of {$1-c \ln (n)^2/n$} for some constant $c$, and they vanish as $n$ grows.

\subsection{Main Analysis}\label{subsec:proof_of_lower_bound_asympt}

We now provide the proof of Theorem~\ref{thm:comp-ratio}. Our proof is organized into three main steps. In the first step, we introduce a \emph{maximization} infinite-dimensional linear program that we call \ref{form:dual-relaxation}, and we prove that weak duality holds for the pair \ref{form:LP_dual} and \ref{form:dual-relaxation}.
Namely, the optimal value of \ref{form:dual-relaxation} provides a lower bound on the optimal value of~\ref{form:LP_dual}. The program~\ref{form:dual-relaxation} can be formally deduced from~\ref{form:LP_dual} by dropping Constraint~\eqref{const:nonincreasing_dual}; however, the weak duality still requires a proof as we are dealing with infinite-dimensional programs.

In the second step, we introduce a second maximization infinite-dimensional linear program parametrized by a value { $\bar{n}\leq n$}, namely~\ref{form:P_nk_restricted_n}. 
This program is akin to~\ref{form:dual-relaxation}, but is described by a set of constraints that become more handy when analyzing the nonlinear system. Furthermore,~\ref{form:P_nk_restricted_n} restricts the time horizon until { $\bar{n}$}. 
{ We show that the optimal value of~\ref{form:P_nk_restricted_n} provides a lower bound on the program~\ref{form:dual-relaxation}, and therefore, it gives a lower bound on the optimal value of~\ref{form:LP_dual} as well.}

In the third step, we build an explicit feasible solution to the problem~\ref{form:P_nk_restricted_n} starting from a solution of the nonlinear system of differential equations \eqref{ode1}-\eqref{ode3}. Using the valid bounds found in the previous two steps, we can provide a lower bound on the value $\gamma_{n,k}$ for $n$ large enough. In particular, we show, as $n$ grows, that the sequence of lower bounds provides a lower bound on the optimal asymptotic approximation factor.\\

\noindent{\bf First step: Weak duality.}
For every $\ell\in [k]$, let $\mathbf{1}_{k}(\ell)=1$ if $\ell=k$ and $\mathbf{1}_{k}(\ell)=0$ for $\ell\neq k$.
Consider the following infinite-dimensional linear program:
\begin{align}
    \sup \quad  v \hspace{1cm} &\tag*{\normalfont{\mbox{[D]$_{n,k}$}}}\label{form:dual-relaxation}\\
	\normalfont\text{s.t.} \quad  \int_0^1 \beta_{1,\ell}(q)\, \mathrm{d}q &\leq \mathbf{1}_{k}(\ell),  \hspace{.3cm} \text{for all }\ell\in [k],\label{const:constr_1_P_n_k-relax} \\
	\int_0^1 \beta_{t+1,k}(q) \, \mathrm{d}q&\leq \int_0^1 (1-q) \beta_{t,k}(q) \, \mathrm{d}q,\hspace{.3cm}   \text{for all } t\in [n-1] ,\label{const:constr_t_P_n_k-relax}\\
	\int_0^1 \beta_{t+1,\ell}(q) \, \mathrm{d}q&\leq \int_0^1 (1-q) \beta_{t,\ell}(q) \, \mathrm{d}q+ \int_0^1 q \beta_{t,\ell+1}(q) \, \mathrm{d}q,   \hspace{.2cm}\text{for all } t\in [n-1], \ell\in[k-1],\label{const:constr_t_P_n_2-relax}\\
	 v g_{n,k}(u)&\leq \sum_{t=1}^n\sum_{\ell=1}^k \int_u^1 \beta_{t,\ell}(q) \, \mathrm{d}q,  \text{ for }u \in [0,1]  , \label{const:P_n_approx-relax} \\
	\beta_{t,\ell}(q)&\ge 0\hspace{.3cm} \text{ for all }q\in [0,1],t\in [n]\text{ and }\ell\in [k].\label{const:P_n_initial-relax}
\end{align}
The variables $\beta_{t,\ell}(q)$ represent the probability density of an optimal algorithm choosing quantile $q$ at time $t$ when $\ell$ items can still be chosen, and variable $v$ captures the approximation factor of the policy. (See also the interpretation of the linear program in subsection~\ref{subsec:warm-up}.) We denote by $w_{n,k}$ the optimal value of \ref{form:dual-relaxation}.
The following lemma shows that weak duality holds for the pair of infinite-dimensional programs \ref{form:LP_dual} and \ref{form:dual-relaxation}.


\begin{lemma}\label{lem:weak_duality}
For every $n\geq 1$ and every $k\in \{1,\ldots,n\}$, we have $v_{n,k}\ge w_{n,k}$.
\end{lemma}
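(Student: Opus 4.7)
The plan is a standard LP weak-duality argument: for any feasible $(d,f)$ of \ref{form:LP_dual} and any feasible $(v,\beta)$ of \ref{form:dual-relaxation} I would show directly that $d_{1,k}\ge v$; taking infimum over the primal and supremum over the dual then yields $v_{n,k}\ge w_{n,k}$. Without loss of generality I may assume $v\ge 0$, since otherwise the inequality $d_{1,k}\ge 0\ge v$ is automatic.

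First I would multiply each dynamic constraint \eqref{const:dynamic_constr_dual} by the nonnegative density $\beta_{t,\ell}(q)$, integrate over $q\in[0,1]$, and sum over $(t,\ell)\in[n]\times[k]$. In the term involving $\int_0^q f(u)\,du$, Fubini rewrites it as $\int_0^1 f(u)\sum_{t,\ell}\int_u^1\beta_{t,\ell}(q)\,dq\,du$. Using the pointwise-a.e.\ approximation constraint \eqref{const:P_n_approx-relax} together with $f\ge 0$, this is bounded below by $v\int_0^1 g_{n,k}(u)f(u)\,du$; combining with the primal mass constraint \eqref{const:max_value_const_dual} and $v\ge 0$ gives a lower bound of $v$ on this entire term.

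For the remaining right-hand side terms I would reindex via $s=t+1$ so that every variable $d_{s,\ell}$ appears with an explicit coefficient of the form $\int_0^1 q\,\beta_{s-1,\ell+1}(q)\,dq+\int_0^1 (1-q)\,\beta_{s-1,\ell}(q)\,dq$, with the convention that a missing $\beta_{s-1,k+1}$ at $\ell=k$ is zero and only the first summand appears at $\ell=0$, to be compared with the left-hand side coefficient $\int_0^1\beta_{s,\ell}(q)\,dq$. The dual recursions \eqref{const:constr_t_P_n_k-relax} and \eqref{const:constr_t_P_n_2-relax} are exactly the statements that the LHS coefficient is at most the RHS one for all $s\in[2,n]$, so after moving everything to one side the coefficient of each $d_{s,\ell}$ with $s\in[2,n]$ is nonpositive. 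The boundary variables $d_{n+1,\ell}$ and $d_{s,0}$ only appear on the subtracted side, so they too end up with nonpositive coefficients; since all $d$-values are nonnegative by \eqref{const:d-positive-dual}, every contribution from $(s,\ell)\ne(1,k)$ is at most zero.

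The only remaining term comes from $s=1$: by \eqref{const:constr_1_P_n_k-relax} combined with $\beta\ge 0$ from \eqref{const:P_n_initial-relax}, I get $\int_0^1\beta_{1,\ell}(q)\,dq=0$ for $\ell\ne k$, and $\int_0^1\beta_{1,k}(q)\,dq\le 1$. Collecting everything yields $v\le d_{1,k}\int_0^1\beta_{1,k}(q)\,dq\le d_{1,k}$, as required. The main obstacle I anticipate is the careful bookkeeping in the reindexing—in particular, verifying that the boundary coefficients at $s=n+1$, at $\ell=0$, and at the missing-$\beta$ slot when $\ell=k$ all come out nonpositive, so that the unconstrained nonnegative boundary variables $d_{n+1,\ell}$ and $d_{s,0}$ cannot spoil the final bound.
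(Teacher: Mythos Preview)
Your proposal is correct and follows essentially the same weak-duality argument as the paper: multiply the dynamic constraint \eqref{const:dynamic_constr_dual} by $\beta_{t,\ell}(q)$, integrate, sum, apply Fubini and \eqref{const:P_n_approx-relax}--\eqref{const:max_value_const_dual} to extract $v$, and then use the dual recursions \eqref{const:constr_1_P_n_k-relax}--\eqref{const:constr_t_P_n_2-relax} to telescope the $d$-coefficients down to $d_{1,k}$. The only cosmetic differences are that the paper sets the boundary variables $d_{n+1,\ell}$ and $d_{t,0}$ to zero up front (rather than tracking their nonpositive coefficients as you do) and splits the telescoping into the cases $\ell=k$ and $\ell<k$ instead of doing a unified coefficient comparison; your explicit handling of the case $v<0$ is a small point the paper leaves implicit.
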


\begin{proof}
Consider a feasible solution $(d,h)$ for \ref{form:LP_dual} and a feasible solution $(\beta,v)$ for \ref{form:dual-relaxation}.
Since~\ref{form:LP_dual} is a minimization problem, we can assume that $d_{n+1,\ell}=0$ for every $\ell\in [k]$ and $d_{0,\ell}=0$ for every $t\in [n]$; if they are non-zero, we can easily make them zero without changing the objective value of $(d,h)$.
In what follows, we show that $v\le d_{1,k}$.
Since $(\beta,v)$ satisfies constraint \eqref{const:P_n_approx-relax} and $h(u)\ge 0$ by constraint \eqref{const:positive-f-dual}, we get 
\begin{align}
v \int_{0}^{1}g_{n,k}(u)h(u)\, \mathrm{d}u&\leq \int_{0}^{1}\sum_{t=1}^n\sum_{\ell=1}^k \left(\int_u^1 \beta_{t,\ell}(q) \, \mathrm{d}q\right)h(u)\, \mathrm{d}u\notag\\
&=\sum_{t=1}^n\sum_{\ell=1}^k\int_{0}^{1} \int_u^1 \beta_{t,\ell}(q) h(u)\, \mathrm{d}q\, \mathrm{d}u\notag\\
&=\sum_{t=1}^n\sum_{\ell=1}^k\int_{0}^1 \beta_{t,\ell}(q)\int_0^{q}  h(u)\, \mathrm{d}u\, \mathrm{d}q,\label{ineq:weak-duality-1}
\end{align}
where the first equality holds by exchanging the summation and the integrals, and the second equality holds by exchanging the integration order for $u$ and $q$. Then, inequality \eqref{ineq:weak-duality-1} together with constraint~\eqref{const:max_value_const_dual} imply that
\begin{equation}
v\le \sum_{t=1}^n\sum_{\ell=1}^k\int_{0}^1 \beta_{t,\ell}(q)\int_0^q  h(u)\, \mathrm{d}u\, \mathrm{d}q.\label{ineq:weak-duality-2}
\end{equation}
On the other hand, from constraint \eqref{const:dynamic_constr_dual}, for every $t\in [n]$ and every $\ell\in [k]$ we have
\begin{align}
&\int_{0}^1\beta_{t,\ell}(q)\int_0^q h(u) \, \mathrm{d}u\,\mathrm{d}q\notag\\ 
&\leq d_{t,\ell}\int_{0}^1\beta_{t,\ell}(q)\,\mathrm{d}q -d_{t+1,\ell-1}\int_{0}^1q\beta_{t,\ell}(q)\, \mathrm{d}q  - d_{t+1,\ell}\int_{0}^1(1-q)\beta_{t,\ell}(q)\, \mathrm{d}q,\label{ineq:weak-duality-3}
\end{align}
where we used that $\beta_{t,\ell}\ge 0$, and then we integrated over $q\in [0,1]$.
When $\ell=k$, note that
\begin{align}
&\sum_{t=1}^nd_{t,k}\int_{0}^1\beta_{t,k}(q)\,\mathrm{d}q-\sum_{t=1}^nd_{t+1,k}\int_{0}^1(1-q)\beta_{t,k}(q)\, \mathrm{d}q\notag\\
&=d_{1,k}\int_{0}^1\beta_{1,k}(q)\,\mathrm{d}q+\sum_{t=2}^nd_{t,k}\int_{0}^1\beta_{t,k}(q)\,\mathrm{d}q-\sum_{t=1}^nd_{t+1,k}\int_{0}^1(1-q)\beta_{t,k}(q)\, \mathrm{d}q\notag\\
&\le d_{1,k}+\sum_{t=1}^{n-1}d_{t+1,k}\int_{0}^1\beta_{t+1,k}(q)\,\mathrm{d}q-\sum_{t=1}^nd_{t+1,k}\int_{0}^1(1-q)\beta_{t,k}(q)\, \mathrm{d}q\notag\\
&\le d_{1,k}+\sum_{t=1}^{n-1}d_{t+1,k}\int_{0}^1(1-q)\beta_{t,k}(q)\,\mathrm{d}q-\sum_{t=1}^{n-1}d_{t+1,k}\int_{0}^1(1-q)\beta_{t,k}(q)\, \mathrm{d}q\le d_{1,k},\label{ineq:weak-duality-4}
\end{align}
where the first inequality holds by constraint \eqref{const:constr_1_P_n_k-relax} and by changing the index range in the first summation, and the second inequality holds by inequality \eqref{const:constr_t_P_n_k-relax} and the fact that $d_{n+1,k}=0$.
On the other hand, note that 
\begin{align}
 &\sum_{t=1}^n\sum_{\ell=1}^{k-1}d_{t,\ell}\int_{0}^1\beta_{t,\ell}(q)\,\mathrm{d}q - \sum_{t=1}^n\sum_{\ell=1}^{k}d_{t+1,\ell-1}\int_{0}^1q\beta_{t,\ell}(q)\, \mathrm{d}q  -  \sum_{t=1}^n\sum_{\ell=1}^{k-1}d_{t+1,\ell}\int_{0}^1(1-q)\beta_{t,\ell}(q)\, \mathrm{d}q\notag\\
 &=\sum_{t=1}^n\sum_{\ell=1}^{k-1}d_{t,\ell}\int_{0}^1\beta_{t,\ell}(q)\,\mathrm{d}q - \sum_{t=1}^n\sum_{\ell=1}^{k-1}d_{t+1,\ell}\int_{0}^1q\beta_{t,\ell+1}(q)\, \mathrm{d}q  -  \sum_{t=1}^n\sum_{\ell=1}^{k-1}d_{t+1,\ell}\int_{0}^1(1-q)\beta_{t,\ell}(q)\, \mathrm{d}q\notag\\
&=\sum_{t=1}^n\sum_{\ell=1}^{k-1}d_{t,\ell}\int_{0}^1\beta_{t,\ell}(q)\,\mathrm{d}q - \sum_{t=1}^{n-1}\sum_{\ell=1}^{k-1}d_{t+1,\ell}\left(\int_{0}^1q\beta_{t,\ell+1}(q)\, \mathrm{d}q  +\int_{0}^1(1-q)\beta_{t,\ell}(q)\, \mathrm{d}q\right)\notag\\
&\le \sum_{t=1}^n\sum_{\ell=1}^{k-1}d_{t,\ell}\int_{0}^1\beta_{t,\ell}(q)\,\mathrm{d}q - \sum_{t=1}^{n-1}\sum_{\ell=1}^{k-1}d_{t+1,\ell}\int_{0}^1\beta_{t+1,\ell}(q)\, \mathrm{d}q\notag\\
&=\sum_{\ell=1}^{k-1}d_{1,\ell}\int_{0}^1\beta_{1,\ell}(q)\, \mathrm{d}q\le \sum_{\ell=1}^{k-1}d_{1,\ell}\mathbf{1}_{k}(\ell)= 0,\label{ineq:weak-duality-5}
\end{align}
where the first equality holds by changing the index range of the second summation and $d_{t+1,0}=0$ for every $t\in [n]$, the second equality holds by factoring the summations and $d_{n+1,\ell}=0$ for every $\ell\in [k-1]$, the first inequality holds by inequality \eqref{const:constr_t_P_n_2-relax}, and the last inequality by constraint \eqref{const:constr_1_P_n_k-relax}.

Then, by summing over $t\in [n]$ and $\ell\in [k]$ in inequality \eqref{ineq:weak-duality-3}, we get 
\begin{align}
&\sum_{t=1}^n\sum_{\ell=1}^k\int_{0}^1\beta_{t,\ell}(q)\int_0^q h(u) \, \mathrm{d}u\,\mathrm{d}q\notag\\ 
&\leq \sum_{t=1}^n\sum_{\ell=1}^kd_{t,\ell}\int_{0}^1\beta_{t,\ell}(q)\,\mathrm{d}q -\sum_{t=1}^n\sum_{\ell=1}^kd_{t+1,\ell-1}\int_{0}^1q\beta_{t,\ell}(q)\, \mathrm{d}q  - \sum_{t=1}^n\sum_{\ell=1}^kd_{t+1,\ell}\int_{0}^1(1-q)\beta_{t,\ell}(q)\, \mathrm{d}q,\notag\\
&\le d_{1,k},\label{ineq:weak-duality-6}
\end{align}
where the second inequality comes from \eqref{ineq:weak-duality-4} and \eqref{ineq:weak-duality-5} together.
Finally, \eqref{ineq:weak-duality-2} and \eqref{ineq:weak-duality-6} imply that $v\le d_{1,k}$, which concludes the proof of the lemma.
\end{proof}

\noindent{\bf Second step: A truncated LP with a useful structure.} Consider the following infinite-dimensional linear program:
For $\overline{n}\leq n$, we consider the following LP
\begin{align}
 \sup \quad  v \hspace{2cm} &\tag*{\normalfont{\mbox{[D]$_{n,k}(\bar{n})$}}}\label{form:P_nk_restricted_n}\\
	\normalfont\text{s.t.} \quad  \int_0^1 \alpha_{t,k}(q)\, \mathrm{d}q + \int_0^1 \sum_{\tau < t} q \alpha_{\tau, k}(q)\, \mathrm{d}q &\leq 1,  \hspace{.3cm} \text{for all }t\in [\bar{n}],\label{const:constr_1_P_n_k_restricted_n} \\
	\int_0^1 \alpha_{t,\ell}(q) \, \mathrm{d}q + \int_0^1 \sum_{\tau< t} q \alpha_{\tau, \ell} (q)\, \mathrm{d}q &\leq \int_0^1 \sum_{\tau < t} q\alpha_{\tau,\ell+1}(q) \, \mathrm{d}q,\hspace{.3cm}   \text{for all } t\in [\bar{n}], \ell \in [k-1] ,\label{const:constr_t_P_n_k_restricted_n}\\
	 v g_{n,k}(u)&\leq \sum_{t=1}^{\barnk}\sum_{\ell=1}^k \int_u^1 \alpha_{t,\ell}(q) \, \mathrm{d}q,  \text{ for }u \in [0,1]  , \label{const:P_n_approx_restricted_n} \\
	\alpha_{t,\ell}(q)&\ge 0\hspace{.3cm} \text{ for all }q\in [0,1],t\in [\bar{n}]\text{ and }\ell\in [k],\label{const:P_n_initial_restricted_n}
\end{align}

We will prove that the optimal value of~\ref{form:P_nk_restricted_n} is a lower bound to the optimal value of~\ref{form:dual-relaxation}. 
The following technical proposition allows us to ensure that any feasible solution to~\ref{form:P_nk_restricted_n} induces a feasible solution to~\ref{form:dual-relaxation}. We present the proof of the proposition in Appendix~\ref{app:LB}.
\begin{proposition}\label{prop:restricted_LP_tightens_constraints}
    For every feasible solution $(\alpha,v)$ to~\ref{form:P_nk_restricted_n}, there is $(\alpha',v)$ feasible to~\ref{form:P_nk_restricted_n} for which all constraints~\eqref{const:constr_1_P_n_k_restricted_n} and~\eqref{const:constr_t_P_n_k_restricted_n} are tightened.
\end{proposition}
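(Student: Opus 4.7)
The plan is to exploit a structural feature of constraints \eqref{const:constr_1_P_n_k_restricted_n} and \eqref{const:constr_t_P_n_k_restricted_n}: the quantity $\alpha_{t,\ell}(q)$ appears without any $q$-factor only in the constraint indexed by $(t,\ell)$ itself, whereas in every other constraint at $(t',\ell)$ for $t' > t$ (on the left-hand side) and at $(t',\ell-1)$ for $t' > t$ (on the right-hand side, when $\ell \geq 2$), it always appears multiplied by $q$. So if I add to $\alpha_{t,\ell}$ a non-negative mass concentrated near $q = 0$, the added weight is fully counted in the LHS of $(t,\ell)$ but contributes arbitrarily little to any other constraint. Moreover, mass placed at $q = 0$ does not alter $\int_u^1 \alpha_{t,\ell}(q)\,\mathrm{d}q$ for any $u > 0$, so \eqref{const:P_n_approx_restricted_n} is unaffected, and non-negativity \eqref{const:P_n_initial_restricted_n} is obvious.

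Concretely, for each $(t,\ell)$ let $s_{t,\ell} \geq 0$ be the slack of the corresponding constraint under $(\alpha,v)$, and for small $\epsilon > 0$ set
\[
\alpha^{\epsilon}_{t,\ell}(q) := \alpha_{t,\ell}(q) + \frac{s_{t,\ell}}{\epsilon}\mathbf{1}_{[0,\epsilon]}(q).
\]
A direct calculation gives $\int_0^1 \alpha^{\epsilon}_{t,\ell} = \int_0^1 \alpha_{t,\ell} + s_{t,\ell}$ and $\int_0^1 q\,\alpha^{\epsilon}_{t,\ell} = \int_0^1 q\,\alpha_{t,\ell} + s_{t,\ell}\epsilon/2$. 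Substituting these into \eqref{const:constr_1_P_n_k_restricted_n} and \eqref{const:constr_t_P_n_k_restricted_n}, I would verify that the LHS at $(t,\ell)$ matches its RHS up to an error of order $\epsilon$, while every other constraint is perturbed only by $O\bigl(\epsilon \sum_{\tau < t'} s_{\tau,\ell'}\bigr)$; both errors vanish as $\epsilon \downarrow 0$. Passing to this limit---equivalently, replacing the bump by a Dirac mass $s_{t,\ell}\delta_0$---produces an $\alpha'$ that makes \eqref{const:constr_1_P_n_k_restricted_n} and \eqref{const:constr_t_P_n_k_restricted_n} tight, while $(\alpha',v)$ remains feasible for \ref{form:P_nk_restricted_n} with the same objective value.

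The main subtlety is the formal passage to the limit $\epsilon \downarrow 0$: the limiting object places a Dirac mass at $q = 0$, while the LP is notationally stated over functions $\alpha_{t,\ell}(q)$. I would resolve this either by interpreting the $\alpha_{t,\ell}$ as non-negative Borel measures on $[0,1]$ (under which every integral in \ref{form:P_nk_restricted_n} remains well-defined and the Dirac construction yields exact tightness directly), or by the following bookkeeping argument: process the pairs $(t,\ell)$ in lexicographic order in $t$ and at each step add a bump of slightly less mass than the current slack so as to exactly absorb the $O(\epsilon)$ leakage produced by previously modified pairs, producing a bona fide function $\alpha'$ attaining exact tightness.
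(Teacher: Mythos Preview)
Your primary argument---placing a point mass $s_{t,\ell}\delta_0$ on each $\alpha_{t,\ell}$---is correct and noticeably simpler than what the paper does. Since $\int_0^1 q\,\delta_0=0$ and $\int_u^1\delta_0=0$ for every $u>0$, this one-shot modification increments $\int_0^1\alpha_{t,\ell}$ by exactly the slack while leaving every other occurrence of $\alpha_{t,\ell}$ in \eqref{const:constr_1_P_n_k_restricted_n}--\eqref{const:P_n_approx_restricted_n} untouched, so all budget constraints become tight simultaneously and feasibility is preserved. The paper instead runs an extremal argument: it picks the \emph{largest} time $t'$ at which some constraint is slack, transfers mass from the functions $\alpha_{\tau,\cdot}$ with $\tau>t'$ into $\alpha_{t',\cdot}$, and then injects Dirac masses at $q=1$ to close any remaining gap, deriving a contradiction to the choice of $t'$. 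Because a Dirac at $q=1$ contributes equally to $\int\alpha$ and $\int q\,\alpha$, that choice forces the paper to track knock-on effects across all later times---precisely the complication your choice of $q=0$ sidesteps. Your remark that the $\alpha_{t,\ell}$ should be read as nonnegative Borel measures is exactly what the paper relies on implicitly (its own $\delta_{\{1\}}$), so there is no formal objection.

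One caution about your fallback ``bookkeeping'' variant with finite $\epsilon$: it does not obviously go through. A bump at $(\tau,\ell)$ raises the LHS of every later constraint $(t,\ell)$ by $\tfrac{\epsilon}{2}m_{\tau,\ell}$ and the RHS of every later $(t,\ell-1)$ by the same amount; if some original slack $s_{t,\ell}$ is zero while $\sum_{\tau<t}m_{\tau,\ell}>\sum_{\tau<t}m_{\tau,\ell+1}$, the constraint at $(t,\ell)$ is already violated before you reach it, and no nonnegative bump can repair this. Since the measure-theoretic argument already proves the proposition cleanly, you should simply drop this alternative.
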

The following proposition states the lower bound we need in the rest of our analysis.
\begin{proposition}
    For every $k<n$, { and every $\bar{n}\le n$}, the optimal value of~\ref{form:P_nk_restricted_n} is at most the optimal value of~\ref{form:dual-relaxation}.
\end{proposition}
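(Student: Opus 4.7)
The plan is to show that every feasible solution $(\alpha,v)$ of \ref{form:P_nk_restricted_n} gives rise to a feasible solution $(\beta,v)$ of \ref{form:dual-relaxation} with the same objective value. The natural construction is to pad with zeros: set $\beta_{t,\ell}(q) = \alpha_{t,\ell}(q)$ for $t\in [\bar n]$ and $\beta_{t,\ell}(q)=0$ for $t\in \{\bar n+1,\dots, n\}$. Non-negativity \eqref{const:P_n_initial-relax} is then immediate, and since the right-hand side of \eqref{const:P_n_approx-relax} only adds zero contributions to the right-hand side of \eqref{const:P_n_approx_restricted_n}, the approximation constraint transfers directly. All that is left is to verify the three dynamics constraints \eqref{const:constr_1_P_n_k-relax}--\eqref{const:constr_t_P_n_2-relax}.

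The key step is to invoke Proposition~\ref{prop:restricted_LP_tightens_constraints} so we may assume that \eqref{const:constr_1_P_n_k_restricted_n} and \eqref{const:constr_t_P_n_k_restricted_n} hold with equality. Differencing \eqref{const:constr_1_P_n_k_restricted_n} at consecutive times $t$ and $t+1$ (with $t\in[\bar n-1]$) gives
\[
\int_0^1 \alpha_{t+1,k}(q)\,\mathrm dq \;=\; \int_0^1 (1-q)\alpha_{t,k}(q)\,\mathrm dq,
\]
which is precisely \eqref{const:constr_t_P_n_k-relax} for $\beta$. Analogously, differencing \eqref{const:constr_t_P_n_k_restricted_n} at $t$ and $t+1$ for $\ell\in[k-1]$ yields
\[
\int_0^1 \alpha_{t+1,\ell}(q)\,\mathrm dq \;=\; \int_0^1(1-q)\alpha_{t,\ell}(q)\,\mathrm dq \;+\; \int_0^1 q\,\alpha_{t,\ell+1}(q)\,\mathrm dq,
\]
which recovers \eqref{const:constr_t_P_n_2-relax}. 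The boundary case $t=1$ in \eqref{const:constr_1_P_n_k_restricted_n} and \eqref{const:constr_t_P_n_k_restricted_n} reads $\int_0^1\alpha_{1,k}(q)\,\mathrm dq = 1$ and $\int_0^1 \alpha_{1,\ell}(q)\,\mathrm dq = 0$ for $\ell\in [k-1]$, matching $\mathbf{1}_k(\ell)$ on the right-hand side of \eqref{const:constr_1_P_n_k-relax}.

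It remains to deal with the padded tail. For $t\geq \bar n$, we have $\beta_{t+1,\ell}=0$, so \eqref{const:constr_t_P_n_k-relax} and \eqref{const:constr_t_P_n_2-relax} reduce to the inequalities $0 \leq \int_0^1(1-q)\beta_{t,k}(q)\,\mathrm dq$ and $0 \leq \int_0^1(1-q)\beta_{t,\ell}(q)\,\mathrm dq + \int_0^1 q\,\beta_{t,\ell+1}(q)\,\mathrm dq$, which both hold by non-negativity of $\beta$. Since the objective value is preserved by construction, we conclude that the optimal value of \ref{form:P_nk_restricted_n} is at most that of \ref{form:dual-relaxation}. The only delicate point in the argument is the telescoping step, which requires the tightness afforded by Proposition~\ref{prop:restricted_LP_tightens_constraints}: without equality in \eqref{const:constr_1_P_n_k_restricted_n} and \eqref{const:constr_t_P_n_k_restricted_n}, the differenced relations would not necessarily produce inequalities in the per-step direction needed by \ref{form:dual-relaxation}.
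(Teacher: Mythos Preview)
Your proof is correct and follows essentially the same approach as the paper: invoke Proposition~\ref{prop:restricted_LP_tightens_constraints} to obtain equality in the cumulative constraints, difference consecutive time indices to recover the per-step recursions of \ref{form:dual-relaxation}, and extend by zeros beyond $\bar n$. The paper presents the telescoping computation slightly more explicitly, but the argument is the same.
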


\begin{proof}
    Let $(\alpha,v)$ be a feasible solution to~\ref{form:P_nk_restricted_n}. By the previous proposition, we can assume that $\alpha$ tightens all Constraints~\eqref{const:constr_1_P_n_k_restricted_n} and~\eqref{const:constr_t_P_n_k_restricted_n}. From here, we can deduce that for $0\leq t<\barnk$
    \begin{align*}
        \int_0^1 \alpha_{t+1,k}(q) \, \mathrm{d}q & = 1 - \sum_{\tau\leq  t}\int_0^1 q \alpha_{\tau, k}(q)\, \mathrm{d}q = \int_0^1 \alpha_{t,q}\, \mathrm{d}q - \int_0^1 q \alpha_{t,q}\, \mathrm{d}q
    \end{align*}
    and for $\ell < k$, we have
    \begin{align*}
        \int_0^1 \alpha_{t+1,\ell}(q)\, \mathrm{d}q & = \sum_{\tau\leq t}\int_0^1 q \alpha_{\tau, \ell+1}(q)\, \mathrm{d}q - \sum_{\tau\leq t}\int_0^1 q \alpha_{\tau, \ell}(q)\, \mathrm{d}q \\
        & = \int_0^1 \alpha_{t,q}\, \mathrm{d}q + \int_0^1 q\alpha_{t,\ell+1}(q)\, \mathrm{d}q - \int_0^1 q \alpha_{t,\ell}(q)\, \mathrm{d}q\\
        & = \int_0^1 (1-q) \alpha_{t,\ell}(q)\, \mathrm{d}q + \int_0^1 q \alpha_{t,\ell+1}(q)\, \mathrm{d}q.
    \end{align*}
    Hence, $\alpha$ satisfies constraints~\eqref{const:constr_1_P_n_k-relax}~\eqref{const:constr_t_P_n_k-relax}~\eqref{const:constr_t_P_n_2-relax} for $t<\barnk$. If we define $\bar{\alpha}$ as follows
    \[
    \bar{\alpha}_{t,\ell}(q) = \begin{cases}
        \alpha_{t,\ell}(q), & t\leq \barnk,\\
        0, & t> \barnk,
    \end{cases}
    \]
    then, $(\bar{\alpha},v)$ is a feasible solution to~\ref{form:dual-relaxation}. From here, the result follows immediately. 
\end{proof}

\noindent{\bf Third step: From the nonlinear system to LP.} 
Since $\gamma_{n,k}$ is equal to the value of $\ref{form:LP_dual}$, which in turn is at least the value of $\ref{form:dual-relaxation}$, the previous result implies that we only need to provide a feasible solution to~\ref{form:P_nk_restricted_n} to provide a lower bound on $\gamma_{n,k}$. The latter will be defined by a solution to the non-linear system of equations (\ref{ode1})-(\ref{ode3}). For a given $\theta$, we denote it by $\nls_k(\theta)$. 
The following lemma summarizes some properties of $\nls_k(\theta)$ that we use in our analysis.
\begin{lemma}\label{prop:nls}
For every positive integer $k$, the following holds:
\begin{enumerate}[itemsep=0pt,label=\normalfont(\roman*)]
    \item There exists $\theta^{\star}$ for which $\nls_k(\theta^{\star})$ has a solution.\label{prop:nls-theta}
    We denote such a solution by $(Y_1,\ldots,Y_k)$.
    \item The vector $\theta^{\star}$ satisfies that $0<\theta^{\star}_1<\theta^{\star}_2<\cdots<\theta^{\star}_k<1/k.$\label{prop:nls-diff-theta}
    \item For every $j\in [k]$, the function $Y_j$ is non-increasing.\label{prop:nls-y-dec}
\end{enumerate}
\end{lemma}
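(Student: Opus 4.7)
The plan is to construct $\theta^\star$ and $(Y_1,\ldots,Y_k)$ by downward induction on $j$, treating each equation in~\eqref{ode1}-\eqref{ode2} as a scalar initial value problem in $y_j$ with the higher-indexed $y_{j+1},\ldots,y_k$ already fixed. The key identity I would exploit is $(\Gamma_k(-\ln y))'=(-\ln y)^{k-1}y'$, which follows from $\Gamma_k'(x)=-x^{k-1}e^{-x}$ and turns each equation into a genuine scalar ODE of the form $(-\ln y_j)^{k-1}y_j' = R_j(y_j, y_{j+1})$. It is convenient to introduce the auxiliary function $H(y):=k!-\Gamma_{k+1}(-\ln y)$, which is continuous and strictly decreasing on $(0,1)$ with $H(1)=0$ and $H(0)=k!$; then the right-hand side for $j=k$ reads $H(y_k)-k!/(k\theta_k)$, and for $j<k$ it reads $H(y_j)-(\theta_{j+1}/\theta_j)H(y_{j+1})$. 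For $k=1$ this reduces to the scalar ODE of Hill and Kertz, whose analysis is classical.

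For the base case $j=k$, I would fix $\theta_k\in(0,1/k)$, so that $k!/(k\theta_k)>k!\ge H(y_k)$ and the right-hand side is strictly negative on $(0,1]$. A local Taylor expansion around $y=1$ handles the singularity of $(-\ln y)^{k-1}$ and yields a strictly decreasing local solution, which standard ODE theory then extends uniquely as long as $y_k$ stays bounded away from $0$. Letting $T(\theta_k)$ denote the time at which $y_k$ first reaches $0$, continuous dependence on parameters together with the limiting behavior $T(\theta_k)\downarrow 0$ as $\theta_k\downarrow 0$ and $T(\theta_k)\to\infty$ as $\theta_k\uparrow 1/k$ produces a unique $\theta_k^\star\in(0,1/k)$ with $T(\theta_k^\star)=1$. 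For the inductive step, given $Y_{j+1},\ldots,Y_k$ and $\theta_{j+1}^\star,\ldots,\theta_k^\star$, I would run the analogous shooting argument on $\theta_j\in(0,\theta_{j+1}^\star)$, choosing the value that drives $y_j$ to $0$ precisely at $t=1$. The endpoints of the shooting interval yield the correct limiting behaviors: as $\theta_j\downarrow 0$, the forcing term dominates and $y_j$ plunges to $0$ prematurely, while as $\theta_j\uparrow\theta_{j+1}^\star$, the forcing weakens enough that $y_j$ fails to reach $0$ in time. This yields a unique $\theta_j^\star\in(0,\theta_{j+1}^\star)$, simultaneously giving existence and the ordering $\theta_j^\star<\theta_{j+1}^\star$ of part~\ref{prop:nls-diff-theta}.

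For part~\ref{prop:nls-y-dec}, non-increasingness of $Y_j$ is equivalent to $H(Y_j(t))\le(\theta_{j+1}^\star/\theta_j^\star)H(Y_{j+1}(t))$ on $(0,1)$, which would follow from the coupled comparison $Y_j(t)\ge Y_{j+1}(t)$ together with the monotonicity of $H$ and the already-established $\theta_j^\star<\theta_{j+1}^\star$. The main obstacle is making this comparison rigorous, since the monotonicity of $Y_j$ and the ordering $Y_j\ge Y_{j+1}$ are coupled: the natural proof is by contradiction, assuming a first interior crossing point $t_0\in(0,1)$ where $Y_j(t_0)=Y_{j+1}(t_0)$ and comparing the two scalar ODEs at $t_0$ to show $Y_j'(t_0)>Y_{j+1}'(t_0)$, contradicting the assumption. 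Additional care is needed near the endpoints $t=0$ and $t=1$, where $(-\ln y)^{k-1}$ either vanishes or blows up; a local analysis at $t=0$, combined with an a~priori bound on the shooting parameter ensuring $\theta_j^\star$ is attained strictly inside $(0,\theta_{j+1}^\star)$, completes the argument.
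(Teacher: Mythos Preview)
Your overall architecture---downward induction on $j$, rewriting each line of \eqref{ode1}--\eqref{ode2} as a scalar ODE via $(\Gamma_k(-\ln y))'=(-\ln y)^{k-1}y'$, and shooting on the free parameter $\theta_j$---matches the paper's. The paper carries out existence via an explicit Euler discretization rather than an abstract continuous-dependence argument, but that is a matter of taste. The substantive gap is in your route to part~\ref{prop:nls-y-dec}.

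You reduce non-increasingness of $Y_j$ to the pointwise comparison $Y_j\ge Y_{j+1}$ and then propose a first-crossing argument. But at a first interior crossing $t_0$ with $Y:=Y_j(t_0)=Y_{j+1}(t_0)$, subtracting the two scalar ODEs gives
\[
(-\ln Y)^{k-1}\bigl(Y_j'(t_0)-Y_{j+1}'(t_0)\bigr)
=\frac{\theta_{j+2}^\star}{\theta_{j+1}^\star}\,H\bigl(Y_{j+2}(t_0)\bigr)-\frac{\theta_{j+1}^\star}{\theta_j^\star}\,H(Y),
\]
and to derive the contradiction you need this to be strictly positive. Even granting inductively that $Y_{j+1}\ge Y_{j+2}$ (hence $H(Y)\le H(Y_{j+2}(t_0))$), you would still need $\theta_{j+2}^\star/\theta_{j+1}^\star\ge \theta_{j+1}^\star/\theta_j^\star$, a log-convexity of the $\theta^\star$'s that is nowhere established. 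So the crossing comparison does not close, and the ordering $Y_j\ge Y_{j+1}$ is not available as an intermediate step. The paper sidesteps this entirely: Proposition~\ref{prop:monotonicity} shows that once $Y_j'(t_1)<0$ with $Y_j(t_1)<1$, then $Y_j'<0$ on all of $[t_1,1)$, using only that $Y_{j+1}$ is non-increasing (the inductive hypothesis) and the structure of \eqref{ode2-phi}; no comparison between $Y_j$ and $Y_{j+1}$ is ever needed. Proposition~\ref{prop-inductive-step} then rules out $Y_j$ lingering at or above $1$ on any initial interval. You should replace your crossing argument by this ``once decreasing, always decreasing'' mechanism.

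A smaller point: your assertion that as $\theta_j\uparrow\theta_{j+1}^\star$ the solution $y_j$ ``fails to reach $0$ in time'' is plausible but not immediate, because at $\theta_j=\theta_{j+1}^\star$ the ODE reads $(-\ln y_j)^{k-1}y_j'=H(y_j)-H(Y_{j+1})$, which is singular at $y_j=1$ and does not obviously force $y_j$ to stay above $Y_{j+1}$; the local analysis at $t=0$ you allude to would have to do real work here. The paper avoids this endpoint issue by working with the Euler iterates and showing that $\partial y_{m,j,t}/\partial\theta_j>0$ directly, then selecting $\theta_j^\star$ as a subsequential limit.
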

We defer the proof of Lemma~\ref{prop:nls} to Subsection \ref{sec:nls-lemma}.
Let $\barnk=n-k-1$ and let $y_{j,t}= Y_j(t/\barnk)$. 
{ Let us define $\varepsilon_{j,t} = - \ln(y_{j,t})/\barnk$, for $t\in \{0,1,\ldots,\barnk-1\}$, $j\in \{1,\ldots,k\}$, and $\varepsilon_{j,\barnk}=1$.} We can show that for $n$ large enough, $-\ln (y_{j,t})/\barnk \leq 1$ for $t\in \{0,\ldots,\barnk-1\}$ (see Proposition~\ref{prop:mon-phi}); hence, $0\leq \varepsilon_{1,j}\leq \cdots \leq \varepsilon_{j,\barnk}\leq 1$. Let $B_\ell = (\ell-1)\cdot (4c_k^k + c_k/k!)$ for $\ell\in \{1,\ldots,k\}$, where $c_k= 24k!\max\left\{ \theta_{\ell+1}^\star / \theta_\ell^\star : \ell\in \{1,\ldots,k-1 \right\}\}$. Now, consider the following family of functions:
\begin{align}
\alpha^\star_{t,\ell}(q) = \begin{cases}
    0, &  t\leq k-\ell,\\
    \left( 1 + 12\frac{\ln(\barnk)^2}{\barnk} \right)^{-(k-\ell+1)}\left( B_\ell \ln ({\barnk}) \mathbf{1}_{[0,1/\barnk]}(q) - \theta^{\star}_{\ell} g_{n,k}'(q)\mathbf{1}_{(0, \varepsilon_{\ell,t})}(q) \right), & t=k-\ell+1, \\
     \left(1 + 12 \frac{\ln(\barnk)^2}{\barnk} \right)^{-(k-\ell+1)} \left(-\theta^{\star}_{\ell} g_{n,k}'(q)\right)\mathbf{1}_{(\varepsilon_{\ell,t-1}, \varepsilon_{\ell,t})}(q), & t \geq k-\ell +2. \label{eq:description_of_solution}
\end{cases}
\end{align}
Note that for all $u\in [0,1]$, we have
\begin{align}
\sum_{\ell=1}^k \int_u^1 \alpha^\star_{t,\ell}(q)\,\mathrm{d}q &\geq \left( 1+ 12 \frac{\ln(\barnk)^2}{\barnk}  \right)^{-k}\left( \sum_{\ell=1}^k \theta^{\star}_{\ell} \right)g_{n,k}(u)\notag\\
&\geq \left(  1- 12 k\frac{\ln(\barnk)^2}{\barnk}\right) \left( \sum_{\ell=1}^k \theta_{\ell}^\star \right)g_{n,k}(u),\label{eq:theta_approx}
\end{align}
where in the first inequality we used that $(1+12\ln(\barnk)^2/\barnk)^{-(k-\ell+1)}$ is increasing in $\ell$ and in the second inequality we used the standard Bernoulli inequality. 
Inequality~\eqref{eq:theta_approx} guarantees that $(\alpha^\star,v^\star)$ satisfies constraint~\eqref{const:P_n_approx_restricted_n} with $v^\star=\left(  1- 12 k\cdot {\ln(\barnk)^2}/{\barnk}\right) \sum_{\ell=1}^k \theta_{\ell}^\star$. 
Before proving Theorem \ref{thm:comp-ratio} we need the following lemma; we defer the proof to section~\ref{sec:apx-lemma}.
\begin{lemma}\label{lem:main-apx-lemma}
    For $\barnk = n-k-1$, and $n$ large enough, $\alpha^\star$ satisfies constraints~\eqref{const:constr_1_P_n_k_restricted_n} and~\eqref{const:constr_t_P_n_k_restricted_n}.
\end{lemma}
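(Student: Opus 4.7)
The plan is to evaluate each integral on the left-hand sides of the constraints~\eqref{const:constr_1_P_n_k_restricted_n} and~\eqref{const:constr_t_P_n_k_restricted_n} for the explicit $\alpha^\star$, and to check that each constraint reduces, up to a multiplicative slack absorbed by the prefactor $(1+12\ln(\bar n)^2/\bar n)^{-(k-\ell+1)}$ and an additive correction absorbed by the spike $B_\ell\ln(\bar n)\mathbf{1}_{[0,1/\bar n]}$, to a finite-difference form of one of the ODEs~\eqref{ode1}--\eqref{ode2} at the grid point $t/\bar n$. The central analytic input is the asymptotic expansion
\[
\frac{1}{n}\,g_{n,k}\!\left(-\frac{\ln y}{\bar n}\right) \;=\; \frac{\Gamma_k(-\ln y)}{(k-1)!} \;+\; O\!\left(\frac{\ln(\bar n)^2}{\bar n}\right),
\]
which follows from the standard identity $\Gamma_k(x)=(k-1)!\,e^{-x}\sum_{m=0}^{k-1}x^m/m!$ combined with $(1-u)^{n-m-1}=e^{-(n-m-1)u}(1+O((n-m-1)u^2))$, and is valid uniformly for $y\in[y_{\ell,\bar n-1},1]$ thanks to the estimate $-\ln y_{\ell,\bar n-1}=O(\ln\bar n)$ from Proposition~\ref{prop:mon-phi}.

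Since $\alpha^\star_{t,\ell}$ is proportional to $-g'_{n,k}(q)$ on each subinterval $(\varepsilon_{\ell,\tau-1},\varepsilon_{\ell,\tau})$ of its support, the quantity $\int_0^1\alpha^\star_{t,\ell}\,\mathrm{d}q$ equals, at leading order, $\theta^\star_\ell\,n[\Gamma_k(-\ln y_{\ell,t-1})-\Gamma_k(-\ln y_{\ell,t})]/(k-1)!$, i.e.\ the discrete derivative of $\Gamma_k(-\ln Y_\ell)$ at $t/\bar n$. Integrating by parts gives $\int q(-g'_{n,k}(q))\,\mathrm{d}q = -q\,g_{n,k}(q) + \int g_{n,k}(q)\,\mathrm{d}q$; summing over $\tau<t$ telescopes the boundary term, while the cumulative integrals reassemble into $\theta^\star_\ell\int_0^{\varepsilon_{\ell,t-1}}g_{n,k}(q)\,\mathrm{d}q$, which by the antiderivative identity $\int_0^x\Gamma_k(s)\,\mathrm{d}s = k!+x\Gamma_k(x)-\Gamma_{k+1}(x)$ evaluates asymptotically to $\theta^\star_\ell[k!-\Gamma_{k+1}(-\ln y_{\ell,t-1})]/(k-1)!$, after the surviving $x\Gamma_k(x)$ contribution cancels the telescoped boundary. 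Combining these pieces, the left-hand side of~\eqref{const:constr_1_P_n_k_restricted_n} for $\ell=k$ becomes a discrete form of $k!(1-1/(k\theta^\star_k))-\Gamma_{k+1}(-\ln y_k)-(\Gamma_k(-\ln y_k))'$, which vanishes by~\eqref{ode1}; the analogous calculation for~\eqref{const:constr_t_P_n_k_restricted_n} with $\ell<k$ produces on the right-hand side the cross-term $(\theta^\star_{\ell+1}/\theta^\star_\ell)(k!-\Gamma_{k+1}(-\ln y_{\ell+1,t-1}))$ appearing in~\eqref{ode2}. Thus the ODEs are engineered precisely so that each constraint holds with equality in the continuum limit.

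The prefactor $(1+12\ln(\bar n)^2/\bar n)^{-(k-\ell+1)}$ then introduces a uniform slack of order $(k-\ell+1)\ln(\bar n)^2/\bar n$ on every left-hand side, which absorbs the aggregated discretization error; the exponent $k-\ell+1$ reflects the fact that errors propagate through the coupling in~\eqref{const:constr_t_P_n_k_restricted_n} from level $k$ down to level $\ell$. The boundary spike $B_\ell\ln(\bar n)\mathbf{1}_{[0,1/\bar n]}$ present at $t=k-\ell+1$ compensates for the $\Gamma_k(0)=(k-1)!$ contribution at the first active time, where there is no prior $y_{\ell,t-1}$ available to cancel against; the constants $B_\ell=(\ell-1)(4c_k^k+c_k/k!)$ built from $c_k=24\,k!\max_{\ell<k}\theta^\star_{\ell+1}/\theta^\star_\ell$ are sized exactly for this purpose. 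The main technical obstacle is making the asymptotic expansion uniform all the way up to $t=\bar n$, where $y_{\ell,t}\to 0$ and $-\ln y_{\ell,t}$ grows like $\ln\bar n$; this is why the quadratic $\ln(\bar n)^2$ factor enters and naive $1/\bar n$ error bounds would be insufficient. Monotonicity $\varepsilon_{\ell,1}<\cdots<\varepsilon_{\ell,\bar n}=1$ and nonnegativity of $\alpha^\star_{t,\ell}$, both consequences of Lemma~\ref{prop:nls}\ref{prop:nls-y-dec}, complete the verification.
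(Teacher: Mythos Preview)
Your strategy is essentially the paper's: reduce both constraints to discrete versions of the ODEs~\eqref{ode1}--\eqref{ode2} via the Poisson-type approximation of $g_{n,k}$ by $\Gamma_k$, and absorb the discretization errors with the multiplicative slack. Your integration-by-parts treatment of $\int q(-g'_{n,k})$ and the antiderivative identity for $\Gamma_k$ are equivalent to the paper's use of the recursion $g'_{n+1,k+1}(u)=\tfrac{n+1}{k}\,u\,g'_{n,k}(u)$ (Proposition~\ref{prop:g-n-properties}\ref{gnk:recursive}); both routes land on the same $\Gamma_{k+1}$ term.

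One point is misdiagnosed. The spike $B_\ell\ln(\bar n)\mathbf{1}_{[0,1/\bar n]}$ cannot be compensating for a ``$\Gamma_k(0)=(k-1)!$'' contribution: that would be an $O(1)$ quantity, whereas the spike integrates to $B_\ell\ln(\bar n)/\bar n$. Also, $y_{\ell,t-1}$ \emph{is} available at the first active time $t=k-\ell+1$; the support there is simply extended to $0$ rather than starting at $\varepsilon_{\ell,t-1}$. The actual role of the $B_\ell$'s is a level-to-level telescoping: with $B_\ell=(\ell-1)(4c_k^k+c_k/k!)$ one has $B_\ell-B_{\ell+1}=-(4c_k^k+c_k/k!)$, and this negative increment cancels two $O(\ln\bar n/\bar n)$ error terms that arise on the left of~\eqref{const:constr_t_P_n_k_restricted_n}---the Taylor remainder from replacing the discrete increment of $\Phi_{k,\ell}$ by $\Phi'_{k,\ell}$ (the paper's Claim~\ref{claim:phi-log}, contributing $c_k/k!$) and the initial-segment integral $\theta^\star_\ell\int_0^{\varepsilon_{\ell,k-\ell}}(-g'_{n,k})$ coming from the extended support at the first active time (bounded by $4c_k^k\ln\bar n/\bar n$ via Proposition~\ref{prop:mon-phi}\ref{prop:nls-lower_bound}). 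Note that $B_1=0$, so there is no spike at the bottom level, and at level $k$ the spike is a burden on the left-hand side that is simply swallowed by the $12\ln(\bar n)^2/\bar n$ slack. Your sketch would go through once this mechanism is used in place of the $\Gamma_k(0)$ story.
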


\begin{proof}[Proof of Theorem~\ref{thm:comp-ratio}]
    As a consequence of Lemma~\ref{lem:main-apx-lemma}, we have that $(\alpha^\star,v^\star)$ is a feasible solution to~\ref{form:P_nk_restricted_n}. In particular, we obtain the approximation 
$$\gamma_{n,k} \geq v^\star = \left(  1- 12 k\frac{\ln(\barnk)^2}{\barnk}\right)\sum_{\ell=1}^k \theta_{\ell}^\star \geq \left( 1 - 24 k \frac{\ln(n)^2}{n}\right)\sum_{\ell=1}^k \theta_{\ell}^\star,$$
when $n$ is sufficiently large.
\end{proof}

\color{black}

\subsection{Analysis of $\nls_k(\theta)$ and Proof of Lemma \ref{prop:nls}}\label{sec:nls-lemma}
In this section, we analyze the nonlinear system $\nls_k(\theta)$ in terms of the existence of solutions. 
Given functions $y_1,\ldots,y_k:\RR\to \RR_+$, let $y=(y_1,\ldots,y_k)$, and for each pair $r,\ell\in [k]$, define $\phi_{r,\ell,y}(t)=\Gamma_r(-\ln y_{\ell}(t))$ for every $t\in [0,1)$.
Observe that by simple differentiation, we have 
\begin{equation}\phi_{r,\ell,y}'(t)=-\Gamma_r'(-\ln y_{\ell}(t))\frac{y_{\ell}'(t)}{y_{\ell}(t)}=(-\ln y_{\ell}(t))^{r-1}y_{\ell}'(t),\label{phi-derivative}\end{equation}
since $\Gamma_r'(x)=-x^{r-1}e^{-x}$. Furthermore, when $r\ge 2$, observe that $\phi_{r,\ell,y}'(t)=-\phi_{r-1,\ell,y}'(t)\ln y_{\ell}(t)$, which is a consequence of the derivative formula in \eqref{phi-derivative}.
For a vector $\theta_{\ell:k}=(\theta_\ell,\ldots,\theta_k)$, we define the system $\nls_{\ell,k}(\theta_{\ell:k})$ to be the subsystem of $\nls_k(\theta)$ that only consider the differential equations from $\ell,\ldots,k$ and the terminal conditions, that is, 
\begin{align}
    (\Gamma_k(-\ln y_k))' & =  k! \left( 1 - 1/(k\theta_k) \right) - \Gamma_{k+1}(-\ln y_k),\notag\\
    (\Gamma_{k}(-\ln y_j))' & =  k!-\Gamma_{k+1}(-\ln y_j) - \frac{\theta_{j+1}}{\theta_j}(k!-\Gamma_{k+1}(-\ln y_{j+1}))\text{ for every } j\in \{\ell,\ldots,k-1\},\notag\\
    y_j(0) & =1 \text{ and }\lim_{t\uparrow 1}y_j(t) =0  \hspace{.4cm}\text{for every } j\in \{\ell,\ldots,k\}.\notag
\end{align}
When $\ell=1$, the system $\nls_{1,k}(\theta)$ is exactly the system $\nls_k(\theta)$.
We also remark that, by replacing, any solution $y$ of $\nls_{\ell,k}(\theta_{\ell:k})$ satisfies the following conditions:
\begin{align}
    \phi_{k,k,y}' & =  k! \Big( 1- \frac{1}{k\theta_k}\Big) - \phi_{k+1,k,y},\notag\\
    \phi_{k,j,y}' & =  k!-\phi_{k+1,j,y} - \frac{\theta_{j+1}}{\theta_j}(k!-\phi_{k+1,j+1,y})\;\text{ for every } j\in \{\ell,\ldots,k-1\}.\label{ode-new-phi}
\end{align}
We will use $\nls_{\ell,k}(\theta_{\ell:k})$ to inductively show that $\nls_k{(\theta)}$ satisfies all properties of Lemma \ref{prop:nls}. One key step to showing the existence of a solution to $\nls_{\ell,k}(\theta_{\ell:k})$ is to first establish some properties that any solution of $\nls_{\ell+1,k}(\theta_{\ell+1:k})$, provided by the induction hypothesis, must satisfy. In Proposition \ref{prop-inductive-step} we show \ref{prop:nls}\ref{prop:nls-diff-theta} and \ref{prop:nls}\ref{prop:nls-y-dec}. Proposition \ref{prop:monotonicity} is useful for showing \ref{prop:nls}\ref{prop:nls-y-dec} as it gives a simple sufficient criterion for monotonicity to hold.
The following proposition gives an equivalent formulation of some expressions used to understand the behaviour of $\phi_{k,j,y}'$.
\begin{proposition}\label{prop:phi-identity}
Consider $\ell\in [k-1]$, and let $\theta_{\ell,k}$ be such that there is a solution $y=(y_{\ell},\ldots,y_k)$ for $\nls_{\ell,k}(\theta_{\ell:k})$, and such that $y_{j}'(s)\ne 0$ for every $j$ and every $s\in (0,1)$. 
Then, the following holds:
\begin{enumerate}[itemsep=0pt,label=(\roman*)]
    \item $\displaystyle\phi_{k,k,y}'(t)\exp\Big(\int_t^1 \ln y_k(s) \, \mathrm{d}s\Big) = k! \left( 1- \frac{1}{k\theta_k}\right).$\label{phi-identity-k}
    \item For every $j\in \{\ell,\ldots,k-1\}$, we have that $\displaystyle\phi_{k,j,y}'(t)\exp\Big(\int_t^1 \ln y_{j}(s)\, \mathrm{d}s\Big)$ is equal to 
    $$ k!\left( 1 - \frac{\theta_{j+1}}{\theta_j} \right)+ \frac{\theta_{j+1}}{\theta_j} \int_t^1 \phi_{k,j+1,y}'(\tau)\ln y_{j+1}(\tau) \exp\Big(\int_\tau^1 \ln y_{j}(s)\, \mathrm{d}s\Big) \, \mathrm{d}\tau.$$\label{phi-identity-j}
\end{enumerate}
\end{proposition}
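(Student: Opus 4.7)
The plan is to treat both parts uniformly by an integrating-factor computation applied to
$$\psi_j(t)\;:=\;\phi_{k,j,y}'(t)\,\exp\!\Bigl(\int_t^1\ln y_j(s)\,\mathrm{d}s\Bigr).$$
The hope is that $\psi_k$ will be constant (yielding part (i)) while for $j<k$ the function $\psi_j$ will differ from a constant only by an integral remainder driven by the coupling term to $y_{j+1}$ in \eqref{ode-new-phi} (yielding part (ii)).

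The core computation is to differentiate $\psi_j$. Writing $F_j(t)=\int_t^1\ln y_j(s)\,\mathrm{d}s$ so that $F_j'(t)=-\ln y_j(t)$, the product rule gives
$$\psi_j'(t)=e^{F_j(t)}\bigl[\phi_{k,j,y}''(t)-\phi_{k,j,y}'(t)\ln y_j(t)\bigr].$$
I would then differentiate \eqref{ode-new-phi} once in $t$ to express $\phi_{k,j,y}''$ as a linear combination of $\phi_{k+1,j,y}'$ and $\phi_{k+1,j+1,y}'$, and convert each of those to lower index via the excerpted identity $\phi_{r,\ell,y}'=-\phi_{r-1,\ell,y}'\ln y_\ell$ for $r\ge 2$. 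The first substitution produces a $\phi_{k,j,y}'\ln y_j$ term that cancels the $-\phi_{k,j,y}'\ln y_j$ already present in the bracket, leaving $\psi_k'(t)\equiv 0$ in the top case and, for $j<k$,
$$\psi_j'(t)=-\frac{\theta_{j+1}}{\theta_j}\,\phi_{k,j+1,y}'(t)\,\ln y_{j+1}(t)\,e^{F_j(t)}.$$

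The proof then finishes by integrating and evaluating at the upper endpoint. Since $y_j(t)\to 0$ as $t\uparrow 1$, we have $-\ln y_j(t)\to\infty$ and hence $\phi_{k+1,j,y}(t),\phi_{k+1,j+1,y}(t)\to 0$ (as $\Gamma_{k+1}(x)\to 0$ at infinity); substituting into \eqref{ode-new-phi} gives $\phi_{k,k,y}'(1^-)=k!\bigl(1-1/(k\theta_k)\bigr)$ and $\phi_{k,j,y}'(1^-)=k!\bigl(1-\theta_{j+1}/\theta_j\bigr)$ for $j<k$, while $F_j(1)=0$ makes the exponential prefactor equal to $1$. Part (i) follows immediately from constancy of $\psi_k$; for part (ii), integrating the expression for $\psi_j'$ from $t$ to $1$ and rearranging gives precisely the claimed identity.

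The main delicacy I anticipate is ensuring that the improper integral $F_j(t)=\int_t^1\ln y_j(s)\,\mathrm{d}s$ is finite and that the boundary evaluation $\phi_{k,j,y}'(1^-)$ is legitimate despite $y_j(t)\to 0$. This should follow from the regularity of the solution supplied by Lemma~\ref{prop:nls}\ref{prop:nls-theta} together with the standing hypothesis $y_j'(s)\ne 0$, but it will merit a short bounded-convergence argument when written out in full.
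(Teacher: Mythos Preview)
Your proposal is correct and follows essentially the same integrating-factor argument as the paper: both differentiate $\phi_{k,j,y}'(\tau)\exp\bigl(\int_\tau^{\bullet}\ln y_j(s)\,\mathrm{d}s\bigr)$, use $\phi_{k+1,\ell,y}'=-\phi_{k,\ell,y}'\ln y_\ell$ to effect the cancellation, and read off the constant from the boundary behavior as $t\uparrow 1$. The only cosmetic difference is that the paper keeps a finite upper limit $r<1$ in the exponent and passes to the limit $r\to 1$ at the very end, which handles cleanly the convergence concern you flag about $\int_t^1\ln y_j(s)\,\mathrm{d}s$; your version with upper limit $1$ from the outset is equivalent once that integrability is justified.
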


\begin{proof} We start by observing the following: $\phi_{k,k,y}'' = -\phi_{k+1,k,y}' = -(-\ln y_k)^{k}y_k' =\phi_{k,k,y}'\ln y_k$, 
where the first equality holds from the first identity in \eqref{ode-new-phi}, and the other two equalities come from \eqref{phi-derivative}.
From here, by integrating, we have that for every $t,r\in (0,1)$ with $r\ge t$, it holds 
\begin{align}
\phi'_{k,k,y}(t)\exp\Big(\int_{t}^{r}\ln y_k(s)\, \mathrm{d}s\Big)&=\phi'_{k,k,y}(t)\exp\Big(\int_{t}^{r}\frac{\phi_{k,k,y}''(s)}{\phi_{k,k,y}'(s)}\, \mathrm{d}s\Big)\notag\\
&=\phi'_{k,k,y}(t)\exp\Big(\ln \phi'_{k,k,y}(r)-\ln \phi'_{k,k,y}(t)\Big)=\phi'_{k,k,y}(r).\notag
\end{align}
We conclude part \ref{phi-identity-k} by doing $r\to 1$: We use that $y_k(r)\to 0$ in $\nls_k(\theta)$, therefore $\phi_{k+1,k,y}(r)\to 0$, and then $\phi_{k,k,y}'(r)\to k!(1-1/(k\theta_k))$, using the first equality in \eqref{ode-new-phi}.

For $j\in \{\ell,\ldots,k-1\}$, we proceed in a similar way.
From the second equality in \eqref{ode-new-phi} we get
    \begin{align}
        \phi_{k,j,y}'' & = - \phi_{k+1,j,y}' + \frac{\theta_{j+1}}{\theta_j} \phi_{k+1,j+1,y}' = \phi_{k,j,y}'\ln y_k - \frac{\theta_{j+1}}{\theta_j} \phi_{k,j+1,y}' \ln y_{j+1},\label{phi''}
    \end{align}
    where the last equality comes from the observation after the derivative formula in \eqref{phi-derivative}.
    On the other hand, for every $r,\tau\in (0,1)$ with $r\ge \tau$, we have
    \begin{align}
    &\frac{\partial}{\partial \tau}\left(\phi_{k,j,y}'(\tau) \exp\Big({\int_\tau^r \ln y_{j}(s)\, \mathrm{d}s}\Big)\right)\notag \\
    &=\phi_{k,j,y}''(\tau)\exp\Big({\int_\tau^r \ln y_{j}(s)\, \mathrm{d}s}\Big)-\phi_{k,j,y}'(\tau)\exp\Big({\int_\tau^r \ln y_{j}(s)\, \mathrm{d}s}\Big)\ln y_j(\tau)\notag\\
    &=\Big(\phi_{k,j,y}''(\tau)-\phi_{k,j,y}'(\tau)\ln y_j(\tau)\Big)\exp\Big({\int_\tau^r \ln y_{j}(s)\, \mathrm{d}s}\Big)\notag\\
    &=-\frac{\theta_{j+1}}{\theta_j} \phi_{k,j+1,y}'(\tau)\ln y_{j+1}(\tau)\exp\Big({\int_\tau^r \ln y_j(s)\, \mathrm{d}s}\Big),\notag
    \end{align}
    where the last equality comes from the equality in \eqref{phi''}.
    We conclude part \ref{phi-identity-j} by doing $r\to 1$ and then integrating $\tau$ between $t$ and one: We use that $y_j(r)\to 0$ in $\nls_k(\theta)$, therefore $\phi_{k+1,j,y}(r)\to 0$, $\phi_{k+1,j+1,y}(r)\to 0$, and then $\phi_{k,j,y}'(r)\to k!(1-\theta_{j+1}/\theta_j)$, using the second equality in \eqref{ode-new-phi}.
\end{proof}
\begin{proposition}\label{prop:monotonicity}
Consider $\ell\in [k-1]$, and let $\theta_{\ell:k}$ be such that there is a solution $y=(y_{\ell},\ldots,y_k)$ for $\nls_{\ell,k}(\theta_{\ell:k})$, and let $j\in \{\ell,\ldots,k-1\}$.
If $y_{j+1}$ is non-increasing, and if there is $t_1\in [0,1)$ such that $y_j'(t_1)<0$ and $y_j(t_1) <1$, then $y_{j}'(t) < 0$ for all $t\in [t_1,1)$.
\end{proposition}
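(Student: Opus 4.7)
The plan is to argue by contradiction through a monotonicity argument applied to a suitably rescaled version of $\phi'_{k,j,y}$, in the same spirit as the integrating-factor manipulation of Proposition~\ref{prop:phi-identity}. Suppose toward contradiction that the set $S = \{t \in [t_1,1) : y_j'(t) \ge 0\}$ is nonempty, and let $t_2 = \inf S$. Since $y_j$ is $C^1$, the set $S$ is closed and $y_j'$ is continuous, so $t_2 \in S$ and $y_j'(t_2) = 0$; in particular $t_2 > t_1$. Moreover, $y_j' < 0$ on $[t_1, t_2)$, so $y_j$ is strictly decreasing there, and in particular $y_j(t) \le y_j(t_1) < 1$ for every $t \in [t_1, t_2]$. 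Using the formula $\phi'_{k,j,y}(t) = (-\ln y_j(t))^{k-1} y_j'(t)$ from \eqref{phi-derivative}, this yields $\phi'_{k,j,y}(t_1) < 0$ and $\phi'_{k,j,y}(t_2) = 0$.

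Next, differentiating the identity for $\phi'_{k,j,y}$ in \eqref{ode-new-phi} and using the relation $\phi'_{k+1,\ell,y} = -\phi'_{k,\ell,y}\ln y_\ell$ noted after \eqref{phi-derivative}, I obtain
\begin{equation*}
\phi''_{k,j,y}(t) = \phi'_{k,j,y}(t)\ln y_j(t) - \frac{\theta_{j+1}}{\theta_j}\phi'_{k,j+1,y}(t)\ln y_{j+1}(t).
\end{equation*}
The natural integrating factor is $I(t) = \exp\bigl(\int_t^{t_2}\ln y_j(s)\,\mathrm{d}s\bigr)$ on $[t_1, t_2]$, since then $H(t) := \phi'_{k,j,y}(t)\, I(t)$ satisfies
\begin{equation*}
H'(t) = -\frac{\theta_{j+1}}{\theta_j}\,\phi'_{k,j+1,y}(t)\,\ln y_{j+1}(t)\, I(t),
\end{equation*}
the $\phi'_{k,j,y}\ln y_j$ contribution having been cancelled exactly by $I'/I = -\ln y_j$.

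To finish, I check the sign of $H'$. Because $y_{j+1}$ is non-increasing with $y_{j+1}(0) = 1$, one has $y_{j+1}(t) \le 1$ and hence $\ln y_{j+1}(t) \le 0$ on $[t_1, t_2]$; and $\phi'_{k,j+1,y}(t) = (-\ln y_{j+1}(t))^{k-1}y_{j+1}'(t) \le 0$ by the same formula \eqref{phi-derivative}. Thus $H'(t) \le 0$, so $H$ is non-increasing on $[t_1, t_2]$. But $H(t_2) = \phi'_{k,j,y}(t_2)\cdot 1 = 0$, while $H(t_1) < 0$ because $\phi'_{k,j,y}(t_1) < 0$ and $I(t_1) > 0$; this contradicts $H(t_1) \ge H(t_2)$, so $S$ must be empty and the claim follows. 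The main obstacle is guessing the right integrating factor: it must kill the $\phi'_{k,j,y}\ln y_j$ term so that the remaining quantity has a definite sign dictated solely by the monotonicity hypothesis on $y_{j+1}$. Once this factor is in hand, everything reduces to checking elementary signs.
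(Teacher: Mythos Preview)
Your proof is correct, but it takes a genuinely different route from the paper's. The paper argues by contradiction as well, but instead of using an integrating factor it locates a minimum of $y_j$ on $[t_1,t_2]$ and finds nearby points $t_1'<t_2'$ with $y_j(t_1')=y_j(t_2')<1$, $y_j'(t_1')<0$, and $y_j'(t_2')\geq 0$; since $y_j(t_1')=y_j(t_2')$ forces $\phi_{k+1,j,y}(t_1')=\phi_{k+1,j,y}(t_2')$, and $y_{j+1}$ non-increasing gives $\phi_{k+1,j+1,y}(t_1')\geq \phi_{k+1,j+1,y}(t_2')$, plugging directly into the first-order ODE \eqref{ode-new-phi} yields $\phi_{k,j,y}'(t_1')\geq \phi_{k,j,y}'(t_2')$, a contradiction. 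Your approach instead differentiates the ODE once more and uses the integrating factor $I(t)=\exp(\int_t^{t_2}\ln y_j)$, exactly the device from Proposition~\ref{prop:phi-identity}, to make $H=\phi'_{k,j,y}\,I$ monotone. The paper's argument is slightly more elementary in that it only invokes the first-order equation and avoids the second derivative, while yours is cleaner in that it sidesteps the somewhat delicate construction of the level-set pair $(t_1',t_2')$ and works directly with the original endpoints $t_1,t_2$. One minor imprecision: $S$ is closed only relative to $[t_1,1)$, not in $\mathbb{R}$, but this is all you need since $\inf S\in[t_1,1)$.
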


\begin{proof}
We prove the result by contradiction. 
Suppose there exists $t_2\in (t_1,1)$ such that $y_j'(t_2)\geq 0$. 
By the continuity of $y_j$, the value $\min\{ y_j(t): t\in [t_1,t_2]  \}$ is well-defined, the minimum in $[t_1,t_2]$ is attained at $t'\in [t_1,t_2]$, and $y_j(t')<1$ since $y_j(t_1)<1$. Then, in a neighborhood of $t'$, there is $t_1'<t_2'$ such that $y_j(t_1')=y_j(t_2') < 1$ and $y_j'(t_1')<0$ and $y_j'(t_2')\geq 0$. Then,
\begin{align*}
        0  > (-\ln y_{j}(t_1'))^{r-1}y_{j}'(t_1')=\phi_{k,j,y}'(t_1')  &= k! \left( 1 - \frac{\theta_{j+1}}{\theta_j} \right) - \phi_{k+1,j,y}'(t_1') + \frac{\theta_{j+1}}{\theta_j}\phi_{k+1,j+1,y}(t_1') \\
        & \geq k! \left( 1 - \frac{\theta_{j+1}}{\theta} \right) - \phi_{k+1,j,y}'(t_2') + \frac{\theta_{j+1}}{\theta_j}\phi_{k+1,j+1,y}(t_2')\\
        & = \phi_{k,j,y}'(t_2'),
\end{align*}
where in the second inequality we used that $y_{j+1}$ is non-increasing. From here, the contradiction follows since $\phi_{k,j,y}'(t_2')=(-\ln y_{j}(t))^{r-1}y_{j}'(t_2')\ge 0$.
\end{proof}

\begin{proposition}\label{prop-inductive-step}
Consider $\ell\in [k-1]$, and let $\theta_{\ell:k}$ be such that there is a solution $y=(y_{\ell},\ldots,y_k)$ for $\nls_{\ell,k}(\theta_{\ell:k})$.
Then, the following conditions are necessary: For every $j\in \{\ell,\ldots,k\}$, $y_j$ is strictly decreasing in $[0,1)$, $\theta_j<\theta_{j+1}$ for all $j<k$, and $\theta_k<1/k$.
\end{proposition}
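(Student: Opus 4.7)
I would prove the three necessary conditions simultaneously by downward induction on $j$ from $k$ down to $\ell$. The hypothesis at stage $j$ would include that $\theta_{j+1}, \ldots, \theta_k$ are all positive, that $y_{j+1}, \ldots, y_k$ are strictly decreasing on $[0,1)$, and that $\theta_{j+1} < \cdots < \theta_k < 1/k$. The base case $j=k$ rests on the autonomous structure of the one-variable ODE for $y_k$, whereas the inductive step combines Propositions \ref{prop:phi-identity} and \ref{prop:monotonicity}.

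\textbf{Base case.} The ODE for $y_k$ is autonomous: $\phi_{k,k,y}'(t)$ depends on $t$ only through $y_k(t)$. If $\theta_k < 0$ then $k!(1-1/(k\theta_k)) > k!$, so $\phi_{k,k,y}'(t) > 0$ whenever $y_k(t) \in (0,1)$, forcing $y_k$ to increase once it drops below $1$ and contradicting $y_k \to 0$; hence $\theta_k > 0$. Strict monotonicity of $y_k$ follows from autonomy: $y_k(t_1) = y_k(t_2) = c \in (0,1)$ would force $y_k'(t_1) = y_k'(t_2)$, ruling out interior local extrema; and if $y_k'(t^\star) = 0$ for some $c = y_k(t^\star) \in (0,1)$ then $c$ is an ODE equilibrium and uniqueness traps $y_k$ at $c$, contradicting $y_k \to 0$. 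Applying Proposition \ref{prop:phi-identity}(i) at any interior $t$ yields $\phi_{k,k,y}'(t)\exp\bigl(\int_t^1 \ln y_k(s)\, ds\bigr) = k!(1 - 1/(k\theta_k))$; the left-hand side is the product of a strictly negative quantity and a strictly positive one, so the right-hand side is negative, yielding $\theta_k < 1/k$.

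\textbf{Inductive step.} Assume the claims for $j+1, \ldots, k$. To establish strict decrease of $y_j$, I would first rule out $y_j \equiv 1$ on any initial interval $[0, t_0]$ with $t_0 > 0$: there the ODE would force $0 = \phi_{k,j,y}'(t) = -(\theta_{j+1}/\theta_j)\bigl(k! - \Gamma_{k+1}(-\ln y_{j+1}(t))\bigr)$, but the parenthesized factor is strictly positive for $t > 0$ (by strict decrease of $y_{j+1}$) and $\theta_{j+1}/\theta_j \neq 0$. Hence $y_j$ takes values less than $1$ arbitrarily close to $0$; combined with $y_j \to 0$ and the mean value theorem, we obtain $t_1 \in (0,1)$ with $y_j(t_1) < 1$ and $y_j'(t_1) < 0$. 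Proposition \ref{prop:monotonicity} yields $y_j' < 0$ on $[t_1, 1)$, and a limiting argument pushing $t_1 \to 0$ extends strict decrease to $[0,1)$. Positivity of $\theta_j$ then follows: if $\theta_j < 0$ the ODE gives $\phi_{k,j,y}'(t) > 0$ whenever $y_j(t), y_{j+1}(t) \in (0,1)$, contradicting the strict decrease just established.

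\textbf{Deriving $\theta_j < \theta_{j+1}$ and main obstacle.} I would then apply Proposition \ref{prop:phi-identity}(ii). The left-hand side $\phi_{k,j,y}'(t)\exp\bigl(\int_t^1 \ln y_j(s)\, ds\bigr)$ is strictly negative by strict decrease of $y_j$. On the right-hand side, $\phi_{k,j+1,y}'(\tau) < 0$ and $\ln y_{j+1}(\tau) < 0$ by the inductive hypothesis on $y_{j+1}$, so the integrand is positive; multiplied by $\theta_{j+1}/\theta_j > 0$ the integral term is strictly positive. Hence $k!(1 - \theta_{j+1}/\theta_j) < 0$, i.e., $\theta_j < \theta_{j+1}$. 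The main technical obstacle is the degenerate boundary behavior: at $t = 0$ the equality $\phi_{k,j,y}'(0) = 0$ hides information about the initial direction of $y_j$, so pushing the monotonicity conclusion of Proposition \ref{prop:monotonicity} back to $t = 0$ requires a delicate limiting argument; and at $t = 1$ one must verify finiteness of $\int_t^1 \ln y_j(s)\, ds$ so that the exponential factors appearing in Proposition \ref{prop:phi-identity} are positive, which ultimately hinges on the decay rate of $y_j$ as $t \uparrow 1$.
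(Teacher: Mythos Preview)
Your overall plan matches the paper's: downward induction on $j$, with Propositions~\ref{prop:phi-identity} and~\ref{prop:monotonicity} doing the heavy lifting. Two organizational differences are harmless. In the inductive step you establish strict decrease of $y_j$ \emph{before} $\theta_j<\theta_{j+1}$, whereas the paper reverses the order; both work since Proposition~\ref{prop:monotonicity} carries no sign hypothesis on $\theta_j$. For the base case you argue monotonicity of $y_k$ via autonomy and ODE uniqueness at equilibria, while the paper first extracts $\theta_k<1/k$ from Proposition~\ref{prop:phi-identity}\ref{phi-identity-k} applied at a single point where $y_k'<0$, and then reads off $\phi_{k,k,y}'<0$ everywhere from the same identity; both are valid.

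The genuine gap is exactly the one you flag at the end, and it shows up earlier than you realize. Ruling out $y_j\equiv 1$ on an initial interval does \emph{not} yield ``$y_j$ takes values less than $1$ arbitrarily close to $0$'': a priori $y_j$ could exceed $1$ near $0$, or oscillate. The existence of \emph{some} $t_1$ with $y_j(t_1)<1$ and $y_j'(t_1)<0$ follows already from the boundary data and the mean value theorem; pushing $t_1\to 0$ is the nontrivial step. The paper handles it by setting $t'=\inf\{t: y_j'(t)<0,\ y_j(t)<1\}$ and, assuming $t'>0$, separately excluding (a) $y_j(s)>1$ for some $s\in(0,t')$---the ODE forces a contradiction at the next time $y_j$ hits $1$, since $y_{j+1}$ is already strictly below $1$ there; and (b) $y_j(s)<1$ for some $s\in[0,t']$---this manufactures a point violating the minimality of $t'$. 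Only your case $y_j\equiv 1$ on $(0,t']$ then remains, which forces $y_{j+1}\equiv 1$ there and contradicts the induction. Filling in (a) and (b) closes your argument. Your side concern about finiteness of $\int_t^1\ln y_j(s)\,\mathrm{d}s$ is legitimate, and the paper leaves this implicit as well.
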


\begin{proof}
We proceed by induction.
Since $y_k(0)=1$, $y_k(t)\to 0$ for $t\rightarrow 1$ from the left, and $y_k$ is differentiable in $(0,1)$, there is a value $t_k\in (0,1)$ such that $y_k'(t_k)<0$ and $y_k(t_k) < 1$. 
Since $\phi_{k,k,y}'(t)=y_k'(t)(-\ln y_k(t))^{k-1}$, we have $\phi_{k,k,y}'(t_k) < 0$. 
Then, from Proposition \ref{prop:phi-identity}\ref{phi-identity-k}, it must be that $1-1/k\theta_k<0$, that is, $\theta_k < 1/k$. 
Together with Proposition \ref{prop:phi-identity}\ref{phi-identity-k}, this implies that for every $t\in (0,1)$ it holds $\phi_{k,k,y}'(t) < 0$, that is, $\phi_{k,k,y}=\Gamma_k(-\ln y_k)$ is strictly decreasing in $(0,1)$. 
We conclude that $y_k$ strictly decreases in $[0,1)$. 

Assume inductively that $y_{j+1},\ldots,y_k$ are strictly decreasing for some $j<k$. 
We will show that $\theta_j< \theta_{j+1}$ and $y_j$ is strictly decreasing. 
Note that for every $t\in (0,1)$, we have
\begin{equation}
\phi_{k,j+1,y}'(t) \ln y_{j+1}(t) = -y_{j+1}'(t)(-\ln y_{j+1}(t))^k > 0,\label{induction-ineq}
\end{equation}
where the inequality follows by our inductive assumption and the equality by the derivative formula in \eqref{phi-derivative}. Now, if $\theta_{j+1}\le  \theta_{j}$, for every $t\in (0,1)$ we have
\begin{align}
&\phi_{k,j,y}'(t)\exp\Big(\int_t^1 \ln y_{j}(s)\, \mathrm{d}s\Big)\notag\\
&=k!\left( 1 - \frac{\theta_{j+1}}{\theta_j} \right)+ \frac{\theta_{j+1}}{\theta_j} \int_t^1 \phi_{k,j+1,y}'(\tau)\ln y_{j+1}(\tau) \exp\Big(\int_\tau^1 \ln y_{j}(s)\, \mathrm{d}s\Big) \, \mathrm{d}\tau\notag\\
&\ge \frac{\theta_{j+1}}{\theta_j} \int_t^1 \phi_{k,j+1,y}'(\tau)\ln y_{j+1}(\tau) \exp\Big(\int_\tau^1 \ln y_{j}(s)\, \mathrm{d}s\Big) \, \mathrm{d}\tau\ge 0,\notag
\end{align}
where the first equality holds by Proposition \ref{prop:phi-identity}\ref{phi-identity-j}, the first inequality holds by $\theta_{j+1}\le \theta_j$, and in the last inequality we used inequality \eqref{induction-ineq} and the inductive assumption. 
Therefore, for every $t\in (0,1)$, we have $\phi_{k,j,y}'(t)\geq 0$, which cannot happen since the differentiability of $y_j$ and the border conditions imply that we can always find $t_j\in (0,1)$ such that $y_j'(t_j)<0$ and $y_j(t_j)<1$, i.e., 
$$\phi_{k,j,y}'(t_j)=y_j'(t_j)(-\ln y_j(t_j))^{k-1}<0.$$
We conclude that $\theta_j < \theta_{j+1}$. 

We prove next the monotonicity of $y_j$.
Consider $t'=\inf\{ t\in [0,1] : y_j'(t)<0  , y_j(t)<1  \}$, which is well-defined since the set is non-empty. 
If $t'=0$, then there is a sequence $(t_n)_{n\in \NN}$ in $(0,1)$ such that $t_n'\to 0$, $y_j'(t_n)<0$, and $y_j(t_n)<1$ for all $n\in \NN$. 
Then, since $y_{j+1}$ is strictly decreasing, by Proposition \ref{prop:monotonicity} we get that for all $n\in \NN$ and every $t\in [t_n',1)$ we have $y_j'(t)<0$. 
Since $t_n'\to 0$, we conclude that $y_j'$ is strictly decreasing in $[0,1)$.
Otherwise, suppose that $t'>0$. 
Then, $y_j'(t)\geq 0$ or $y_j(t)\geq 1$ for every $t\in (0,t')$. 
Assume that $y_j(s)>1$ for some $s\in (0,t')$. 
Then, since $\lim_{q\to 1}y_j(q)=0$, the continuity of $y_j$ and the fact that $y_j(t')\le 1$, imply the existence of a value $t''\in (0,t']$ such that $y_j(t'')=1$. 
Note that $0=y_j'(t'') (-\ln y_j(t'') )^{k-1}=\phi_{k,j,y}'(t'')$, and
\begin{align*}
\phi_{k,j,y}'(t'') &=k!-\phi_{k+1,j,y}(t'') -\frac{\theta_{j+1}}{\theta_j}\left(k!- \phi_{k+1,j+1}(t'') \right)\\
&=k!-\Gamma_{k+1}(0) -\frac{\theta_{j+1}}{\theta_j}\left(k!- \phi_{k+1,j+1}(t'') \right)\\
&=-\frac{\theta_{j+1}}{\theta_j}\left(k!- \Gamma_{k+1}(-\ln y_{j+1}(t'')) \right)< 0,
\end{align*}
which is a contradiction; the first equality holds from \eqref{ode-new-phi}, the second holds since $y_j(t'')=1$, the third since $\Gamma_{k+1}(0)=k!$, and the inequality follows from $y_{j+1}$ being strictly decreasing.
Therefore, we have $y_j(t)\leq 1$ for all $t\in (0,t')$, which further implies that $y_j(t)\leq 1$ for all $t\in (0,1)$. 

If $y_j(s)<1$ for some $s\in [0,t']$, then there exists $t'''\in (0,t')$ such that $y_j'(t''')<0$ and $y_j(t''')<1$, which contradicts the minimality of $t'$. 
Then, $y_j(t)=1$ for every $t\in (0,t']$.
But this implies that $\phi'_{k,j,y}(t)=0$ and $\phi_{k+1,j,y}(t)=k!$ for every $t\in (0,t']$, and therefore from \eqref{ode-new-phi} we get that $k!=\phi_{k+1,j+1,y}(t)=\Gamma_{k+1}(-\ln y_{j+1}(t))$ for every $t\in (0,t']$.
This implies that $y_{j+1}(t)=1$ for every $t\in (0,t']$, which contradicts the fact that $y_{j+1}$ is strictly decreasing.
We conclude that $t'=0$ and, therefore, $y_j$ is strictly decreasing.
This finishes the proof of the proposition.
\end{proof}

\begin{proof}[Proof of Lemma \ref{prop:nls}]
In what follows, we show that there is a choice of $\theta$ such that the system $\nls_k(\theta)$ has a solution.
We proceed inductively. 
We show that there is a solution to this system for an appropriate choice of $\theta_{j:k}$. Using this solution, we can extend it to a solution for $\nls_{j-1,k}(\theta_{j-1:k})$, where $\theta_{j-1:k}=(\theta_{j-1},\theta_{j:k})$ for an appropriate choice of $\theta_{j-1}$.
For every $\ell$ we denote by $y_{\ell}(1)$ the value $\lim_{t\uparrow 1}y_{\ell}(t)$.

We start with $j=k$. In this case, the $\nls_{k,k}(\theta_k)$ is the following system:
\begin{align}
    \Gamma_{k}(-\ln y_k)' & = k!-k!/(\theta_k k) - \Gamma_{k+1}(-\ln y_k),\notag\\
y_k(0) & = 1\text{ and } y_k(1)=0.\notag
\end{align}
We can analyze this system in the same way as the Hill and Kertz differential equation when $k=1$ (see, e.g., \cite{correa2021posted,brustle2022competition}).
There is a solution to this system if and only if $\theta_k$ satisfies the following integral equation:
\[
1 = \int_0^1 \frac{(-\ln y)^{k-1}}{ (k!/(k\theta_k)) - k! + \Gamma_{k+1}(-\ln y)  }\, \mathrm{d}y.
\]
This holds by noting that $\Gamma_{k}(-\ln y_k)'=(-\ln y_{k})^{k-1}y_{k}'$; we integrate over $[0,1]$ and use the border conditions in the system, and perform a change of variables.
From here, we get a unique $\theta^{\star}_k$ that satisfies the requirements since the value of the integral is monotone as a function of $\theta_k$, and there exists one for which the integral is exactly equal to 1.
Furthermore, there exists a unique solution $Y_k$ to $\nls_{k,k}(\theta_k^{\star})$.

Assume inductively that we have found a $\theta^{\star}_{j+1:k}$ where we have a solution $(Y_{j+1},\ldots,Y_k)$ to $\nls_{j+1,k}(\theta^{\star}_{j+1:k})$ for some $j<k$.
We now show that the system $\nls_{j,k}(\theta_j,\theta^{\star}_{j+1},\ldots,\theta^{\star}_k)$ is feasible for a choice of $\theta_j$. 
This boils down to finding a solution for the following: 
\begin{align}
    \Gamma_{k}(-\ln Y_j)' & = k! - \Gamma_{k+1}(-\ln Y_j) - \frac{\theta^{\star}_{j+1}}{\theta_j}(k! -\Gamma_{k+1}(-\ln Y_{j+1})) \label{eq:system_for_j} \\
    Y_j(0) & = 1, \text{ and }Y_{j}(1)=0. \label{eq:system_for_j_initial_cond}
\end{align}
where $\theta^{\star}_{j+1}$ and $Y_{j+1}$ are given and satisfy $Y_{j+1}(0)=1, Y_{j+1}(1)=0$. 
By Proposition \ref{prop-inductive-step} we have that $Y_{j+1},\ldots,Y_k\in [0,1]$ are strictly decreasing and $\theta^{\star}_{j+1}< \cdots < \theta^{\star}
_{k }< 1/k$.

Let $\theta_{j}>0$ and consider the following Euler approximation to a candidate solution to \eqref{eq:system_for_j}-\eqref{eq:system_for_j_initial_cond}. Let $m\in \ZZ_+$ be non-negative and consider the following recursion: $y_{m,j,0} = 1$, and $\Gamma_{k}(-\ln y_{m,j,t+1})$ is equal to
\begin{equation}\Gamma_{k}(-\ln y_{m,j,t}) + \frac{1}{m}\left( k! - \Gamma_{k+1}(-\ln y_{m,j,t}) - \frac{\theta^{\star}_{j+1}}{\theta_{j}}\Big(k! - \Gamma_{k+1}(-\ln Y_{j+1}(t/m))\Big)   \right).\label{eq:Euler_iterate}\end{equation}
Note that the sequence
is well-defined for $y_{m,j,t}\geq 0$. Let $t'=\max\{ t\in [m]\cup\{0\} : y_{m,j,t} \geq 0 \}$. 
For $t=0$, we have $\Gamma_{k}(-\ln y_{m,j,1}) = (k-1)!$ and therefore $y_{m,j,1}=1$.
For $t=1$, we have
\begin{align}\label{eq:euler-approx_t=1}
\Gamma_{k}(-\ln {y_{m,j,2}}) = (k-1)!  - \frac{\theta^{\star}_{j+1}}{\theta_j m }(k! - \Gamma_{k+1}(-\ln Y_{j+1}(1/m))) < (k-1)!,
\end{align}
which implies that $y_{m,j,2}< 1$ for any $\theta_j>0$. 
We note that if $\theta_j \to \infty$, then $\Gamma_{k}(-\ln y_{m,j,2}) \to (k-1)!$. Inductively, we can show that for $\theta_{j}\to \infty$, ${y_{m,j,t}=1}$ for all $t$; in particular, $t'=m$.

We now show that $y_{m,j,t}$ is decreasing in $t$ as long as $y_{m,j,t}\geq 1/m$ and $m$ is such that $m/\ln (m)\geq 1$ which holds for $m\geq 2$. We know this is true for $t\in \{1,2\}$. We assume the result holds from 1 up to $t$, and we show next the result holds for $t+1$, with $t\geq 2$.
Observe that
\begin{align}\label{eq:euler-approx_difference}
    &\Gamma_{k}(-\ln y_{m,j,t+1}) - \Gamma_{k}(-\ln y_{m,j,t}) \\
    & = \frac{1}{m} \left( k! - \Gamma_{k+1}(-\ln y_{m,j,t}) - \frac{\theta^{\star}_{j+1}}{\theta_j}(k!- \Gamma_{k+1}(-\ln Y_{j+1} (t/m) )) \right) \label{eq:euler_approx_second_line} \\
    & = \frac{1}{m} \sum_{\tau=0}^{t-1}(\Gamma_{k+1}(-\ln y_{m,j,\tau}) -\Gamma_{k+1}(-\ln y_{m,j,\tau+1}) ) \nonumber\\
    & \quad - \frac{\theta^{\star}_{j+1}}{m\theta_j}\sum_{\tau=0}^{t-1}(\Gamma_{k+1}(-\ln Y_{j+1}(\tau/m)) -\Gamma_{k+1}(-\ln Y_{j+1}((\tau+1)/m)) ) \nonumber \\
    & = \Gamma_{k}(-\ln y_{m,j,t}) - \Gamma_{k}(-\ln y_{m,j,t-1}) + \frac{1}{m}(\Gamma_{k+1} (-\ln y_{m,j,t-1}) -\Gamma_{k+1}(-\ln y_{m,j,t}) ) \label{eq:euler_approx_second_last_line}\\
    & \quad - \frac{\theta^{\star}_{j+1}}{m\theta_j}( \Gamma_{k+1}( -\ln Y_{j+1}((t-1)/m)) -\Gamma_{k+1}(-\ln Y_{j+1}(t/m)))\nonumber ,
\end{align}
where the first equality holds by writing the previous expression using two telescopic sums, and the third equality holds by rearranging terms and using the Euler approximation recursion.
Note that $y_{m,j,t+1} < y_{m,j,t}$ if and only if $\Gamma_{k}(-\ln y_{m,j,t+1}) < \Gamma_{k}(-\ln y_{m,j,t})$. Since $Y_{j+1}$ is strictly decreasing, the result follows after the following claim.
The proof of Claim~\ref{claim-euler} is in Appendix~\ref{app:LB}.

\begin{claim}\label{claim-euler}
    $\Gamma_{k}(-\ln y_{m,j,t}) - \Gamma_{k}(-\ln y_{m,j,t-1})
     + \frac{1}{m}(\Gamma_{k+1} (-\ln y_{m,j,t-1}) -\Gamma_{k+1}(-\ln y_{m,j,t}) )\leq 0$.
\end{claim}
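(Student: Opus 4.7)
The plan is to rewrite the claimed inequality as a single integral whose sign is determined by an elementary pointwise comparison. Set $a=-\ln y_{m,j,t-1}$ and $b=-\ln y_{m,j,t}$. By the inductive hypothesis, $y_{m,j,t}\leq y_{m,j,t-1}$, so $a\leq b$; and by the standing assumption $y_{m,j,t}\geq 1/m$, we have $b\leq \ln m$. The assumption $m/\ln m\geq 1$ then gives $b\leq \ln m\leq m$.

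Next, I would express both differences as integrals using the fact that $\Gamma_\ell'(x)=-x^{\ell-1}e^{-x}$. This yields
\begin{align*}
\Gamma_k(b)-\Gamma_k(a) &= -\int_a^b x^{k-1}e^{-x}\,\mathrm{d}x,\\
\Gamma_{k+1}(a)-\Gamma_{k+1}(b) &= \int_a^b x^{k}e^{-x}\,\mathrm{d}x.
\end{align*}
Combining these, the left-hand side of the claim becomes
\[
\Gamma_k(b)-\Gamma_k(a)+\frac{1}{m}\bigl(\Gamma_{k+1}(a)-\Gamma_{k+1}(b)\bigr)=\int_a^b x^{k-1}e^{-x}\Bigl(\frac{x}{m}-1\Bigr)\,\mathrm{d}x.
\]

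The main (and only nontrivial) observation is then pointwise: for every $x\in[a,b]$, we have $x\leq b\leq m$, hence $x/m-1\leq 0$, so the integrand is nonpositive on the whole interval. Since $a\leq b$, the integral itself is nonpositive, which is exactly the inequality claimed. I do not anticipate any real obstacle here; the whole argument reduces to the range bound $b\leq \ln m\leq m$ coming from $y_{m,j,t}\geq 1/m$ (which is precisely the regime in which the claim is needed to propagate the induction on monotonicity of $y_{m,j,\cdot}$), together with the elementary monotonicity of $\Gamma_k$ and $\Gamma_{k+1}$ encoded in the integral representation above.
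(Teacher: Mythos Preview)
Your proof is correct and follows essentially the same approach as the paper. Both arguments set $a=-\ln y_{m,j,t-1}\le b=-\ln y_{m,j,t}$, express the $\Gamma$-differences via $\int_a^b x^{k-1}e^{-x}\,\mathrm{d}x$ and $\int_a^b x^{k}e^{-x}\,\mathrm{d}x$, and use the bound $b\le \ln m\le m$; the paper phrases the comparison as a ratio of the two integrals bounded below by $m\inf_{x\in[a,b]}1/x\ge 1$, whereas you combine them into the single integral $\int_a^b x^{k-1}e^{-x}(x/m-1)\,\mathrm{d}x$ with nonpositive integrand.
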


We now show that $\partial y_{m,j,t}/\partial \theta_j \geq 0$ for all $t\leq t'$ and such that $y_{m,j,t}\geq 1/m$. Furthermore, we show that for $t\geq 1$ as before, we have $\partial y_{m,j,t}/\partial \theta_j > 0$. We proceed by induction in $t$. The result is clearly true for $t=0$. Suppose that $\partial y_{m,j,t}/\partial \theta_{j} \geq 0$ and let's show the result for $t+1$. By deriving in $\theta_j$ in the Euler recursion~\eqref{eq:Euler_iterate}, we have
\begin{align*}
    &(-\ln y_{m,j,t+1})^{k-1} \frac{\partial y_{m,j,t+1}}{\partial \theta_j}\\
    & = \frac{\partial\,\,}{\partial \theta_j} \Gamma_{k}(-\ln y_{m,j,t}) -\frac{1}{m}\frac{\partial \,\,}{\partial \theta_j} \Gamma_{k+1}(-\ln y_{m,j,t}) + \frac{\theta_{j+1}^\star}{\theta_j^2m}(k!-\Gamma_{k+1}( -\ln Y_{j+1}(t/m) )) \\
    & = \frac{\partial}{\partial \theta_j}\Gamma_{k}(-\ln y_{m,j,t})\left(1+ \frac{1}{m}\ln y_{m,j,t}\right) + \frac{\theta_{j+1}^\star}{\theta_j^2m}(k! - \Gamma_{k+1}(-\ln Y_{j+1}(t/m))) \\
    & \geq \frac{\theta_{j+1}^\star}{\theta_j^2m}(k!-\Gamma_{k+1}(-\ln Y_{j+1}(1/m)))
\end{align*}
where we used that {$y_{m,j,t}\geq 1/m$}. 
{Since $y_{m,j,t}\in (0,1)$, it follows that $\partial y_{m,j,t+1}/\partial \theta_j > 0$.} 
Notice that the right-hand side of the inequality is independent of $t$ and grows as $1/\theta_{j}^2$. 
Hence, as $\theta_j\to 0$, we have that $t'\to 1$ and so $y_{m,j,t}\to 1$ for $t\leq t'$. As a byproduct of this analysis, we also see that $t'$ is strictly increasing in $\theta_j$. 
Now, let $\theta_j(m)$ be such that $2/m\geq y_{m,j,m-\sqrt{m}}\geq 1/m$. The next claim shows that $\theta_j(m)\leq \theta_{j+1}^\star$. We defer its proof to Appendix~\ref{app:LB}.
\begin{claim}\label{claim:bounded_theta_finite_m}
We have $\theta_j(m)\leq \theta_{j+1}^\star$.
\end{claim}


From the claim, we have that $\{\theta_j(m)\}_m$ is bounded. Thus, if we let $m$ tend to infinity, we can find a convergent subsequence $\{\theta_j(m_{\ell})\}_{\ell}$ with a limit denoted as $\theta^{\star}_j$.

Let $y_{\ell,j}:[0,1]\to [0,1]$ be the piece-wise linear interpolation of the points $\{ y_{m_\ell,j,t} \}_{t}$, where $y_{m_{\ell},j,t}$ is assigned as the image to the point $t/m_{\ell}\leq 1$. By a standard argument we can show that the sequence $\{y_{\ell,j}\}_{\ell}$ has a uniformly convergent subsequence to a function $Y_j:[0,1]\to [0,1]$ (see, e.g., \citep[Chapter 3]{kolmogorov1975introductory}). Furthermore, this function $Y_j$ is differentiable and satisfies~\eqref{eq:system_for_j}-\eqref{eq:system_for_j_initial_cond}. Hence, we have found $\theta^{\star}_{j:k}$ such that the system $\nls_{k,j}(\theta^{\star}_{j:k})$ is feasible.
\end{proof}

\subsection{Feasibility Analysis and Proof of Lemma \ref{lem:main-apx-lemma}}\label{sec:apx-lemma}

In this subsection, we prove Lemma~\ref{lem:main-apx-lemma}.
The crux of the proof follows by analyzing the functions $\alpha_{t,j}(q) = (1+12 \ln (\barnk)^2/\barnk)^{k-j+1} \alpha_{t,j}^\star(q)$. These functions hold the following two claims:

\begin{claim}\label{claim:bound_for_alpha_k} There is $n_0\geq 1$ such that for any $n\geq n_0$ and for any $t\in \{0,\ldots,\barnk-1\}$, we have
\begin{align*}
    \int_0^1 \alpha_{t+1,k}(q) \, \mathrm{d}q + \sum_{\tau \leq t}\int_0^1 q \alpha_{\tau,k}(q)\, \mathrm{d}q \leq 1 + 12\frac{\ln(\barnk)^2}{\barnk}.
\end{align*}  
\end{claim}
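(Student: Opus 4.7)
\textbf{Proof proposal for Claim~\ref{claim:bound_for_alpha_k}.}
The plan is to evaluate the left-hand side in closed form using the piecewise structure of $\alpha_{t,k}$ (after undoing the uniform scaling by $(1+12\ln(\barnk)^2/\barnk)^{-1}$), then to telescope using the nonlinear system $\nls_k(\theta^{\star})$, and finally to control the Poisson-type approximation error for $g_{n,k}$ by $O(\ln(\barnk)^2/\barnk)$. Write $I_t$ for the left-hand side. For $\tau\ge 2$, $\alpha_{\tau,k}(q)=-\theta_k^{\star}g_{n,k}'(q)\mathbf{1}_{(\varepsilon_{k,\tau-1},\varepsilon_{k,\tau})}(q)$, while $\alpha_{1,k}$ carries the extra boundary bump $B_k\ln(\barnk)\mathbf{1}_{[0,1/\barnk]}$. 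Applying integration by parts on $\int q g_{n,k}'(q)\,\mathrm{d}q$ on each interval $(\varepsilon_{k,\tau-1},\varepsilon_{k,\tau})$ and telescoping in $\tau$ (using $\varepsilon_{k,0}=0$) yields
\[
I_t \;=\; \theta_k^{\star}\bigl(g_{n,k}(\varepsilon_{k,t})-g_{n,k}(\varepsilon_{k,t+1})\bigr) \;-\; \theta_k^{\star}\varepsilon_{k,t}g_{n,k}(\varepsilon_{k,t}) \;+\; \theta_k^{\star}\!\int_0^{\varepsilon_{k,t}}\!g_{n,k}(q)\,\mathrm{d}q \;+\; \frac{B_k\ln\barnk}{2\barnk^{2}}.
\]

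Next, I would invoke a Poisson approximation for $g_{n,k}$: uniformly in $x\in[0,c\ln\barnk]$,
$g_{n,k}(x/\barnk) = \barnk\,\Gamma_k(x)/(k-1)!\cdot\bigl(1+O(\ln(\barnk)^2/\barnk)\bigr)$.
Since $\barnk\varepsilon_{k,t}=-\ln y_{k,t}=-\ln Y_k(t/\barnk)$, this translates the expression above into the language of the function $\phi_{k,k,Y}(s)=\Gamma_k(-\ln Y_k(s))$. Setting $s_t=t/\barnk$, the first two terms become $(\theta_k^{\star}\barnk/(k-1)!)\bigl(\phi_{k,k,Y}(s_t)-\phi_{k,k,Y}(s_{t+1})\bigr)$, which by a mean-value argument equals $-(\theta_k^{\star}/(k-1)!)\,\phi_{k,k,Y}'(s_t)$ up to a second-order correction of size $O(\|\phi_{k,k,Y}''\|_\infty/\barnk)$, controllable via the ODE~\eqref{ode1}. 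The third term is exactly cancelled by an integration by parts: substituting $q=-\ln Y_k(s)/\barnk$ and using $\phi_{k,k,Y}''(s)=\phi_{k,k,Y}'(s)\ln Y_k(s)$ (a consequence of~\eqref{ode1} and the derivative formula $\phi_{k,k,Y}'=(-\ln Y_k)^{k-1}Y_k'$) gives
\[
\theta_k^{\star}\!\int_0^{\varepsilon_{k,t}}\!g_{n,k}(q)\,\mathrm{d}q \;=\; \frac{\theta_k^{\star}\barnk\varepsilon_{k,t}\phi_{k,k,Y}(s_t)}{(k-1)!} \;+\; \frac{\theta_k^{\star}}{(k-1)!}\bigl(\phi_{k,k,Y}'(s_t)-\phi_{k,k,Y}'(0)\bigr) \;+\; \mathrm{err},
\]
and the initial condition $\phi_{k,k,Y}'(0)=k!-\Gamma_{k+1}(0)-k!/(k\theta_k^{\star})=-(k-1)!/\theta_k^{\star}$ feeds the crucial constant $1$ into $I_t$.

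Combining these expansions, the leading terms involving $\phi_{k,k,Y}(s_t)$ and $\phi_{k,k,Y}'(s_t)$ cancel pairwise, and one is left with $I_t=1+E_t$ where $E_t$ collects (i) the $O(\ln(\barnk)^2/\barnk)$ multiplicative error from the Poisson approximation applied to $g_{n,k}(\varepsilon_{k,t})$ and to $\int_0^{\varepsilon_{k,t}}g_{n,k}(q)\,\mathrm{d}q$, (ii) the Taylor remainder $\barnk(\phi_{k,k,Y}(s_t)-\phi_{k,k,Y}(s_{t+1}))+\phi_{k,k,Y}'(s_t)$ of size $O(1/\barnk)$ (bounded by $\|\phi_{k,k,Y}''\|_\infty/\barnk$, which is $O(1)$ by~\eqref{ode1} and Lemma~\ref{prop:nls}\ref{prop:nls-y-dec}), and (iii) the explicit boundary correction $B_k\ln(\barnk)/(2\barnk^{2})$. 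Each source can be bounded by an absolute constant times $\ln(\barnk)^{2}/\barnk$, and the constants can be packed so that $E_t\le 12\ln(\barnk)^2/\barnk$ for all sufficiently large $n$ and all $t\in\{0,\ldots,\barnk-1\}$.

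The main obstacle is bounding the Poisson approximation uniformly in $t$, in particular for indices $t$ close to $\barnk$ where $\varepsilon_{k,t}$ could be non-negligible; for those indices one needs a separate tail estimate exploiting the fact that $g_{n,k}(q)$ decays like $(1-q)^{n-k}$, so the contribution of $g_{n,k}(\varepsilon_{k,t+1})$ and the tail of $\int_0^{\varepsilon_{k,t}}g_{n,k}$ is negligible. A secondary technical point is relating $n$ and $\barnk=n-k-1$, which only costs a $1+O(k/n)$ factor and can be absorbed into the same error budget. Proposition~\ref{prop:mon-phi} (guaranteeing $\varepsilon_{k,t}\le 1$) ensures the decomposition above is legitimate throughout the range.
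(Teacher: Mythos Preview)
Your outline lands on the same endpoint as the paper---write $I_t$ in terms of $\Phi_{k,k}=\Gamma_k(-\ln Y_k)$, feed in the value $\Phi_{k,k}'(0)=-(k-1)!/\theta_k^{\star}$ from \eqref{ode1}, and absorb a Poisson-vs-binomial discrepancy into the error budget---but two points in your execution are off.

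First, the statement that the Taylor remainder is controlled because ``$\|\phi_{k,k,Y}''\|_\infty$ is $O(1)$'' is false: from $\Phi_{k,k}''=\Phi_{k,k}'\,\ln Y_k$ and $\Phi_{k,k}'(t)\to k!(1-1/(k\theta_k^{\star}))<0$ as $t\uparrow 1$, together with $\ln Y_k(t)\to -\infty$, the second derivative blows up near $t=1$. On the range $[0,1-1/\barnk]$ it is only $O(\ln\barnk)$, not $O(1)$. The paper sidesteps the issue by using the \emph{sign}: $\Phi_{k,k}''\ge 0$ everywhere (this is exactly Claim~\ref{claim:gamma_kk}), so the Taylor remainder $\Phi_{k,k}'(t/\barnk)-\barnk(\Phi_{k,k}((t+1)/\barnk)-\Phi_{k,k}(t/\barnk))$ is $\le 0$ and can simply be dropped.

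Second, your error tracking is looser than needed. After integration by parts you carry the two terms $-\varepsilon_{k,t}g_{n,k}(\varepsilon_{k,t})$ and $\int_0^{\varepsilon_{k,t}}g_{n,k}$, each of size $O(\ln\barnk)$; the Poisson approximation contributes a multiplicative $(1+O(\ln^2\barnk/\barnk))$ on each, so the individual errors are $O(\ln^3\barnk/\barnk)$ and you would need them to cancel---which does not follow automatically since the multiplicative error depends on $q$. The paper avoids this entirely by never integrating by parts: it uses the identity $u\,g_{n,k}'(u)=\tfrac{k}{n+1}\,g_{n+1,k+1}'(u)$ (Proposition~\ref{prop:g-n-properties}\ref{gnk:recursive}) to rewrite $\int_0^{\varepsilon_{k,t}}q(-g_{n,k}')$ directly as an integral of $-g_{n+1,k+1}'$, then applies the one-sided bound $-g_{n,k}'(u)\le -n(1+4k^2/n)\Gamma_k(\barnk u)'/(k-1)!$ simultaneously to both integrals. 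This yields a single clean multiplicative factor $(1+O(k^2/n))$ with no cancellations required.
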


\begin{claim}\label{claim:bound_for_alpha_t}
There is $n_0\geq 1$ such that for any $n\geq n_0$, for any $j< k$, and for any $t\in \{0,\ldots,\barnk-1\}$, we have
$$\int_0^1 \alpha_{t+1,j}(q)\, \mathrm{d}q+ \sum_{\tau\leq t}\int_0^1 q \alpha_{\tau,j}(q)\, \mathrm{d}q \leq \left( 1+ 12 \frac{\ln(\barnk)^2}{\barnk}\right)\sum_{\tau\leq t}\int_0^1 q\alpha_{\tau,j+1}(q)\, \mathrm{d}q.$$
\end{claim}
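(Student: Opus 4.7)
The claim is the $j<k$ analog of Claim~\ref{claim:bound_for_alpha_k}; it mirrors constraint~\eqref{const:constr_t_P_n_k_restricted_n} with a multiplicative slack of $1+12\ln(\barnk)^2/\barnk$. The plan is to compute each integral on both sides explicitly via the definition of $\alpha^\star$, replace $g_{n,k}$ with its incomplete-gamma asymptotic under the substitution $q=-\ln y/\barnk$, and then invoke the second ODE in~\eqref{ode-new-phi} so that the ``$\theta_j^\star$-contributions'' on the left cancel, leaving precisely the right-hand side at leading order. The $B_j\ln(\barnk)\mathbf{1}_{[0,1/\barnk]}$ boundary piece in $\alpha^\star_{k-j+1,j}$ is built in so that the residual error has the correct sign, giving the slack factor $1+12\ln(\barnk)^2/\barnk$.

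\textbf{Step 1: closed-form integrals and telescoping.} Writing $C_j=(1+12\ln(\barnk)^2/\barnk)^{-(k-j+1)}$, for $\tau\ge k-j+2$ the definition of $\alpha^\star_{\tau,j}$ and integration by parts give
\[
\int_0^1\alpha_{\tau,j}(q)\,dq=C_j\theta_j^\star\bigl[g_{n,k}(\varepsilon_{j,\tau-1})-g_{n,k}(\varepsilon_{j,\tau})\bigr],
\]
\[
\int_0^1 q\,\alpha_{\tau,j}(q)\,dq=C_j\theta_j^\star\Bigl[\varepsilon_{j,\tau-1}g_{n,k}(\varepsilon_{j,\tau-1})-\varepsilon_{j,\tau}g_{n,k}(\varepsilon_{j,\tau})+\int_{\varepsilon_{j,\tau-1}}^{\varepsilon_{j,\tau}}g_{n,k}(q)\,dq\Bigr].
\]
Summing over $\tau\le t$ (including the boundary-piece contribution from $\tau=k-j+1$, which is $O(\ln(\barnk)/\barnk^2)$ against the $q$-weight) the first two groups telescope using $\varepsilon_{j,0}=0$, producing
\[
\sum_{\tau\le t}\int_0^1 q\,\alpha_{\tau,j}(q)\,dq=C_j\theta_j^\star\Bigl[-\varepsilon_{j,t}g_{n,k}(\varepsilon_{j,t})+\int_0^{\varepsilon_{j,t}}g_{n,k}(q)\,dq\Bigr]+O\!\left(\tfrac{\ln(\barnk)}{\barnk^2}\right).
\]

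\textbf{Step 2: gamma-function asymptotics and the ODE cancellation.} Since $g_{n,k}(u)/\barnk$ is a binomial tail, the Poisson approximation gives the uniform estimate $g_{n,k}(u)=\tfrac{\barnk}{(k-1)!}\Gamma_k(-\ln y)\bigl(1+O(\ln(\barnk)^2/\barnk)\bigr)$ for $u=-\ln y/\barnk$ in the relevant range $u\in[0,O(\ln(\barnk)/\barnk)]$ guaranteed by Proposition~\ref{prop:mon-phi}. Combining this with the primitive formula $\int_0^x\Gamma_k(s)\,ds=x\Gamma_k(x)+k!-\Gamma_{k+1}(x)$ (obtained via $(x\Gamma_k(x))'=\Gamma_k(x)-x^ke^{-x}$), the term $-\varepsilon_{j,t}g_{n,k}(\varepsilon_{j,t})$ exactly cancels the $-\ln(y_{j,t})\phi_{k,j,y}(t/\barnk)$ coming from the primitive, and we obtain
\[
\sum_{\tau\le t}\int_0^1 q\,\alpha_{\tau,j}(q)\,dq=\frac{C_j\theta_j^\star}{(k-1)!}\bigl[k!-\phi_{k+1,j,y}(t/\barnk)\bigr]+O\!\left(\tfrac{\ln(\barnk)^2}{\barnk}\right).
\]
For $\int_0^1\alpha_{t,j}(q)\,dq$, the mean-value theorem on $g_{n,k}$ over $[\varepsilon_{j,t-1},\varepsilon_{j,t}]$ and the identity $\phi_{k,j,y}'(s)=(-\ln y_j(s))^{k-1}y_j'(s)$ yield $\int_0^1\alpha_{t,j}(q)\,dq=-\tfrac{C_j\theta_j^\star}{(k-1)!}\phi_{k,j,y}'(t/\barnk)+O(\ln(\barnk)^2/\barnk)$. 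Substituting the second equation in~\eqref{ode-new-phi} for $\phi_{k,j,y}'$ and adding to the previous display, the $\theta_j^\star[k!-\phi_{k+1,j,y}(t/\barnk)]$ terms cancel, leaving
\[
\int_0^1\alpha_{t,j}(q)\,dq+\sum_{\tau\le t}\int_0^1 q\,\alpha_{\tau,j}(q)\,dq=\frac{C_j\theta_{j+1}^\star}{(k-1)!}\bigl[k!-\phi_{k+1,j+1,y}(t/\barnk)\bigr]+O\!\left(\tfrac{\ln(\barnk)^2}{\barnk}\right).
\]
Applying Step~1--2 at index $j+1$ gives the corresponding identity $\sum_{\tau\le t}\int_0^1 q\,\alpha_{\tau,j+1}(q)\,dq=\tfrac{C_{j+1}\theta_{j+1}^\star}{(k-1)!}[k!-\phi_{k+1,j+1,y}(t/\barnk)]+O(\ln(\barnk)^2/\barnk)$. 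Since $C_j=(1+12\ln(\barnk)^2/\barnk)^{-1}C_{j+1}$, the desired inequality follows whenever the absolute error constants are controlled by the $12\ln(\barnk)^2/\barnk$ slack.

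\textbf{Main obstacle.} The difficulty is purely quantitative: controlling the hidden constants so they fit inside the $12\ln(\barnk)^2/\barnk$ slack for every $j\in[k-1]$ and every $t\in\{0,\ldots,\barnk-1\}$. This requires (i) a Stirling-plus-Taylor expansion of $g_{n,k}$ around $u=0$, uniformly on $u\in[0,O(\ln(\barnk)/\barnk)]$, (ii) bounding the Euler-discretization error $\phi_{k,j,y}(t/\barnk)-\phi_{k,j,y}((t-1)/\barnk)-\phi_{k,j,y}'(t/\barnk)/\barnk$ using monotonicity of $Y_j$ from Lemma~\ref{prop:nls}\ref{prop:nls-y-dec}, and (iii) verifying that the boundary correction $B_j\ln(\barnk)\mathbf{1}_{[0,1/\barnk]}$ offsets the negative approximation error of $g_{n,k}$ at small $q$; the explicit size $B_j=(j-1)(4c_k^k+c_k/k!)$ with $c_k=24\,k!\max_\ell\theta_{\ell+1}^\star/\theta_\ell^\star$ is precisely what guarantees this sign control when the three error sources are added.
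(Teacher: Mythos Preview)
Your high-level strategy matches the paper's: pass from $g_{n,k}$ to incomplete-gamma functions via the Poisson/Stirling approximation, invoke identity~\eqref{ode2-phi} so that the $\theta_j^\star$-terms cancel, and absorb the residuals using the $B_j$ constants. The two proofs differ mainly in bookkeeping: you use integration by parts and the primitive $\int_0^x\Gamma_k=x\Gamma_k(x)+k!-\Gamma_{k+1}(x)$, whereas the paper converts the $q$-weighted integral through the identity $qg_{n,k}'(q)=\tfrac{k}{n+1}g_{n+1,k+1}'(q)$ (Proposition~\ref{prop:g-n-properties}\ref{gnk:recursive}). Either route lands on the same $k!-\Phi_{k+1,j+1}$ expression.

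There is, however, a real gap in how you close the argument. You approximate \emph{both} sides by gamma expressions with additive $O(\ln(\barnk)^2/\barnk)$ errors and then hope to absorb the difference into the multiplicative slack. This fails for small $t$ (namely $t$ near $k-j$): there $k!-\Phi_{k+1,j+1}(t/\barnk)=o(1/\barnk)$ since $\Phi_{k+1,j+1}(0)=k!$ and $\Phi_{k+1,j+1}'(0)=0$, so the right-hand side is much smaller than your $O(\ln(\barnk)^2/\barnk)$ error term, and no multiplicative factor can compensate. The paper avoids this trap by never approximating the right-hand side: after reaching $\tfrac{\theta_{j+1}^\star}{\theta_j^\star}(k!-\Gamma_{k+1}(-\ln y_{j+1,t}))$ via the ODE, it applies the \emph{lower} bound of Proposition~\ref{prop:g-n-properties}\ref{prop:limit_props_of_gnk-2} (through Claim~\ref{claim:UB_for_terms_}) to convert this \emph{back} into the exact integral $\theta_{j+1}^\star\int_0^{\varepsilon_{j+1,t}}q(-g_{n,k}')(q)\,\mathrm{d}q$, which is then related \emph{exactly} to $\sum_{\tau\le t}\int_0^1 q\alpha_{\tau,j+1}$. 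All residual errors become additive terms of the form $(\text{const})\cdot\ln(\barnk)/\barnk$, and the recursion $B_{j+1}-B_j=4c_k^k+c_k/k!$ cancels them identically. Your final paragraph correctly names the $B_j$ constants as the mechanism, but the scheme ``approximate both sides, then compare'' that you outline in Step~2 cannot actually exploit them; you need the ``up-then-back-down'' structure (upper-bound $g_{n,k}'$ to get to $\Gamma$, then lower-bound $\Gamma$ to get back to $g_{n,k}'$) so that the comparison happens at the level of exact $g_{n,k}$-integrals, not their gamma approximations.
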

Using these two claims, we show how to conclude Lemma~\ref{lem:main-apx-lemma} and then prove them.
\begin{proof}[Proof of Lemma \ref{lem:main-apx-lemma}] 
First, we have,
\begin{align*}
    & \int_0^1 \alpha_{t+1,k}^\star (q) \, \mathrm{d}q + \sum_{\tau\leq t}\int_0^1 q \alpha_{\tau,k}^\star(q)\, \mathrm{d}q \\
    & = \frac{1}{1+12 \ln(\barnk)^2/\barnk}\left( \int_0^1 \alpha_{t+1,k}(q) \, \mathrm{d}q + \sum_{\tau \leq t}\int_0^1 q \alpha_{\tau,k}(q)\, \mathrm{d}q \right)\leq 1,
\end{align*}
where we used Claim~\ref{claim:bound_for_alpha_k}, which shows that $\alpha^\star$ satisfies constraints~\eqref{const:constr_1_P_n_k_restricted_n}. 
Additionally,
\begin{align*}
    &\int_0^1 \alpha_{t,j}^\star(q)\, \mathrm{d}q+ \sum_{\tau\leq t}\int_0^1 q \alpha_{\tau,j+1}^\star(q)\, \mathrm{d}q \\
    & = \frac{1}{(1+12 \ln(\barnk)^2/\barnk)^{k-j+1}} \left( \int_0^1 \alpha_{t,j}(q)\, \mathrm{d}q+ \sum_{\tau\leq t}\int_0^1 q \alpha_{\tau,j+1}(q)\, \mathrm{d}q  \right)  \\
    & \leq \frac{1}{(1+12 \ln(\barnk)^2/\barnk)^{k-j}} \sum_{\tau\leq t} \int_0^1 q \alpha_{\tau,j+1}(q)\, \mathrm{d}q \\
    & = \sum_{\tau\leq t}\int_0^1 q \alpha_{\tau,j+1}^\star (q)\, \mathrm{d}q,
\end{align*}
where in the inequality we used Claim~\ref{claim:bound_for_alpha_t}; which shows that $\alpha^\star$ satisfies constraints~\eqref{const:constr_t_P_n_k_restricted_n}. This concludes the lemma. 
\end{proof}
We devote the rest of this section to prove Claims~\ref{claim:bound_for_alpha_k} and~\ref{claim:bound_for_alpha_t}.
The two claims follow by a careful analysis of the solution to $\nls_{n,k}(\theta^\star)$ as well as the function $g_{n,k}$. In the following proposition, we leave some useful properties satisfied by the function $g_{n,k}$. 
Recall that we set $\barnk= n-k-1$. 
The proof can be found in Appendix \ref{app:LB}.

\begin{proposition}\label{prop:g-n-properties}
    For every $u\in (0,1)$, the following holds: 
    \begin{enumerate}[itemsep=0pt,label=\normalfont(\roman*)]
        \item $g_{n,k}'(u) = - (n-k+1)(n-k)\binom{n}{k-1} (1-u)^{n-k-1}u^{k-1}$. \label{gnk:diff-neg}
        \item $g_{n+1,k+1}'(u) =\frac{n+1}{k} u g_{n,k}'(u)$.\label{gnk:recursive}  
        \item If $n> (k+1)+2(k+1)^2$, $-g_{n,k}'(u) \leq - n \left( 1 + 4 \frac{k^2}{n}\right) \frac{\Gamma_{k}(\barnk u)'}{(k-1)!}.$\label{prop:limit_props_of_gnk}
        \item If $n>4k$ and $u\in (0,s)$, with $s\le \frac{1}{2\sqrt{\barnk}}$, then
    $-n \left( 1 - 4 \frac{k^2}{n}\right)\left( 1- \frac{\barnk s^2}{1-s} \right) \frac{\Gamma_{k}(\barnk u)'}{(k-1)!}\leq -g_{n,k}'(u).
    $\label{prop:limit_props_of_gnk-2}
    \end{enumerate}
\end{proposition}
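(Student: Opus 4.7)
The plan is to prove the four items in order, with the last two being the most delicate.

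For part (i), I would differentiate $g_{n,k}(u) = \sum_{j=n-k+1}^n j\binom{n}{j}(1-u)^{j-1}u^{n-j}$ termwise, obtaining at each $j$ the pair
\[
-j(j-1)\binom{n}{j}(1-u)^{j-2}u^{n-j} \;+\; j(n-j)\binom{n}{j}(1-u)^{j-1}u^{n-j-1}.
\]
Setting $T_j := j(n-j)\binom{n}{j}(1-u)^{j-1}u^{n-j-1}$ and using the elementary identity $j\binom{n}{j}=(n-j+1)\binom{n}{j-1}$, one can verify that the first term at index $j$ coincides with $T_{j-1}$. Hence the sum telescopes to $T_n - T_{n-k}$, and $T_n=0$ because of the factor $n-j$. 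Rewriting $k\binom{n}{k}=(n-k+1)\binom{n}{k-1}$ then gives the stated formula.

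Part (ii) is an immediate consequence of part (i): substituting $(n,k)\to(n+1,k+1)$ in the formula and dividing by $g_{n,k}'(u)$, the binomial ratio simplifies as $\binom{n+1}{k}/\binom{n}{k-1}=(n+1)/k$ and one extra factor of $u$ appears from the exponent $u^{(k+1)-1}$ versus $u^{k-1}$.

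For parts (iii) and (iv), I would first rewrite part (i) as
\[
-g_{n,k}'(u) \;=\; \frac{n(n-1)\cdots(n-k)}{(k-1)!}\,(1-u)^{\barnk}\,u^{k-1},
\]
and recall that $-\Gamma_k(\barnk u)'/(k-1)! = \barnk^{k}u^{k-1}e^{-\barnk u}/(k-1)!$. So both comparisons reduce to bounding the ratio $\frac{(n-1)\cdots(n-k)}{\barnk^{k}}$ of polynomial prefactors together with the ratio $(1-u)^{\barnk}/e^{-\barnk u}$.

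For part (iii), I would use $(1-u)^{\barnk}\le e^{-\barnk u}$ (from $\ln(1-u)\le -u$) and the estimate
\[
\frac{(n-1)\cdots(n-k)}{\barnk^{k}} \;=\; \prod_{i=1}^{k}\Bigl(1+\tfrac{k+1-i}{\barnk}\Bigr) \;\le\; \exp\!\Bigl(\tfrac{k(k+1)}{2\barnk}\Bigr) \;\le\; 1+\tfrac{4k^2}{n},
\]
where the final inequality uses the hypothesis $n>(k+1)+2(k+1)^2$ to control the exponent and apply $e^x\le 1+2x$ for small $x$. For part (iv), I would use the Taylor bound $-\ln(1-u)\le u+\tfrac{u^2}{2(1-u)}$ (from $1/n\le 1/2$ for $n\ge 2$) to obtain $(1-u)^{\barnk}e^{\barnk u}\ge e^{-\barnk s^2/(2(1-s))}\ge 1-\barnk s^2/(1-s)$ on $u\in(0,s)$, and then simply note $\frac{(n-1)\cdots(n-k)}{\barnk^{k}}\ge 1\ge 1-\tfrac{4k^2}{n}$ since each factor is at least one.

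The main obstacle is bookkeeping the constants, particularly in part (iii), where the specific form $1+4k^2/n$ requires combining the elementary bounds $\log(1+x)\le x$ and $e^x\le 1+2x$ under the quantitative hypothesis $n>(k+1)+2(k+1)^2$; the condition $s\le 1/(2\sqrt{\barnk})$ in part (iv) similarly plays the role of keeping $\barnk s^2/(1-s)$ bounded away from $1$ so the lower bound stays meaningful.
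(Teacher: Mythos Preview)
Your proposal is correct and follows essentially the same approach as the paper: derive the closed form in (i) by telescoping, deduce (ii) directly from it, and for (iii)--(iv) compare $-g_{n,k}'(u)$ to $-\Gamma_k(\barnk u)'$ by separately bounding the polynomial prefactor $\frac{(n-1)\cdots(n-k)}{\barnk^k}$ and the ratio $(1-u)^{\barnk}/e^{-\barnk u}$. Two minor remarks: in (i) the first term at index $j$ equals $-T_{j-1}$ (not $T_{j-1}$), though your telescoped conclusion $T_n-T_{n-k}$ is correct; and in (iv) your observation that each factor $(n-i)/\barnk>1$ already gives the needed prefactor bound is slightly cleaner than the paper's route, which instead lower-bounds the product by $(1-k/\barnk)^{k-1}$ before invoking Bernoulli.
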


Recall that for every $r,\ell\in [k]$ we defined $\Phi_{r,\ell}=\Gamma_{r}(-\ln(Y_{\ell}))$. 
Observe that conditions \eqref{ode1}-\eqref{ode2} for the nonlinear system $\nls_k(\theta^{\star})$ can be rewritten to get the following identities in $[0,1)$:
\begin{align}
    \Phi_{k,k}' & =  k! \left( 1 - 1/(k\theta^{\star}_k) \right) - \Phi_{k+1,k},\label{ode1-phi}\\
    \Phi_{k,\ell}' & =  k!-\Phi_{k+1,\ell} - \frac{\theta^{\star}_{\ell+1}}{\theta^{\star}_{\ell}}(k!-\Phi_{k+1,\ell+1})\text{ for every } \ell\in [k-1],\label{ode2-phi}.
\end{align}
Furthermore, since the functions $Y_j$ are non-increasing, $\Phi_{k,j}$ are also non-increasing.
Recall that $\Gamma_r'(x)=-x^{r-1}e^{-x}$ and therefore, when $r\ge 2$, we have $\Phi_{r,\ell}'(t)=-\Phi_{r-1,\ell}'(t)\ln Y_{\ell}(t)$.
We use the following technical proposition in the rest of our analysis. For the sake of presentation, we defer its proof to Appendix~\ref{app:LB}.
\begin{proposition}\label{prop:mon-phi}
For every positive integer $k$, the following holds:
\begin{enumerate}[itemsep=0pt,label=\normalfont(\roman*)]
    \item Let $b_k=4k!\max \left\{\theta_{\ell+1}^\star/\theta_\ell^\star: { \ell\in \{1,\ldots,k-1\}} \right\}$. Then, for every $t\in (0,1)$, every $\ell\in [k]$, and every $r \in \{0,\ldots,k-1\}$, we have $Y_{\ell}(t)(-\ln Y_{\ell}(t))^{r}\leq b_k (1-t)$.\label{prop:nls-b} 
    \item Let $d_k=\min\left\{ \theta_{\ell+1}^*/\theta_\ell^*: { \ell\in \{1,\ldots,k-1\}}\right\} -1 >0$. There exists $\Delta_k>0$ such that for every $t\in (\Delta_k,1]$ and $\ell \in [k]$, it holds $Y_j(t)\geq d_k (1-t)^2$.\label{prop:nls-delta}
    \item Let $c_k=6 b_k$. We have $\Phi_{k,k}''(t)\ge 0$ for every $t\in (0,1)$. Furthermore, there is an integer $N_k$, such that for every $n\geq N_k$, every $\ell\in [k-1]$, and every $t\in ( 0,1-1/n]$, we have $|\Phi_{k,\ell}''(t)|\leq c_k \ln (n)$.\label{prop:nls-nk}

    \item {  Let $\bar{c}_k = (k c_k)^{1/k}$ and  $N_k$ as in~\ref{prop:nls-nk}. There is $\delta_k>0$ such that for any $n\geq N_k$, $j < k$, and every $t\leq \min\{ \delta_k, 1-1/n\} $, we have $Y_j(t) \geq 1 - \bar{c}_k\ln(n)^{1/k} t^{2/k}$.}\label{prop:nls-lower_bound}
\end{enumerate}    
\end{proposition}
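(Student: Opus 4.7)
The four claims all flow from two basic tools: (a) the ODE system \eqref{ode1-phi}--\eqref{ode2-phi} for $\Phi_{k,\ell}$, together with the integral identities in Proposition~\ref{prop:phi-identity}, which control $\Phi_{k,\ell}$ and its derivatives; and (b) elementary two-sided estimates for the upper incomplete gamma function, namely $\Gamma_k(x)\ge x^{k-1}e^{-x}$ (integration by parts) and $\Gamma_k(x)\le C_k x^{k-1}e^{-x}$ for $x\ge 1$ (from the closed form $\Gamma_k(x)=(k-1)!\,e^{-x}\sum_{i=0}^{k-1}x^i/i!$), which then translate estimates on $\Phi_{k,\ell}(t)=\Gamma_k(-\ln Y_\ell(t))$ into estimates on $Y_\ell(t)=e^{-x}$. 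I would prove the four parts in the order (i), (ii), (iii), (iv), since each leans on the previous.

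For (i), the ODE directly yields $|\Phi_{k,\ell}'(t)|\le k!\bigl(1+\theta^\star_{\ell+1}/\theta^\star_\ell\bigr)+\Phi_{k+1,\ell}(t)+(\theta^\star_{\ell+1}/\theta^\star_\ell)\Phi_{k+1,\ell+1}(t)\le b_k$ using $0\le\Phi_{k+1,\cdot}\le k!$ (with an analogous argument when $\ell=k$). Since $\Phi_{k,\ell}(1)=0$, integration gives $\Phi_{k,\ell}(t)\le b_k(1-t)$, and the lower bound $\Gamma_k(x)\ge x^{k-1}e^{-x}$ delivers the case $r=k-1$. For $r<k-1$ I would split on whether $Y_\ell(t)\le 1/e$ (in which case $(-\ln Y_\ell)^r\le (-\ln Y_\ell)^{k-1}$ reduces to the previous case) or $Y_\ell(t)>1/e$ (in which case $Y_\ell(-\ln Y_\ell)^r\le Y_\ell\le 1\le b_k(1-t)$ as soon as the crossing time $T_\ell^\star$ where $Y_\ell(T_\ell^\star)=1/e$ satisfies $T_\ell^\star\le 1-1/b_k$, which follows from the bound at $r=k-1$ just proven). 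For (ii), I would use Proposition~\ref{prop:phi-identity} to get a matching lower bound $\Phi_{k,\ell}(t)\ge c_0(1-t)$ for $t$ in a neighborhood of $1$, exploiting that $-\Phi_{k,\ell}'(t)$ converges to the positive limit $k!(\theta^\star_{\ell+1}/\theta^\star_\ell-1)$ (or $k!/(k\theta^\star_k)-k!$ for $\ell=k$) as $t\uparrow 1$; inverting $\Gamma_k$ via $\Gamma_k(x)\le C_k x^{k-1}e^{-x}$ then gives $-\ln Y_\ell(t)\le 2\ln(1/(1-t))+O(1)$, hence $Y_\ell(t)\ge d_k(1-t)^2$ after choosing $\Delta_k$ so that the logarithmic correction is dominated by the factor of $2$ in the exponent.

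For (iii), the nonnegativity $\Phi_{k,k}''\ge 0$ is immediate: $\Phi_{k,k}''=-\Phi_{k+1,k}'=-(-\ln Y_k)^k Y_k'\ge 0$ since $Y_k'\le 0$. For $\ell<k$, differentiating the ODE and using \eqref{phi-derivative} gives $\Phi_{k,\ell}''=-(-\ln Y_\ell)\Phi_{k,\ell}'+(\theta^\star_{\ell+1}/\theta^\star_\ell)(-\ln Y_{\ell+1})\Phi_{k,\ell+1}'$; combining $|\Phi_{k,\ell}'|\le b_k$ from (i) with $-\ln Y_\ell(t)\le -\ln(d_k(1-t)^2)=O(\ln n)$ from (ii) on $t\in(\Delta_k,1-1/n]$ (and the crude bound $-\ln Y_\ell(t)\le -\ln Y_\ell(\Delta_k)=O(1)$ on the compact piece $t\in(0,\Delta_k]$) yields the stated estimate $|\Phi_{k,\ell}''(t)|\le c_k\ln n$. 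For (iv), a direct evaluation of \eqref{ode2-phi} at $t=0$ using $Y_j(0)=Y_{j+1}(0)=1$, hence $\Phi_{k+1,j}(0)=\Phi_{k+1,j+1}(0)=k!$, shows $\Phi_{k,j}'(0)=0$ for $j<k$. Taylor's formula then gives $|\Phi_{k,j}(0)-\Phi_{k,j}(t)|\le\int_0^t\!\!\int_0^s |\Phi_{k,j}''(u)|\,du\,ds\le c_k\ln(n)\,t^2/2$. On the other hand, the near-zero expansion $(k-1)!-\Gamma_k(x)=\int_0^x s^{k-1}e^{-s}\,ds\ge x^k/(2k)$ (valid for $x\le 1/2$) inverts this into $-\ln Y_j(t)\le (kc_k\ln n)^{1/k}t^{2/k}=\bar c_k\ln(n)^{1/k}t^{2/k}$, and $Y_j(t)=e^{-(-\ln Y_j(t))}\ge 1-(-\ln Y_j(t))$ finishes the argument, with $\delta_k$ chosen small enough to guarantee $-\ln Y_j(t)\le 1/2$ throughout.

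The main obstacle will be the constant bookkeeping: matching the specific $b_k,d_k,c_k,\bar c_k$ and the thresholds $\Delta_k,\delta_k$ in the statement with what actually falls out of each crude estimate. In particular, for (ii) one must choose $\Delta_k$ large enough that $-\Phi_{k,\ell}'(t)$ has already attained at least, say, half its asymptotic value (so that $c_0$ can be taken essentially equal to $k!(\theta^\star_{\ell+1}/\theta^\star_\ell-1)$), and one must track the polylogarithmic gap in the inversion of $\Gamma_k$ carefully enough to conclude $(1-t)^2$ rather than just $(1-t)^{1+\varepsilon}$. For (iv), the matching constant $\bar c_k=(kc_k)^{1/k}$ only emerges once the Taylor estimate $\Gamma_k(0)-\Gamma_k(x)\ge x^k/(2k)$ is invoked with the correct leading-order constant $1/k$, so one has to be careful to separate genuine first-order behavior from higher-order remainders.
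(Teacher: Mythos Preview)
Your approach is essentially the paper's for all four parts: bound $|\Phi_{k,\ell}'|$ via the ODE, integrate to control $\Phi_{k,\ell}$, and then invert the relation $\Phi_{k,\ell}=\Gamma_k(-\ln Y_\ell)$ using two-sided estimates on $\Gamma_k$. Parts (ii), (iii), (iv) match the paper almost verbatim; in particular, the paper obtains the $(1-t)^2$ in (ii) exactly as you suggest (lower-bound $\Phi_{k,\ell}$ linearly near $t=1$, then invert via $\Gamma_k(x)\le k!\,x^{k-1}e^{-x}$, absorbing the polylog by the trick $Y_\ell^{1/2}(-\ln Y_\ell)^{k-1}\le 1$ near $t=1$), and (iv) via the Taylor expansion $\Phi_{k,j}(t)\ge (k-1)!-\tfrac{c_k\ln n}{2}t^2$ combined with $(k-1)!-\Gamma_k(x)\ge x^k/(2k)$.

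The one place you diverge is part (i), and there your case split has a small gap. From the $r=k-1$ bound at the crossing time $T_\ell^\star$ (where $Y_\ell=1/e$) you only get $1/e\le b_k(1-T_\ell^\star)$, hence $T_\ell^\star\le 1-1/(eb_k)$, which is \emph{weaker} than $T_\ell^\star\le 1-1/b_k$; so ``$1\le b_k(1-t)$'' does not follow for $t<T_\ell^\star$. The paper sidesteps this by using the closed form $\Gamma_k(x)=(k-1)!\,e^{-x}\sum_{r=0}^{k-1}x^r/r!$: since every summand is nonnegative, $\Phi_{k,\ell}(t)\ge (k-1)!\,Y_\ell(t)(-\ln Y_\ell(t))^r/r!$ for each $r\in\{0,\ldots,k-1\}$ simultaneously, and combining with $\Phi_{k,\ell}(t)\le b_k(1-t)$ gives the stated bound in one stroke. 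Your case split is salvageable (for $r\ge 1$ note that $\max_{y\in(1/e,1)}y(-\ln y)^r=1/e\le b_k(1-t)$ on $t<T_\ell^\star$; for $r=0$ the $i=0$ term of the same sum gives $Y_\ell\le \Phi_{k,\ell}/(k-1)!$ directly), but the sum-formula argument is cleaner and delivers the stated constant without extra work.
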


\begin{proof}[Proof of Claim~\ref{claim:bound_for_alpha_k}]
For $t\in \{0,1,\ldots,\barnk-1\}$, we have
\begin{align}
    &\int_{0}^{1} \alpha_{t+1,k}(q) \, \mathrm{d}q + \sum_{\tau \leq t} \int_0^1 q \alpha_{\tau,k} (q)\, \mathrm{d}q\notag\\
    &= \theta_k^\star \left( \int_{\varepsilon_{k,t}}^{\varepsilon_{k,t+1}} (-g_{n,k}'(q)) \, \mathrm{d}q + \int_0^{\varepsilon_{k,t}} q (-g_{n,k}'(q))\, \mathrm{d}q \right) + B_k \frac{\ln (\barnk)}{\barnk}. \label{eq:for_alpha_k}
\end{align}
Now, we bound the term in parenthesis:
\begin{align}
&  \int_{\varepsilon_{k,t}}^{\varepsilon_{k,t+1}} (-g_{n,k}'(q)) \, \mathrm{d}q + \int_0^{\varepsilon_{k,t}} q (-g_{n,k}'(q))\, \mathrm{d}q \nonumber \\
&= \int_{\varepsilon_{k,t}}^{\varepsilon_{k,t+1}} (-g_{n,k}'(u)) \, \mathrm{d}u + \frac{k}{n+1}\int_0^{\varepsilon_{k,t}} (-g_{n+1,k+1}'(u)) \, \mathrm{d}u \nonumber \\
&\leq  \left( 1 + 16 \frac{k^2}{n} \right) \left(  \frac{n}{(k-1)!} \int_{\varepsilon_{k,t}}^{\varepsilon_{k,t+1}} -(\Gamma_{k}(\barnk u))' \, \mathrm{d}u + \frac{k}{n+1} \frac{n+1}{k!} \int_0^{\varepsilon_{k,t}} - (\Gamma_{k+1}(\barnk u))' \, \mathrm{d}u  \right)\nonumber \\
&=  \frac{n}{(k-1)!} \left( 1 + 16 \frac{k^2}{n} \right) \left( \Gamma_k(-\ln y_{k,t}) - \Gamma_{k}(-\ln y_{k,t+1}) + \frac{1}{n}(k! - \Gamma_{k+1}(-\ln y_{k,t})   \right) \nonumber \\
&\leq \frac{n}{(k-1)!} \left( 1 + 16 \frac{k^2}{n} \right) \left( \frac{1}{\barnk}(k! - \frac{(k-1)!}{\theta^{\star}_k}  - \Gamma_{k+1}(-\ln y_{k,t})  )  - (\Gamma_{k}(-\ln y_{k,t+1})- \Gamma_{k}(-\ln y_{k,t})) \right) \nonumber \\
& \qquad + \frac{n}{\barnk} \left( 1+ 16 \frac{k^2}{n}\right) \frac{1}{\theta^{\star}_k} \nonumber \\
&= \frac{n}{(k-1)!} \left( 1 + 16 \frac{k^2}{n} \right) \Big(\Phi'_{k,k}(t/\barnk) - \barnk\Big(\Phi_{k,k}((t+1)/\barnk)) - \Phi_{k,k}(t/\barnk)\Big)\Big) \nonumber \\
& \qquad + \frac{n}{\barnk} \left( 1+ 16 \frac{k^2}{n}\right)\frac{1}{\theta^{\star}_k} \nonumber\\
&\leq  \frac{n}{\barnk(k-1)!} \left( 1+ 16 \frac{k^2}{n} \right) \left( \Phi_{k,k}'\left( \frac{t}{\barnk} \right) - \frac{\Phi_{k,k}((t+1)/\barnk)) - \Phi_{k,k}(t/\barnk)}{1/\barnk}  \right)\nonumber\\ 
&\qquad + \frac{1}{\theta^{\star}_{k}} \left( 1 + 20 \frac{k^2}{n}\right). \label{ineq:last_inequality_k}
\end{align}
The first equality and inequality follow by Proposition~\ref{prop:g-n-properties}, where we used implicitly that $1+4(k+1)^2/(n+1) \leq 1+16k^2/n$ for any $k\geq 1$. The next equality follows by computing the integrals. The next inequality follows by bounding $1/n\leq 1/\barnk$ and adding and subtracting $1/\theta^{\star}_k$. The last equality follows by rearranging terms and the last inequality follows by bounding $n/\barnk(1+16k^2/n)\le 1+20k^2/n$ for $n\geq 20k^2(k+1)/(4k^2-k-1)$ and any $k\geq 1$.

The following claim allows us to bound the first term in~\eqref{ineq:last_inequality_k}. We defer the proof of the claim to Appendix \ref{app:LB}.
\begin{claim}\label{claim:gamma_kk}
    It holds that
    $\displaystyle\Phi_{k,k}'\left( \frac{t}{\barnk} \right) - \frac{\Phi_{k,k}((t+1)/\barnk)) - \Phi_{k,k}(t/\barnk)) }{1/\barnk} \leq 0.$
\end{claim}

Then, in~\eqref{eq:for_alpha_k}, we have
\begin{align*}
    &\int_0^1 \alpha_{t+1,k}(q) \, \mathrm{d}q + \sum_{\tau \leq t}\int_0^1 q \alpha_{\tau,k}(q)\, \mathrm{d}q
    \leq  \left(1+20 \frac{k^2}{n}\right) + B_k \frac{\ln(\bar{n})}{\bar{n}} \leq 1 + 12 \frac{\ln (\barnk)^2}{\barnk},
\end{align*}
where the last inequality holds for $n$ large enough. This concludes the proof of Claim~\ref{claim:bound_for_alpha_k}.
\end{proof}

\begin{proof}[Proof of Claim~\ref{claim:bound_for_alpha_t}]
    For $j < k$, and $t= k-j$, we have
\begin{align*}
    &\int_0^1 \alpha_{t+1,j}(q)\, \mathrm{d}q + \sum_{\tau \leq  t} \int_0^1 q \alpha_{\tau,j}(q)\, \mathrm{d}q \leq B_j \frac{\ln(\barnk)}{\barnk} + \theta_j^\star \int_0^{\varepsilon_{j,k-j+1}} (-g_{n,k})'(q)\, \mathrm{d}q \\
    & = B_j \frac{\ln (\barnk)}{\barnk} + \theta_j^\star \int_{\varepsilon_{j,k-j}}^{\varepsilon_{j,k-j+1}}(- g_{n,k})'(q)\, \mathrm{d}q + \theta_j^\star \int_0^{\varepsilon_{j,k-j}} (-g_{n,k})'(q)\, \mathrm{d}q,
\end{align*}
and for $t>k-j$, we have
\begin{align*}
    &\int_0^1 \alpha_{t+1,j}(q)\, \mathrm{d}q + \sum_{\tau \leq t} \int_0^1 q \alpha_{\tau,j}(q)\, \mathrm{d}q \\
    &\leq B_j \frac{\ln(\barnk)}{\barnk} + \theta_j^\star\left( \int_{\varepsilon_{j,t}}^{\varepsilon_{j,t+1}} (-g_{n,k})'(q)\, \mathrm{d}q + \int_0^{\varepsilon_{j,t}} q (- g_{n,k})'(q)\, \mathrm{d}q  \right).
\end{align*}
Then, for any $t\geq k-j$, we obtain the inequality
\begin{align}
    & \int_0^1 \alpha_{t+1,j}(q)\, \mathrm{d}q + \sum_{\tau \leq t} \int_0^1 q \alpha_{\tau,j}(q)\, \mathrm{d}q \nonumber\\
    & \leq B_j \frac{\ln (\barnk)}{\barnk} + \theta_j^\star\left( \int_{\varepsilon_{j,t}}^{\varepsilon_{j,t+1}} (-g_{n,k})'(q)\, \mathrm{d}q + \int_0^{\varepsilon_{j,t}} q (- g_{n,k})'(q)\, \mathrm{d}q  \right)+\theta_j^\star \int_0^{\varepsilon_{j,k-j}} (-g_{n,k})'(q)\, \mathrm{d}q. \label{ineq:upper_bound_for_alpha_jt}
\end{align}
We upper bound separately the last two terms in~\eqref{ineq:upper_bound_for_alpha_jt}. For the first term, we have
\begin{align}
    &\left[\frac{n}{(k-1)!} \left( 1 + 16 \frac{k^2}{n} \right)\right]^{-1}\left(\int_{\varepsilon_{j,t}}^{\varepsilon_{j,t+1}} (-g_{n,k})'(u)\, \mathrm{d}u + \frac{k}{n+1}\int_0^{\varepsilon_{j,t}} (-g_{n+1,k+1})'(u)\, \mathrm{d}u \right)\notag\\
    &\le \Gamma_k(-\ln y_{j,t}) - \Gamma_{k}(-\ln y_{j,t+1}) + \frac{1}{\barnk} (k! - \Gamma_{k+1}(-\ln y_{j,t}))  \notag\\
    &= \frac{1}{\barnk}\left( \Gamma_{k}(-\ln y_j)'\left( \frac{t}{\barnk} \right) - \frac{\Gamma_{k}(-\ln y_{j,t+1}) -\Gamma_{k}(-\ln y_{j,t})}{1/\barnk} + \frac{\theta^{\star}_{j+1}}{\theta^{\star}_j}( k! - \Gamma_{k+1}(-\ln y_{j+1,t})) \right)\notag\\
    &=  \frac{1}{\barnk}\left( \Phi_{k,\ell}'(t/\barnk) - \frac{ \Phi_{k,\ell}((t+1)/\barnk)-\Phi_{k,\ell}(t/\barnk)}{1/\barnk} + \frac{\theta^{\star}_{j+1}}{\theta^{\star}_j}( k! - \Gamma_{k+1}(-\ln y_{j+1,t})) \right).\label{eq:gamma_kl_approx}
\end{align}
The first inequality follows by Proposition~\ref{prop:g-n-properties}. The following claim allows us to guarantee that $\varepsilon_{\ell,k}$ is close to zero, for all $\ell$, which allows us to use~\ref{prop:g-n-properties}\ref{prop:limit_props_of_gnk-2}. The proof simply uses Proposition~\ref{prop:mon-phi}\ref{prop:nls-lower_bound} for $\barnk \geq N_k$ and we skip it for brevity. 
\begin{claim}\label{claim:small_epsilon}
    For any $\ell$, we have $\varepsilon_{\ell,k}\leq 2 \bar{c}_k \ln(\barnk)^{1/k}/\barnk^{1+2/k}$, where $\bar{c}_k$ is defined in Proposition~\ref{prop:mon-phi}.
\end{claim}
Note that the claim implies that for $n$ large, $\varepsilon_{\ell,k}\leq (k-1)/\barnk \leq 2/\sqrt{\barnk}$. In addition to this claim, the following claims allow us to bound the terms in the parenthesis in~\eqref{eq:gamma_kl_approx}. We defer their proof to Appendix~\ref{app:LB}.
\begin{claim}\label{claim:phi-log} It holds that
$\barnk( \Phi_{k,\ell}(t/\barnk)-\Phi_{k,\ell}((t+1)/\barnk)) + \Phi_{k,\ell}'(t/\barnk)\le c_k\ln(\barnk)/\barnk$, where $c_k>0$ is defined in Proposition~\ref{prop:mon-phi}.
\end{claim}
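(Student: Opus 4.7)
The left-hand side of the claim has the shape of a derivative minus its forward finite-difference approximation at scale $h = 1/\barnk$, namely
\[
\Phi_{k,\ell}'(s) - \frac{\Phi_{k,\ell}(s+h)-\Phi_{k,\ell}(s)}{h},\qquad s=t/\barnk,
\]
and the natural approach is to control it by the second-order Taylor remainder for $\Phi_{k,\ell}$ on the interval $[s,s+h]$.

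The plan is as follows. First, I would invoke the Taylor expansion with integral remainder,
\[
\Phi_{k,\ell}(s+h)-\Phi_{k,\ell}(s) = h\,\Phi_{k,\ell}'(s) + \int_s^{s+h}(s+h-u)\,\Phi_{k,\ell}''(u)\,\mathrm{d}u,
\]
so that after multiplying by $\barnk=1/h$ and rearranging, the quantity in the claim becomes
\[
\Phi_{k,\ell}'(s) - \barnk\bigl(\Phi_{k,\ell}(s+h)-\Phi_{k,\ell}(s)\bigr) = -\,\barnk \int_s^{s+h}(s+h-u)\,\Phi_{k,\ell}''(u)\,\mathrm{d}u.
\]
Second, I would estimate the integral by the supremum of $|\Phi_{k,\ell}''|$ on $[s,s+h]$: since $\int_s^{s+h}(s+h-u)\,\mathrm{d}u = h^2/2$, we obtain
\[
\Phi_{k,\ell}'(s) - \barnk\bigl(\Phi_{k,\ell}(s+h)-\Phi_{k,\ell}(s)\bigr) \;\le\; \frac{1}{2\barnk}\sup_{u\in[s,s+h]}|\Phi_{k,\ell}''(u)|.
\]
Third, I would apply Proposition~\ref{prop:mon-phi}\ref{prop:nls-nk}, which gives $|\Phi_{k,\ell}''(u)| \le c_k\ln(n)$ for $u\in(0,1-1/n]$ and $\ell\in[k-1]$; then, using that $\ln(n)/\ln(\barnk)\to 1$ as $n\to\infty$, we can absorb the difference between $\ln(n)$ and $\ln(\barnk)$ into the constant $c_k$ by enlarging $n_0$ if needed, yielding the stated bound $c_k\ln(\barnk)/\barnk$.

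The main obstacle is the verification that the point where $\Phi_{k,\ell}''$ is evaluated actually lies in the good range $(0,1-1/n]$ on which Proposition~\ref{prop:mon-phi}\ref{prop:nls-nk} provides the bound. Since $\xi\in(s,s+h)\subset(t/\barnk,(t+1)/\barnk)$ and $\barnk = n-k-1$, the inequality $(t+1)/\barnk \le 1-1/n$ holds as soon as $t+1 \le \barnk - \barnk/n$, which is satisfied for all $t\le \barnk - 2$ once $n$ is sufficiently large; for the boundary index $t=\barnk-1$ one must argue separately, either by observing that the claim is only needed in the range of $t$ where $\alpha_{t+1,j}$ is used in the bound~\eqref{eq:gamma_kl_approx} (and in particular, the cutoff imposed by the construction of $\alpha^\star$ already excludes this endpoint because of the $\varepsilon_{j,\barnk}=1$ convention and the vanishing mass there), or by deriving a one-sided estimate near $t/\barnk \to 1$ using the terminal condition $Y_j(1)=0$ together with Proposition~\ref{prop:mon-phi}\ref{prop:nls-b}–\ref{prop:nls-delta} to control the growth of $\Phi_{k,\ell}''$ there. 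Once this boundary verification is in place, the Taylor remainder argument closes the claim.
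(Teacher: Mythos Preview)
Your proposal is correct and follows essentially the same approach as the paper: Taylor-expand $\Phi_{k,\ell}$ to second order on the interval $[t/\barnk,(t+1)/\barnk]$ and bound the remainder via Proposition~\ref{prop:mon-phi}\ref{prop:nls-nk}. The only simplification the paper makes is to apply Proposition~\ref{prop:mon-phi}\ref{prop:nls-nk} with $\barnk$ in the role of the proposition's parameter (rather than with $n$), so that the interval condition $\xi\le 1-1/\barnk$ yields $|\Phi_{k,\ell}''(\xi)|\le c_k\ln(\barnk)$ directly and no $\ln(n)$-to-$\ln(\barnk)$ conversion is needed; this also dispenses with most of your boundary discussion.
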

\begin{claim}\label{claim:UB_for_terms_}
    For $n$ sufficiently large, we have
    \begin{align*}
        \left( 1- 4\frac{(k+1)^2}{n+1} \right)^{-1}\left( 1-\frac{\barnk \varepsilon_{j+1,t}^2}{1-\varepsilon_{j+1,t}}\right)^{-1}\leq 1 + 10 \frac{\ln (\barnk)^2}{\barnk}.
    \end{align*}
\end{claim}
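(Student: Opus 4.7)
The plan is to bound the two factors separately and multiply. The first factor is easy: since $4(k+1)^2/(n+1) \to 0$ as $n \to \infty$, the elementary inequality $(1-x)^{-1} \le 1+2x$ on $[0,1/2]$ yields
\[
\left(1 - \tfrac{4(k+1)^2}{n+1}\right)^{-1} \le 1 + \tfrac{8(k+1)^2}{n+1},
\]
a term of order $O(1/n)$, which is negligible compared to $\ln(\barnk)^2/\barnk$ for $n$ large.

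The bulk of the work lies in showing $\barnk \varepsilon_{j+1,t}^2/(1-\varepsilon_{j+1,t}) = O(\ln(\barnk)^2/\barnk)$. This requires a sharp upper bound on $\varepsilon_{j+1,t} = -\ln Y_{j+1}(t/\barnk)/\barnk$, which amounts to a lower bound on $Y_{j+1}(t/\barnk)$. I would split the argument according to the size of $t$. For $t/\barnk$ bounded away from $1$ (e.g., $t \leq \barnk - 1$), Proposition~\ref{prop:mon-phi}\ref{prop:nls-delta} yields $Y_{j+1}(t/\barnk) \geq d_k(\barnk-t)^2/\barnk^2 \geq d_k/\barnk^2$, so that
\[
\varepsilon_{j+1,t} \leq \frac{2\ln\barnk + |\ln d_k|}{\barnk} \leq \frac{3\ln\barnk}{\barnk}
\]
for $\barnk$ sufficiently large. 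For $t/\barnk$ small, Proposition~\ref{prop:mon-phi}\ref{prop:nls-lower_bound} furnishes the even stronger bound $Y_{j+1}(t/\barnk) \geq 1 - \bar{c}_k \ln(n)^{1/k}(t/\barnk)^{2/k}$ from which, via $-\ln(1-x)\le 2x$ on $[0,1/2]$, $\varepsilon_{j+1,t}$ is controlled in terms of a small power of $t/\barnk$. In either regime, $\barnk\,\varepsilon_{j+1,t}^2 \leq C_k\,\ln(\barnk)^2/\barnk$ for a constant $C_k$ depending only on $k$.

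With $\varepsilon_{j+1,t}\to 0$, the denominator $1-\varepsilon_{j+1,t}$ is at least $1/2$ for $\barnk$ large, and hence $\barnk\varepsilon_{j+1,t}^2/(1-\varepsilon_{j+1,t}) \leq 2C_k\,\ln(\barnk)^2/\barnk$. Applying $(1-x)^{-1}\le 1+2x$ once more and multiplying this with the bound on the first factor gives an upper estimate of the form $1 + O(\ln(\barnk)^2/\barnk)$, which is the shape demanded by the claim. The main obstacle is purely quantitative: one must choose $n$ large enough so that the absorbed constants---coming from the two applications of $(1-x)^{-1}\le 1+2x$, from Proposition~\ref{prop:g-n-properties}, and from the bounds on $Y_{j+1}$ in Proposition~\ref{prop:mon-phi}---fit inside the advertised factor $10$. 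This is a matter of careful bookkeeping rather than new ideas, and the value $10$ is comfortably achievable once $\barnk$ is taken large enough that $\varepsilon_{j+1,t}$ is forced to be $O(\ln(\barnk)/\barnk)$ throughout the relevant range of $t$.
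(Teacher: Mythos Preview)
Your proposal is correct and follows essentially the same route as the paper: bound $\varepsilon_{j+1,t}\le 3\ln(\barnk)/\barnk$ via Proposition~\ref{prop:mon-phi}\ref{prop:nls-delta}, then substitute and absorb the constants. The only difference is that your case split on the size of $t/\barnk$ is unnecessary: since $Y_{j+1}$ is non-increasing (Lemma~\ref{prop:nls}\ref{prop:nls-y-dec}), $\varepsilon_{j+1,t}$ is non-decreasing in $t$, so it suffices to bound it at $t=\barnk-1$, where Proposition~\ref{prop:mon-phi}\ref{prop:nls-delta} applies directly once $\barnk$ is large enough that $1-1/\barnk>\Delta_k$; the paper exploits exactly this monotonicity and avoids invoking Proposition~\ref{prop:mon-phi}\ref{prop:nls-lower_bound} altogether.
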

Hence, we can further bound (\ref{eq:gamma_kl_approx}) as follows:
\begin{align*}
    &\leq \frac{1}{\barnk }\left( c_k \frac{\ln (\barnk)}{\barnk} + \frac{\theta^{\star}_{j+1}}{\theta^{\star}_j}( k! - \Gamma_{k+1}(-\ln y_{j+1,t})) \right) \tag{Using Claim~\ref{claim:phi-log}} \\
    &\leq \frac{1}{\barnk} \left( c_k \frac{\ln (\barnk)}{\barnk} + 
    \left(1-\frac{4(k+1)^2}{n+1}\right)^{-1}\left(1-\frac{\barnk\varepsilon_{j+1,t}^2}{1-\varepsilon_{j+1,t}}\right)^{-1}
    \frac{\theta^{\star}_{j+1}}{\theta^{\star}_j} (k-1)!\int_0^{\varepsilon_{j+1,t}}( -g_{n,k}'(u)) u \, \mathrm{d}u\right) \tag{Using~\ref{prop:g-n-properties}\ref{prop:limit_props_of_gnk-2} and Claim~\ref{claim:small_epsilon}} \\
    & \leq \frac{1}{\barnk} \left( c_k \frac{\ln (\barnk)}{\barnk} + 
    \left(1 + 10 \frac{\ln (\barnk)^2}{\barnk}\right) 
    \frac{\theta^{\star}_{j+1}}{\theta^{\star}_j} (k-1)!\int_0^{\varepsilon_{j+1,t}}( -g_{n,k}'(u)) u \, \mathrm{d}u\right)\tag{Using Claim~\ref{claim:UB_for_terms_}} \\
    & \leq \frac{1}{\theta_j^\star \barnk} \left( 1 + 10 \frac{\ln (\barnk)^2}{\barnk} \right) \left( \theta^{\star}_j c_k \frac{\ln \barnk}{\barnk} + \theta^{\star}_{j+1}(k-1)!\int_0^{\varepsilon_{j+1,t}} (-g_{n,k}'(u)) u \, \mathrm{d}u\right). 
\end{align*}
From here, we obtain
\begin{align*}
&\theta_j^\star\left(\int_{\varepsilon_{j,t}}^{\varepsilon_{j+1,t}} (-g_{n,k})'(q)\, \mathrm{d}q + \int_0^{\varepsilon_{j,t}} q(-g_{n,k})'(q)\, \mathrm{d}q\right)\\
& \leq \left(1+12 \frac{\ln(\barnk)^2}{\barnk}\right) \frac{c_k}{k!} \frac{\ln(\barnk)}{\barnk} + \left( 1+ 12\frac{\ln (\barnk)^2}{\barnk}\right)\int_0^{\varepsilon_{j+1,t}} \theta_{j+1}^\star q(-g_{n,k})'(q)\, \mathrm{d}q \\
& \leq \left(1+12 \frac{\ln(\barnk)^2}{\barnk}\right) \frac{c_k}{k!} \frac{\ln(\barnk)}{\barnk} + \left( 1+ 12\frac{\ln (\barnk)^2}{\barnk}\right)\left( \sum_{\tau\leq t}\int_0^1 q \alpha_{\tau,j+1}(q)\, \mathrm{d}q - B_{j+1}\frac{\ln (\barnk)}{\barnk}   \right).
\end{align*}
We now bound the last term in~\eqref{ineq:upper_bound_for_alpha_jt}. Note that the function $-\Gamma_{k}(\barnk u)' = \barnk (\barnk u)^{k-1} e^{-\barnk u}$ is increasing in $[0,(k-1)/\barnk]$ and decreasing in $[(k-1)/\barnk,+\infty)$. Then,
\begin{align*}
    & \theta_j^\star\int_0^{\varepsilon_{j,k-j}}(-g_{n,k})'(q)\, \mathrm{d}q \\
    &\leq \frac{1}{k}\int_0^{\varepsilon_{j,k}} (-g_{n,k})'(q)\, \mathrm{d}q \tag{Since $\varepsilon_{j,k-j}\leq \varepsilon_{j,k}$ and $\theta_j^\star\leq 1/k$}\\
    & \leq \frac{n}{k!}\left(1 + 4 \frac{k^2}{n} \right)\int_0^{\varepsilon_{j,k}}(-\Gamma_k(\barnk u))' \, \mathrm{d}u \tag{Using Proposition~\ref{prop:g-n-properties}}\\
    &\leq \frac{n}{k!}\left( 1 + 4 \frac{k^2}{n} \right)  \barnk^{k} \left(   2 \bar{c}_k \frac{\ln(\barnk)^{1/k}}{\barnk^{1+2/k}} \right)^k  \tag{Using Claim~\ref{claim:small_epsilon}} \\
    & \leq \frac{2^k k^{1/k}}{k!}  c_k^k \left( 1 + 8 \frac{k^2}{n} \right) \frac{\ln(\barnk)}{\barnk}   \\ 
    & \leq 4 c_k^k \left( 1+ 12 \frac{\ln (\barnk)^2}{\barnk} \right)\frac{\ln (\barnk)}{\barnk},
\end{align*}
where we used that $2^k k^{1/k}\leq 4 k!$ for all $k\geq 1$ and the bound $8k^2/n \leq 12 \ln(\barnk)^2/\barnk$ for $n$ large. Then,
\begin{align*}
    &\int_0^1 \alpha_{t+1,j}(q) \, \mathrm{d}q + \sum_{\tau \leq t}\int_0^1 q \alpha_{\tau,j}(q)\, \mathrm{d}q\\
    & \leq \left( 1+ 12 \frac{\ln(\barnk)^2}{\barnk} \right)\sum_{\tau \leq t} \int_0^1 q\alpha_{\tau,j+1}(q)\, \mathrm{d}q \\
    & \qquad + 4 c_k^k\left( 1 + 8 \frac{k^2}{n} \right)  \frac{\ln(\barnk)}{\barnk} + B_j \frac{\ln (\barnk)}{\barnk} + \left( 1 + 12 \frac{\ln(\barnk)^2}{\barnk} \right) \frac{c_k}{k!}\frac{\ln (\barnk)}{\barnk} \\
    & \qquad - \left( 1 + 12 \frac{\ln(\barnk)^2}{\barnk} \right) \frac{\ln(\barnk)}{\barnk} B_{j+1}\\
    & \leq \left(  1 + 12\frac{\ln (\barnk)^2}{\barnk}  \right) \sum_{\tau \leq t}\int_0^1 q \alpha_{\tau,j+1}(q)\, \mathrm{d}q +  \left(  1+ 12 \frac{\ln(\barnk)^2}{\barnk}  \right)\frac{\ln(\barnk)}{\barnk}\left( 4c_k^k + \frac{c_k}{k!} + B_j -B_{j+1}   \right) \\
    & = \left(  1  + 12 \frac{\ln(\barnk)^2}{\barnk}\right) \sum_{\tau\leq t}\int_0^1 q \alpha_{\tau,j+1}(q)\, \mathrm{d}q,
\end{align*}
where we used that $B_j=(4c_k^k + c_k/k!) \cdot (j-1)$ for $j\geq 1$. This finishes the proof of Claim~\ref{claim:bound_for_alpha_t}.
\end{proof}

\color{black}


\section{A Tight Prophet Inequality for Sequential Assignment}\label{sec:ssap}

In this section, we show that our new provable lower bounds for the $k$-selection prophet inequality imply a tight approximation ratio for the i.i.d. sequential stochastic assignment problem by \citet{derman1972sequential}, that we call $\ssap$ in what follows.
We provide the proof in two steps. 
Firstly, we show that $\ssap$ is more general than $\kipi{}$ in the sense that any policy for the sequential stochastic assignment problem with $n$ time periods implies a policy for $\kipi{}$ for any $k\in [n]$ (Proposition~\ref{prop:upper_bound_ssap}). 
This shows that the approximation ratio cannot be larger than $\min_{k\in[n]}\gamma_{n,k}$. Secondly, we match the upper bound by using the structure of the optimal policy for $\ssap$ (Proposition~\ref{prop:lower_bound_ssap}).

In the \ssap, the input is given by $n$ non-negative values (rewards) $r_1\leq r_2 \leq \cdots \leq r_n$ and we observe exactly $n$ non-negative values, presented one after the other in $n$ time periods, and drawn independently from a distribution $F$. 
{For notational convenience, we assume that time starts at $t=n$ and decreases all the way down to $t=1$, i.e., the value $t$ represents the number of time periods that remain before the next value is presented.}
For every period $t$, we observe the value $X_t\sim F$, and we have to irrevocably assign the value $X_t$ to one of the unassigned rewards $r_\tau$'s. 
The goal is to find a sequential policy $\pi$ that maximizes $v_{n,F,r}(\pi)=\EE\left[\sum_{t=1}^n X_t r_{\pi(t)}\right]$ where $\pi$ is a permutation of $[n]$. Note that the optimal offline value corresponds to $\sum_{t=1}^n r_t \EE\left[ X_{(t)} \right]$.
We denote by $\alpha_n$ the largest approximation ratio that any policy can attain in $\ssap$ for instances with $n$ time periods.

\begin{proposition}\label{prop:upper_bound_ssap}
    For every $n$, it holds that $\alpha_n\leq \min_{k\in [n]}\gamma_{n,k}$.
\end{proposition}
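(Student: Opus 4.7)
The plan is to establish, for each $k\in [n]$, a reduction from $\kipi{}$ to $\ssap{}$ by specializing the reward vector, so that any online $\ssap$ policy translates into a feasible $\kipi{}$ algorithm whose approximation ratio on the induced instances controls the $\ssap$ approximation ratio.

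Concretely, I would fix $k\in [n]$ and consider $\ssap$ instances with reward vector $r^{(k)}=(0,\ldots,0,1,\ldots,1)$, consisting of $n-k$ zeros followed by $k$ ones. With this choice, the $\ssap$ offline benchmark simplifies as
\begin{align*}
\sum_{t=1}^n r^{(k)}_t\,\EE[X_{(t)}]\;=\;\sum_{t=n-k+1}^n\EE[X_{(t)}]\;=\;\OPT_{n,k}(F),
\end{align*}
so on $r^{(k)}$ it coincides exactly with the $\kipi{}$ offline benchmark on distribution $F$.

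Next, given any online $\ssap$ policy $\pi$, I would define a $\kipi{}$ algorithm $\ALG_\pi$ by declaring that $\ALG_\pi$ selects value $X_t$ if and only if $\pi$ assigns $X_t$ to some index in $\{n-k+1,\ldots,n\}$. Since $\pi$ assigns at most $k$ values to one-rewards, $\ALG_\pi$ is a feasible $\kipi{}$ algorithm, and it remains online because $\pi$ itself uses only information available at time $t$. Moreover, by construction,
\begin{align*}
v_{n,F,r^{(k)}}(\pi)\;=\;\EE\Bigl[\textstyle\sum_{t=1}^n X_t\, r^{(k)}_{\pi(t)}\Bigr]\;=\;\EE\Bigl[\textstyle\sum_{t\,:\,\ALG_\pi\text{ selects }t} X_t\Bigr],
\end{align*}
so the ratio of $\pi$ on $r^{(k)}$ equals the approximation ratio of $\ALG_\pi$ on the $\kipi{}$ instance with distribution $F$.

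Combining these observations, for every $\pi$ and every $k\in [n]$ we get
\begin{align*}
\inf_{F,\,r}\frac{v_{n,F,r}(\pi)}{\sum_{t=1}^n r_t\EE[X_{(t)}]}\;\leq\;\inf_F\frac{v_{n,F,r^{(k)}}(\pi)}{\OPT_{n,k}(F)}\;\leq\;\gamma_{n,k},
\end{align*}
where the last inequality uses that $\ALG_\pi$ is a specific online $\kipi{}$ algorithm and $\gamma_{n,k}$ is a supremum over such algorithms. Taking the supremum over $\pi$ yields $\alpha_n\leq \gamma_{n,k}$, and since $k$ was arbitrary, $\alpha_n\leq \min_{k\in[n]}\gamma_{n,k}$. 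The argument is essentially a specialization of rewards; no real obstacle arises beyond verifying feasibility and online measurability of $\ALG_\pi$, both of which are immediate from the construction.
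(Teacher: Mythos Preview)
Your proof is correct and follows essentially the same reduction as the paper: specialize the reward vector so that $\ssap$ collapses to $\kipi{}$, then bound the ratio. The paper uses $r_i=\varepsilon i$ for $i\le n-k$ and $r_i=1$ otherwise, incurring an $O(\varepsilon n)$ loss that is then sent to zero; your choice of exact zeros for the low rewards is cleaner since the problem definition allows ties and nonnegative rewards, so the identification of the two benchmarks and the two online values is exact and no limiting argument is needed.
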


\begin{proof}
    Let $\pi$ be a policy for $\ssap$ with approximation ratio $\alpha$. 
    Given $k\in [n]$, we use $\pi$ to construct a policy $\pi'$ for $\kipi{}$. Without loss of generality, we can assume that $\OPT_{n,k}=1$. Fix $\varepsilon\in (0,1/n^2)$ and consider the following instance for $\ssap$: $r_i=\varepsilon i$ for each $i\in \{1,\ldots, n-k\}$ and $r_i=1$ for each $i\in \{n-k+1,\ldots,n\}$. 
    The policy $\pi'$ simulates $\pi$ by creating $r_1\leq r_2 \leq \cdots \leq r_n$ as defined before. When $\pi$ assigns $X_t$ to some $r_i=1$, then $\pi'$ selects the value $X_t$, while if $\pi$ assigns $X_t$ to some $r_i=\varepsilon i$, then $\pi'$ discards $X_t$. Then, we have that
    \begin{align*}
        \EE\left[\sum_{t=1}^n X_t \mathbf{1}_{\{t \text{ selected by }\pi\}}\right] 
        & \geq \alpha\EE\left[ \sum_{t=1}^n r_t X_{(t)}  \right] - \varepsilon n\\
        & \geq \alpha \EE\left[ \sum_{t=1}^k X_{(n-t+1)}  \right] - (1+\alpha)\varepsilon n = \alpha - (1+\alpha)\varepsilon n.
    \end{align*}
    This shows that the approximation ratio of $\pi'$ is at least $\alpha- (1+\alpha)\varepsilon n$. Since this holds for any $\varepsilon\in (0,1/n^2)$, we conclude that $\pi'$ has an approximation ratio of at least $\alpha$. 
    Since this holds for any $\ssap{}$ policy for $n$ time periods and any $k\in [n]$, we conclude the proof. 
\end{proof}

\begin{proposition}\label{prop:lower_bound_ssap}
For every $n$, it holds that $\alpha_n\geq \min_{k\in [n]}\gamma_{n,k}$.
\end{proposition}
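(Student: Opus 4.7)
The plan is to prove the bound by decomposing the reward vector into a non-negative combination of ``indicator-like'' reward vectors and invoking the classical structural characterization of the optimal $\ssap$ policy by~\citet{derman1972sequential}. Setting $r_0=0$ and $c_k=r_{n-k+1}-r_{n-k}\geq 0$ for $k\in[n]$, the identity $r_i=\sum_{k=1}^n c_k \mathbf{1}[i\geq n-k+1]$ holds for every $i\in[n]$. Applied to the prophet benchmark this yields
\[\sum_{t=1}^n r_t \,\EE[X_{(t)}] \;=\; \sum_{k=1}^n c_k \sum_{t=n-k+1}^n \EE[X_{(t)}] \;=\; \sum_{k=1}^n c_k\, \OPT_{n,k}(F),\]
and applied to the online value of any feasible $\ssap$ policy $\pi$,
\[v_{n,F,r}(\pi) \;=\; \sum_{k=1}^n c_k\, \EE\Bigl[\sum_{t=1}^n X_t\, \mathbf{1}[\pi(t)\geq n-k+1]\Bigr].\]

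With this decomposition in hand, the task reduces to exhibiting a single $\ssap$ policy whose ``top-$k$'' assignment behavior, for every $k\in[n]$, is itself a $\kipi{}$ policy collecting expected value at least $\gamma_{n,k}\,\OPT_{n,k}(F)$. For this I would invoke the optimal $\ssap$ policy $\pi^\star$, which by~\citet{derman1972sequential} admits the following structure: at every time with $t$ periods to go, there exist thresholds $-\infty=a_0^{(t)}\leq a_1^{(t)}\leq \cdots \leq a_t^{(t)}=+\infty$ depending only on $F$ and $t$ and \emph{not} on the particular remaining rewards, such that the observed value $X$ is assigned to the $j$-th smallest of the surviving rewards iff $X\in[a_{j-1}^{(t)}, a_j^{(t)})$. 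Because larger observations are matched to higher-ranked remaining rewards, the original top-$k$ rewards $r_{n-k+1},\ldots,r_n$ stay at the top of the surviving set until consumed; hence the indicator $\mathbf{1}[\pi^\star(t)\geq n-k+1]$ agrees, realization-by-realization, with what $\pi^\star$ would do on the $0/1$ reward vector $r_i=\mathbf{1}[i\geq n-k+1]$. On this $0/1$ instance $\ssap$ is exactly $\kipi{}$, so optimality of $\pi^\star$ gives $\EE[\sum_t X_t\, \mathbf{1}[\pi^\star(t)\geq n-k+1]] \geq \gamma_{n,k}\,\OPT_{n,k}(F)$. Plugging back,
\[v_{n,F,r}(\pi^\star) \;\geq\; \sum_{k=1}^n c_k\, \gamma_{n,k}\, \OPT_{n,k}(F) \;\geq\; \Bigl(\min_{k\in[n]}\gamma_{n,k}\Bigr)\sum_{k=1}^n c_k\, \OPT_{n,k}(F) \;=\; \Bigl(\min_{k\in[n]}\gamma_{n,k}\Bigr)\sum_{t=1}^n r_t\,\EE[X_{(t)}],\]
and since this holds for every distribution $F$ and reward vector $r$, the proposition follows.

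The main obstacle I anticipate is the invariance step: arguing rigorously that whether $X_t$ is assigned to an original top-$k$ reward is a function of the observations and the time index alone, independent of the numerical reward values. This rests on the reward-independent thresholds from~\citet{derman1972sequential} together with the simple observation that, under any rank-monotone assignment rule, the set of surviving original-top-$k$ rewards evolves deterministically given only the assignment-rank history. Once this point is carefully stated, the remainder of the argument is routine bookkeeping via the reward decomposition and the definition of $\gamma_{n,k}$ combined with the fact that, for non-negative values, the optimal $\kipi{}$ policy fills all $k$ slots and hence matches the $\ssap$ optimum on the $0/1$ instance.
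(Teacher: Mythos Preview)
Your proof is correct and follows essentially the same route as the paper: decompose the reward vector via the nonnegative differences $r_{i}-r_{i-1}$, use Derman--Lieberman--Ross to argue that the optimal $\ssap$ policy behaves identically on each indicator instance, identify each such instance with $\kipi{}$, and take the minimum ratio. The only cosmetic difference is that the paper bypasses your ``invariance'' concern by invoking the explicit linear formula $v_{n,F,r}(\pi^\star)=\sum_{t=1}^n r_t\,\mu_{t,n+1}(F)$ from~\citet{derman1972sequential}, which makes the decomposition a one-line algebraic manipulation rather than an operational argument about thresholds and ranks; the underlying idea is identical.
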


To prove this proposition, we need to use the structure of the optimal dynamic programming policy for $\ssap$ shown by~\citet{derman1972sequential}, which we describe in what follows.
For each time distribution $F$ and each time $t$, there exist values $0=\mu_{0,t}(F) \leq \mu_{1,t}(F) \leq \cdots \leq \mu_{t,t}(F)$, where the value $\mu_{i,t}$ is the optimal expected value in problem with $t-1$ time periods in which the reward $r_i$ is assigned under the optimal policy.
If $X_t \in [\mu_{\tau-1,t}(F), \mu_{\tau,t}(F)]$ then, the optimal policy assigns $X_t$ with the $\tau$-th smallest available reward for $\tau\in \{1,\ldots,t\}$.
Furthermore, \citet{derman1972sequential} show that $v_{n,F,r}(\pi^\star)=\sum_{t=1}^n r_t \mu_{t,n+1}(F)$.
Note that the values $\mu$ are completely independent of the rewards, and they just depend on the distribution $F$ and $n$. 
\begin{proof}[Proof of Proposition \ref{prop:lower_bound_ssap}]
For every $\ell\in [n]$, let $d_{\ell}=r_{\ell}-r_{\ell-1}$ where $r_0=0$.
Since the rewards $r_t$ are non-decreasing in $t$, we have $d_{\ell}\ge 0$ for every $\ell\in [n]$, and $r_j=\sum_{\ell=1}^j d_{\ell}$.
Then, for every distribution $F$, we have
\begin{align}
\frac{\sum_{t=1}^n r_t \mu_{t,n+1}(F)}{\sum_{t=1}^n r_t \EE[ X_{(t)}]} & = \frac{\sum_{\tau=1}^n d_\tau \sum_{t=\tau}^{n}  \mu_{t,n+1}(F) }{\sum_{\tau=1}^n d_\tau \sum_{t=\tau}^n \EE[X_{(t)}]} \geq \min_{\tau\in [n]} \frac{\sum_{t=\tau}^n \mu_{t,n+1}(F)}{\sum_{t=\tau}^n \EE[X_{(t)}]}.\notag
\end{align}
Note that $\sum_{t=\tau}^n \mu_{t,n+1}(F)$ is the reward collected by the optimal policy $\pi^\star$ in the instance $r_1=\cdots =r_{\tau-1}=0 < 1 = r_{\tau}=\cdots = r_n$. 
Furthermore, $\sum_{t=\tau}^n\EE[X_{(t)}]$ is the sum of the $n-\tau+1$ largest values in a sequence of $n$ i.i.d. samples from $F$, i.e., $\sum_{t=\tau}^n\EE[X_{(t)}]=\OPT_{n,n-\tau+1}(F)$.
Therefore, the ratio inside the minimization operator can be interpreted as the ratio in a $\kipi{}$ with $k=n-\tau+1$. Since $\pi^\star$ is optimal for the instance $r$ described above, then it must be the case that $v_{n,F,r}(\pi^\star)=\sum_{t=\tau}^n \mu_{t,n+1}(F)\geq \gamma_{n,n-\tau+1} \OPT_{n,n-\tau+1}(F)$. 
The proof follows since this holds for every $\tau\in [n]$.
\end{proof}

Proposition \ref{prop:upper_bound_ssap} and Proposition \ref{prop:lower_bound_ssap} imply that $\alpha_{n}=\min_{k\in [n]}\gamma_{n,k}$ for every $n$.
The $1-k^ke^{-k}/k!$ lower bound on $\gamma_{n,k}$ imply that $\gamma_{n,k}$ is at least 0.78 for $k\geq 3$ (see, e.g., \citep{Duetting2020,beyhaghi2021improved}) which is in particular larger than $\liminf_n\gamma_{n,1}\approx 0.745$. 
Since our results imply that $\liminf_n\gamma_{n,2}\ge 0.829$, we conclude that $\liminf_n\alpha_n=\liminf_n\gamma_{n,1}\approx 0.745$.

\section{Final Remarks}\label{sec:final_remarks}

In this work, we provide a new exact formulation for $\kipi$. From our formulation, we can derive the nonlinear system of differential equations \eqref{ode1}-\eqref{ode3} as $n$ tends to infinity. Using this system, we can obtain provable guarantees for the approximation ratio of $\kipi$ when $n$ is large. We use the nonlinear system~\eqref{ode1}-\eqref{ode3} to provide new improved bounds for small values of $k \in \{2, \ldots, 5\}$. As a direct application of our new bounds, we provide a tight approximation ratio for the $\ssap$.

We also remark that our linear programming formulation offers several characteristics that make it suitable for a nonlinear analysis. \cite{jiang2022tightness} use a different formulation to provide bounds on the $k$-selection problem, and for the particular case of $k=1$, they also connect to the Hill and Kertz equation. Nevertheless, they go through an extra intermediate formulation in their limit analysis. With our approach, we can directly provide a feasible solution in quantile space that converges to a solution of the nonlinear system (see Subsection~\ref{subsec:warm-up}), and furthermore, it works for general values of $k$. 
To complement our new lower bounds provided by Theorem \ref{thm:comp-ratio}, we further provide new upper bounds in Table~\ref{tab:table_2} for values of $k\in \{1,\ldots,5\}$ obtained by constructing a finite-dimensional linear program over a particular grid of $[0,1]$ whose optimal value $\gamma_{n,k}'$ upper-bounds the optimal value of \ref{form:LP_dual} which is precisely the worst-case approximation ratio $\gamma_{n,k}$; we show the details of this construction and the numerical experiments in Appendix~\ref{app:numerical_k_2}.

\begin{table}[h!]
        \centering
        \begin{tabular}{|c|c|c|c|c|c|}
        \hline
            $k$ & 1 & 2 & 3 & 4 & 5  \\
            \hline\hline
            $\gamma_{n,k}'$ ($n=1000$) & 0.7474 & 0.8372 & 0.8742 & 0.8949 & 0.9086  \\
            \hline
            \cite{jiang2022tightness} ($n=1000$) & 0.7475 & 0.8377 & 0.8748 & 0.8955 & 0.9093 \\
            \hline
        \end{tabular}
        \caption{New numerical upper bounds on the worst-case approximation ratio $\inf_{n\geq 1} \gamma_{n,k}$, for $k\in \{1,2,\ldots,5\}$. We remark that \emph{smaller} values are better for upper bounds. Our grid for $n=1000$ produces smaller values than the state of the art.}
        \label{tab:table_2}
\end{table}

Finding an analytical formula for the approximation ratio of $\kipi$ remains an open problem, but our findings offer a potential avenue toward this goal. Using the system~\eqref{ode1}-\eqref{ode3}, we can characterize the value $\theta_k^\star$ using an integral equation and show that $\theta_k^\star \geq (1 - e^{-k})/k$. Providing a similar lower bound for $j < k$ becomes nontrivial due to the dependency with $j + 1$. 

We also remark that, in principle, our solution is suboptimal as we only construct a feasible solution to the weak dual~\ref{form:dual-relaxation}. Another place where suboptimality could appear is in the the weak duality between~\ref{form:dual-relaxation} and~\ref{form:LP_dual}. However, we believe these two programs hold strong duality. In fact, the following LP:
\begin{align}
    \sup \quad  v \hspace{1cm} &\tag*{\normalfont{\mbox{[D$_{\text{strong}}$]$_{n,k}$}}}\label{form:strong_dual} \\
	\normalfont\text{s.t.} \quad  \int_0^1 \beta_{1,\ell}(q)\, \mathrm{d}q &\leq \mathbf{1}_{k}(\ell),  \hspace{.3cm} \text{for all }\ell\in [k],\nonumber\\
	\int_0^1 \beta_{t+1,k}(q) \, \mathrm{d}q&\leq \int_0^1 (1-q) \beta_{t,k}(q) \, \mathrm{d}q,\hspace{.3cm}   \text{for all } t\in [n-1] ,\nonumber\\
	\int_0^1 \beta_{t+1,\ell}(q) \, \mathrm{d}q&\leq \int_0^1 (1-q) \beta_{t,\ell}(q) \, \mathrm{d}q+ \int_0^1 q \beta_{t,\ell+1}(q) \, \mathrm{d}q,   \hspace{.2cm}\text{all } t\in [n-1], \ell\in[k-1],\nonumber\\
	 v g_{n,k}(u) + \frac{\mathrm{d}\eta}{\mathrm{d} u}(u)&\leq \sum_{t=1}^n\sum_{\ell=1}^k \int_u^1 \beta_{t,\ell}(q) \, \mathrm{d}q,  \text{ for }u \in [0,1]  , \nonumber \\
	\beta_{t,\ell}(q)&\ge 0\hspace{.3cm} \text{ for all }q\in [0,1],t\in [n]\text{ and }\ell\in [k],\nonumber \\
    \eta(q) & \geq 0 \hspace{0.3cm} \text{ for all } q\in [0,1], \nonumber \\
    \eta(0) & = \eta(1) = 0, \nonumber
\end{align}
can be shown to be a strong dual to~\ref{form:LP_dual}. The weak duality proof is analogous to the proof of Lemma~\ref{lem:weak_duality}, and the strong duality holds using a discretization argument as in~\citep{perez2022iid}. We remark that~\ref{form:dual-relaxation} is a restriction of~\ref{form:strong_dual} with $\eta=0$.

Our results are valid for $n$ large enough ($n \geq n_0$, with $n_0$ depending only on $k$); hence, providing a lower bound for all $n$ remains an open problem and will require additional structural results over the solution of the nonlinear system~\eqref{ode1}-\eqref{ode3}. For instance, for $k>1$ and $j<k$, we observe that the higher-order derivatives of $y_j$ change signs over $[0,1]$, hence, ruling out techniques that work in the case $k=1$ (see, e.g.,~\citep{correa2021posted,brustle2022competition}). We leave the tightness of our approximation result in Theorem \ref{thm:comp-ratio} through our nonlinear system~\eqref{ode1}-\eqref{ode3} as an open question, i.e., whether $\inf_{n\geq 1} \gamma_{n,k} = \sum_{\ell=1}^k \theta_\ell^\star$.


\bibliographystyle{abbrvnat}
{\small \bibliography{refs}}

\newpage
\appendix

\section{Missing Proof from Section \ref{sec:exact_LP_formulation}}\label{app:exact_LP_formulation}
\begin{proof}[Proof of Proposition \ref{prop:useful_LP}] 
We show that for any $j\in [n]$, 
$
\int_0^1 j{\binom{n}{j}}(1-u)^{j-1}u^{n-j}F^{-1}(1-u)\, \mathrm{d}u
= \EE[X_{(j)}].
$
This is sufficient since summing over all $j \in \{ n-k+1, \ldots, n\}$ will then complete the proof.
By performing a change of variables $x = F^{-1}(1-u)$, we get
\begin{align*}
&\int_0^1 j{\binom{n}{j}}(1-u)^{j-1}u^{n-j}F^{-1}(1-u)\, \mathrm{d}u\\
= & \int_{\infty}^0 j{\binom{n}{j}}(F(x))^{j-1}(1-F(x)))^{n-j}x(-f(x))\, \mathrm{d}x\\
= & \int_{0}^{\infty}\frac{n!}{(j-1)!(n-j)!}f(x)(F(x))^{j-1}(1-F(x)))^{n-j}x \, \mathrm{d}x = \EE[X_{(j)}],
\end{align*}
where $f(x) = F'(x)$. 
The final equality simply follows from the known fact that the probability density function 
$f_{X_{(j)}}(x) = n!f(x)(F(x))^{j-1}(1-F(x)))^{n-j}/((j-1)!(n-j)!).$
This finishes part \ref{prop:change_variable_opt}.

For part \ref{prop:change_variable_simple_exp}, recall that 
$\EE[X | X \geq x]{\Pr[X \geq x]} = \EE[X \mathds{1}_{\{X \geq x\}}]
$.
On the other hand, we have that
$$
\int_0^q F^{-1}(1-u)\, \mathrm{d}u =  \int_{\infty}^{F^{-1}(1-q)} z(-f(z)) \, \mathrm{d}z =  \int_x^{\infty} zf(z) \mathrm{d}z = \EE[X \mathds{1}_{\{X \geq x\}}],
$$
where we used the change of variable $z = F^{-1}(1-u)$, and in the second to last equality, we use that $q=\Pr[X\geq x] = 1-F(x)$.
This finished the proof.
\end{proof}

\section{Missing Proofs from Section~\ref{sec:LB}}\label{app:LB}

\begin{proof}[Proof of Claim \ref{claim-euler}]
    By induction $y_{m,j,t-1}>y_{m,j,t}$. Then, we can compare the following ratio
    \begin{align*}
        m\cdot \frac{\Gamma_{k}(-\ln y_{m,j,t-1})-\Gamma_{k}(-\ln y_{m,j,t}) }{\Gamma_{k+1}(-\ln y_{m,j,t-1})-\Gamma_{k+1}(-\ln y_{m,j,t})} & = m\cdot \frac{\int_{-\ln y_{m,j,t-1}}^{-\ln y_{m,j,t}} x^{k-1}e^{-x}\, \mathrm{d}x  }{\int_{-\ln y_{m,j,t-1}}^{-\ln y_{m,j,t}} x^{k}e^{-x}\, \mathrm{d}x } \\
        & \geq m \inf_{ x\in [-\ln y_{m,j,t-1}, -\ln y_{m,j,t}]  } \frac{1}{x} \\
        & = m \cdot \frac{1}{-\ln y_{m,j,t}}= \frac{m}{\ln m} \geq 1
    \end{align*}
    From here, the claim follows.
\end{proof}

\begin{proof}[Proof of Claim~\ref{claim:bounded_theta_finite_m}]
By contradiction, assume that $\theta_j > \theta_{j+1}^\star$. Let $t\leq t'$. Now, note that 
\begin{align*}
& \frac{1}{m} \left( k! - \Gamma_{k+1}(-\ln y_{m,j,t}) - \frac{\theta^{\star}_{j+1}}{\theta_j}(k!- \Gamma_{k+1}(-\ln Y_{j+1} (t/m) )) \right) \\
\geq & \frac{\theta^{\star}_{j+1}}{\theta_j m} \left( k! - \Gamma_{k+1}(-\ln y_{m,j,t}) - (k!- \Gamma_{k+1}(-\ln Y_{j+1} (t/m) )) \right)\\
= & - \frac{\theta^{\star}_{j+1}}{m\theta_j}\left( \Gamma_{k+1}(-\ln y_{m,j,t}) -  \Gamma_{k+1}(-\ln Y_{j+1} (t/m) )) \right)
\end{align*}
On the other hand, using~\eqref{eq:euler_approx_second_line} and~\eqref{eq:euler_approx_second_last_line} and Claim~\ref{claim-euler}, we obtain
\begin{align*}
 & \frac{1}{m} \left( k! - \Gamma_{k+1}(-\ln y_{m,j,t}) - \frac{\theta^{\star}_{j+1}}{\theta_j}(k!- \Gamma_{k+1}(-\ln Y_{j+1} (t/m) )) \right) \\ 
 \leq & - \frac{\theta^{\star}_{j+1}}{m\theta_j}( \Gamma_{k+1}( -\ln Y_{j+1}((t-1)/m)) -\Gamma_{k+1}(-\ln Y_{j+1}(t/m)))
\end{align*}
From here, we deduce $\Gamma_{k+1}( -\ln Y_{j+1}((t-1)/m)) \leq \Gamma_{k+1}(-\ln y_{m,j,t})$ or equivalently $Y_{j+1}((t-1)/m) \leq y_{m,j,t}$. For $t=1$, this last inequality implies $y_{n,j,1}\geq 1$ and this is impossible as we always have $y_{m,j,1}<1$ for any $\theta_j>0$. We conclude that $\theta_j \leq \theta_{j+1}^\star$.
\end{proof}

{ \begin{proof}[Proof of Proposition~\ref{prop:restricted_LP_tightens_constraints}]
In what follows, we use the convention that an empty sum is equal to zero. We also avoid writing the limits in the integrals and the differentials ``$\mathrm{d}q$'' as they are clear from the context. More specifically, we write $\int_0^1 h(q)\, \mathrm{d}q=\int h$ for any integrable function $h$ in $[0,1]$. We simply say that $(\alpha,v)$ is \emph{feasible} if $(\alpha,v)$ is feasible to~\ref{form:P_nk_restricted_n}. We also use the notation $\bar{n}=\barnk$ to avoid notational clutter. 

We fix $v\geq 0$ such that $(\alpha,v)$ is a feasible solution to~\ref{form:P_nk_restricted_n}. Let's define the sets
\begin{align*}
    J_k^{(\alpha,v)}&=\{ 
 t\in [\barnk]: (\alpha,v)\text{ does not tighten constraint~\eqref{const:constr_1_P_n_k_restricted_n}}\text{ for }t \}\\
 J_\ell^{(\alpha,v)} &= \{ 
 t\in [\barnk]: (\alpha,v)\text{ does not tighten constraint~\eqref{const:constr_t_P_n_k_restricted_n}}\text{ for }t, \ell\}, \quad \ell <k
\end{align*}
Let $t_{(\alpha,v)}' = \max\{ t\in J_1^{(\alpha,v)}\cup \cdots \cup J_k^{(\alpha,v)}  \}$. If 
$$J_1^{(\alpha,v)}\cup \cdots \cup J_k^{(\alpha,v)}=\emptyset,$$ then we define $t_{(\alpha,v)}'=\bar{n}+1$; otherwise, $t'\in [\bar{n}]$. Let $p_{(\alpha,v)}' = |\{ \ell \in [k]: t_{(\alpha,v)}'\in J_\ell^{(\alpha,v)} \}|$ be the number of constraints of type~\eqref{const:constr_1_P_n_k_restricted_n}-\eqref{const:constr_t_P_n_k_restricted_n} for which the $t_{(\alpha,v)}'$-th constraint is not tight.

Now, among all possible feasible solution $(\alpha,v)$, for a fixed $v$, choose the one that maximizes $t_{(\alpha,v)}'$. If $t_{(\alpha,v)}'=\bar{n}+1$, then we are done. Otherwise, let $t'\in [\bar{n}]$ be the maximum value for such a solution. Among all feasible solutions $(\alpha,v)$ such that $t_{(\alpha,v)}'=t'$ choose the one that minimizes $p'=p_{(\alpha,v)}'$. Note that $p'\geq 1$.
Now, we will modify $(\alpha,v)$ by finitely many mass transfers and additions yielding a new feasible solution $(\alpha',v)$ such that either $t_{(\alpha',v)}'>t'$ or either $t_{(\alpha',v)}'=t'$ and $p_{(\alpha',v)}' < p'$. In any case, we will obtain a contradiction.

Let's assume first that $t' \in J_k^{(\alpha,v)}$---the general case is handled similarly; we explain at the end the minor changes. We analyze two different cases:

\noindent{\bf Case 1.} If $t'=\bar{n}$, then, we consider the solution $\bar{\alpha}_{t,k}=\alpha_{t,k}$ for $t<\bar{n}$ and $\bar{\alpha}_{\bar{n},k}=\alpha_{\bar{n},k} + \varepsilon\mathbf{1}_{(0,1)}$ with $\varepsilon>0$ such that
    $\int \alpha_{\bar{n},k} + \varepsilon + \sum_{\tau< \bar{n}}\int q \alpha_{\tau,k} = 1.$
    Also, $\bar{\alpha}_{t,\ell}=\alpha_{t,\ell}$ for $\ell<k$. Note that $(\bar{\alpha},v)$ remains feasible and tightens one more constraint in~\eqref{const:constr_1_P_n_k_restricted_n}; this contradicts our choice of $p'$.\\
    
\noindent{\bf Case 2.} If $t'<n$, we define
    \[
    \bar{\alpha}_{t,k} = \begin{cases}
        \alpha_{t,k}, & t< t', \\
        \alpha_{t',k} + \sum_{\tau > t'} \omega_t \alpha_{t,k}, & t=t',\\
        (1-\omega_t)\alpha_{t,k}, & t>t',
    \end{cases}
    \]
    where $\omega_{t'+1},\ldots,\omega_n\in [0,1]$. Let $\bar{\alpha}_{t,\ell}=\alpha_{t,\ell}$ for $\ell<k$. 
    Note that $(\bar{\alpha},v)$ satisfies \eqref{const:P_n_approx_restricted_n}, it satisfies \eqref{const:constr_1_P_n_k_restricted_n} for $t<t'$, and for $t>t'$, we have
    $\int\bar{\alpha}_{t,k} +\sum_{\tau<t}\int q \bar{\alpha}_{\tau,k}= \int (1-\omega_t) \alpha_{t,k} + \sum_{\tau< t}\int q \alpha_{\tau,k} + \sum_{\tau\geq t} \omega_\tau \int q \alpha_{\tau,k}$, 
    which is increasing in $\omega_{\tau}$ for $\tau > t$ and decreasing in $\omega_{t}$.
    
    We start with the values $\omega_{t'+1},\ldots,\omega_{n}=0$ and at this point $(\bar{\alpha},v)$ is feasible. 
    By the choice of $t'$, we have
    $\int \alpha_{t',k} + \sum_{\tau> t'}\omega_{\tau}\int \alpha_{\tau,k} + \sum_{\tau< t}\int q \alpha_{\tau,k} \leq 1$
    for $\omega_{t'+1},\ldots,\omega_{n}>0$ small enough. Now, we increment $\omega_{t'+1}$ as much as possible while keeping feasibility of $(\bar{\alpha},v)$. We repeat the same process in the order $\omega_{t'+2},\ldots,\omega_n$. We note that $\omega_{t'+1},\ldots,\omega_{n}$ are not all $0$'s.
        
    Suppose that we have
        $\int\bar{\alpha}_{t',k} +\sum_{\tau<t'}\int q \bar{\alpha}_{\tau,k}=\int \alpha_{t',k} + \sum_{\tau> t'}\omega_{\tau}\int \alpha_{\tau,k} + \sum_{\tau< t}\int q \alpha_{\tau,k} < 1.$
        Then, we claim that $\omega_{t'+1},\ldots,\omega_{\bar{n}}=1$. Indeed, let $\tau'>t'$ be the smallest $\tau$ such that $\omega_{\tau}<1$. Then, 
        $\bar{\alpha}_{t,k} = 0, \text{ for }t\in \{t'+1,\ldots,\tau-1\}.$
        Furthermore, constraints~\eqref{const:constr_1_P_n_k_restricted_n} for $t\in \{t',t'+1,\ldots\tau-1\}$ are not tight, because they are dominated by constraint~\eqref{const:constr_1_P_n_k_restricted_n} for $t=t'$ which is not tight. Since increasing $\omega_{\tau}$ does not affect constraints~\eqref{const:constr_1_P_n_k_restricted_n} for $t\geq \tau$, we can increase slightly $\omega_{\tau}$ and contradict the choice of $\omega_{t'+1},\ldots,\omega_{\bar{n}}$. From this analysis, we also deduce that every constraint~\eqref{const:constr_1_P_n_k_restricted_n} for $t=t',\ldots,{\bar{n}}$ is not tight. Furthermore, $\bar{\alpha}_{t,k}=0$ for $t>t'$. Define
        \[
        \hat{\alpha}_{t,k}(q) = \begin{cases}
            \bar{\alpha}_{t,k}(q)\,(=\alpha_{t,k}(q)), & t< t',\\
            \bar{\alpha}_{t,k}(q) + c_{t}\delta_{\{ 1 \} }(q) , & t\geq t',
        \end{cases}
        \]
        where $\delta_{\{ 1\} }(\cdot)$ is the Dirac delta at one. We define $\hat{\alpha}_{t,\ell}=\bar{\alpha}_{t,\ell}=\alpha_{t,\ell}$ for $\ell< k$. 
        Note that $(\hat{\alpha},v)$ satisfies  constraints~\eqref{const:constr_t_P_n_k_restricted_n} and constraints~\eqref{const:P_n_approx_restricted_n}, and constraints~\eqref{const:constr_1_P_n_k_restricted_n} for $t<t'$. If we define
        $c_{t'}= 1- \int \bar{\alpha}_{t',k} - \sum_{\tau < t}\int q \bar{\alpha}_{\tau,k} >0$
        we have that $\hat{\alpha}$ satisfies constraint~\eqref{const:constr_1_P_n_k_restricted_n} at $t=t'$ with equality. For $t> t'$ we define
        $c_t = 1 - \sum_{\tau < t} \int q\hat{\alpha}_{\tau,k} \geq 0$. A small computation shows that $(\hat{\alpha},v)$ is again feasible and tightens \eqref{const:constr_1_P_n_k_restricted_n} for $t'$. Furthermore, all the other constraints~\eqref{const:constr_t_P_n_k_restricted_n} remain unchanged for $t\leq t'$ as they are only affected by terms $\alpha_{j,\tau}$ with $\tau < t'$. This implies that either $p_{(\hat{\alpha},v)}' < p'$ if $p'>1$ or $t_{(\hat{\alpha},v)}'> t'$ if $p'=1$. In any case, this leads again to a contradiction to our choice of $(\alpha,v)$.



When  $t'\notin J_{k}^{(\alpha,v)}$, we have $t'\in J_\ell^{(\alpha,v)}$ for some $\ell<k$. In this case, the analysis is the same with the only difference that the constraints will have the value $\sum_{\tau < t} \int q\alpha_{\tau,\ell+1}$ on the right-hand side instead of $1$. It is crucial to notice that this value is a non-negative constant when modifying $\alpha_{t,\ell}$; hence, our mass transfers and additions still work. We skip the details for brevity.
\end{proof}}


\begin{proof}[Proof of Proposition \ref{prop:g-n-properties}]
    Part \ref{gnk:diff-neg} follows directly by computing the derivative:
    \begin{align*}
        g_{n,k}'(u) &= \sum_{j=n-k+1}^n j \binom{n}{j} \left( -(j-1)(1-u)^{j-2}u^{n-j} + (n-j)(1-u)^{j-1}u^{n-j-1}   \right) \\
        & = \sum_{j=n-k+1}^{n-1} j \binom{n}{j} (n-j)(1-u)^{j-1}u^{n-j-1} - \sum_{j=n-k+1}^n j(j-1)\binom{n}{j} (1-u)^{j-2}u^{n-j} \\
        & = \sum_{j=n-k+2}^n (j-1)\binom{n}{j-1}(n-j+1)(1-u)^{j-2}u^{n-j} -  \sum_{j=n-k+1}^n j(j-1)\binom{n}{j} (1-u)^{j-2}u^{n-j} \\
        & = \sum_{j=n-k+2}^n \frac{n!}{(j-2)!(n-j)!}(1-u)^{j-2}u^{n-j} -\sum_{j=n-k+1}^n \frac{n!}{(j-2)!(n-j)!}(1-u)^{j-2}u^{n-j} \\
        & = - (n-k+1)(n-k)\binom{n}{k-1} (1-u)^{n-k-1}u^{k-1}.
    \end{align*}
    In the same line, part \ref{gnk:recursive} follows by evaluating directly $g'_{n+1,k+1}$ using the previous formula:
    \begin{align*}
    g_{n+1,k+1}'(u)&= - (n-k+1)(n-k)\binom{n+1}{k} (1-u)^{n-k-1}u^{k}=\frac{n+1}{k}ug'_{n,k}(u).
    \end{align*}
For part \ref{prop:limit_props_of_gnk}, we have
\begin{align*}
-g'_{n,k}(u) & = (n-k+1)(n-k)\binom{n}{k-1}(1-u)^{n-k-1}u^{k-1}\\
& = \frac{(n-k+1)}{\barnk^{k-1}}\binom{n}{k-1} (\barnk u)^{k-1}(1-u)^{n-k-1}(\barnk+1) \\
 & =\frac{(\barnk+2)}{\barnk^{k-1}}\frac{n\cdot(n-1) \cdots  (n-(k-1)+1)}{(k-1)!} (\barnk u)^{k-1}(1-u)^{n-k-1}(\barnk+1)\\
& \leq  \frac{n}{(k-1)!} \left( \frac{n-k/2}{\barnk} \right)^{k-1}(\barnk u)^{k-1}(1-u)^{n-k-1} (\barnk+1)\\
& \leq  \frac{n}{(k-1)!} \left( 1 + \frac{4k^2}{n} \right)(\barnk u)^{k-1}e^{-\barnk u} (\barnk+1).
\end{align*}
The final inequality follows by the bound $(1-u)^x \leq e^{-ux}$ for $u \in [0,1]$ and observing that 
\begin{align*}
    \left( \frac{n-k/2}{\barnk} \right)^{k-1}=  \left( 1+ \frac{k/2 +1}{n-k-1}\right)^{k-1}&\leq  \left( 1+ \frac{k+1}{n-(k+1)}\right)^{k+1}\\
    &\leq  \exp((k+1)^2/(n-(k+1)))
\end{align*}

Rewrite $n=(k+1)+c(k+1)^2$ for some $c>1$. We get 
$\exp((k+1)^2/(n-(k+1))) = \exp(1/c)$ and we can compute 
$\left( 1 + 4k^2/n \right) \geq 1+2/c$.
Thus the inequality holds when $\frac{1}{c} \leq \ln(1+2/c)$, which is true for any $c>2$.
That is, letting $n \geq (k+1)+2(k+1)^2$ suffices. 
This proves the claim as the above inequality can be slightly strengthened by a factor of $\barnk/(\barnk+1)$.

For \ref{prop:limit_props_of_gnk-2}, we have
    \begin{align*}
        -g'_{n,k}(u) & = (n-k+1)(n-k)\binom{n}{k-1}(1-u)^{n-k-1}u^{k-1}\\
        & \geq \frac{\barnk^2}{\barnk^{k-1}} \binom{n}{k-1} (\barnk u)^{k-1}(1-u)^{n-k-1} \\
        & = \frac{\barnk^2}{\barnk^{k-1}} \frac{n\cdot(n-1) \cdots  (n-(k-1)+1)}{(k-1)!} (\barnk u)^{k-1}(1-u)^{n-k-1} \\
        & \geq \frac{n \cdot \barnk}{(k-1)!} \left( 1 - \frac{k}{\barnk} \right)^{k-1}(\barnk u)^{k-1}(1-u)^{n-k-1} \\
        & \geq \frac{n \cdot \barnk}{(k-1)!} \left( 1 - 4 \frac{k}{n} \right)^k (\barnk u)^{k-1} e^{-\barnk u/(1-u)} \\
        & \geq \frac{n}{(k-1)!}\left( 1 - 4 \frac{k^2}{n} \right) \left( 1 - \frac{\barnk u^2}{1-u} \right) (\barnk u)^{k-1}e^{-\barnk u} \barnk 
    \end{align*}
    where in the third equality we use that $(1-u)^{-1}=1+u/(1-u)\leq \exp({u/(1-u)})$, and in last inequality we use $\exp({-\barnk u /(1-u)}) = \exp({-\barnk u -\barnk u^2/(1-u)}) \geq \exp({-\barnk u})(1-\barnk u^2/(1-u))$. Observe that $\Gamma_{k}(\barnk u)' = \Gamma_{k}'(\barnk u) \barnk = - (\barnk u)^{k-1} e^{-\barnk u} \barnk.$ 
    We conclude by noting that the function $1-\barnk x^2/(1-x)$, in $[0,1]$, is decreasing, positive at zero, and it has a unique root in the value $(\sqrt{4 \barnk + 1} - 1)/(2 \barnk)$.
    Since this value is larger than $1/(2\sqrt{\barnk})$, the conclusion follows.
\end{proof}

\begin{proof}[Proof of Proposition \ref{prop:mon-phi}]
Note that by \eqref{ode2-phi}, for every $t\in (0,1)$ we have 
$$|\Phi'_{k,\ell}(t)|\le k!+|\Phi_{k+1,\ell}(t)|+\frac{\theta_{\ell+1}^{\star}}{\theta_{\ell}^{\star}}k!+\frac{\theta_{\ell+1}^{\star}}{\theta_{\ell}^{\star}}|\Phi_{k+1,\ell+1}(t)|\le 4k!\frac{\theta_{\ell+1}^{\star}}{\theta_{\ell}^{\star}},$$
where the last inequality holds since $\Phi_{k+1,r}\le k!$ in $(0,1)$ for every $r\in [k]$, and $\theta_{\ell}^{\star}<\theta_{\ell+1}^{\star}$ by Proposition \ref{prop:nls}\ref{prop:nls-diff-theta}.
Let $b_k=4k!\max_\ell \theta^{\star}_{\ell+1}/\theta^{\star}_{\ell}$.
Then, since $\Phi_{k,\ell}(1)=0$, using the Taylor first-order approximation for $\Phi_{k,\ell}$ in one, for every $t\in (0,1)$ we have
$\Phi_{k,\ell}(t) \leq b_k (1-t).$
For each $\ell\in [k]$, by the formula $\Gamma_k(x)=(k-1)!\cdot e^{-x}\sum_{r=0}^{k-1}x^r/r!$ applied with 
$x=- \ln(Y_{\ell}(t))$
we conclude that
    $b_k (1-t)\ge \Phi_{k,\ell}(t)=(k-1)! \sum_{r=0}^{k-1}Y_{\ell}(t)(-\ln Y_{\ell}(t))^r/r!,$
    and then $Y_{\ell}(t)(-\ln Y_{\ell}(t))^r\le b_k\cdot r!/(k-1)!\le b_k,$
    where the first inequality holds since $Y_{\ell}(t)\in [0,1]$ for every $t\in (0,1)$.
    This concludes part \ref{prop:nls-b}.

For the second part, by \eqref{ode2-phi}, for each $\ell\ne k$ we have
    \begin{align*}
        -\Phi_{k,\ell}(t)  = \int_t^1 \Phi_{k,\ell}'(\tau) \, \mathrm{d}\tau &= \int_t^1\Big(k!-\Phi_{k+1,\ell}(\tau) - \frac{\theta^{\star}_{\ell+1}}{\theta^{\star}_{\ell}}(k!-\Phi_{k+1,\ell+1}(\tau))\Big)\, \mathrm{d}\tau\\
        & \leq (1-t)k!\left( 1 - \frac{\theta^{\star}_{\ell+1}}{\theta^{\star}_{\ell}}\right) + \frac{\theta^{\star}_{\ell+1}}{\theta^{\star}_{\ell}}\int_t^1 \Phi_{k+1,\ell+1}(\tau)\, \mathrm{d}\tau.
    \end{align*}
    For each $\ell\ne k$, choose $\delta_{\ell}>0$ such that 
    $\Phi_{k+1,\ell+1}(t) \leq k! (1-\theta^{\star}_{\ell}/\theta^{\star}_{\ell+1})/2$
    and $Y_{\ell}(t)^{1/2}(-\ln Y_k(t))^{k-1}\leq 1$ for $t\in (\delta_{\ell}, 1)$. 
    Using the first inequality, get
    \begin{align*}
    &(1-t)k!\left( 1 - \frac{\theta^{\star}_{\ell+1}}{\theta^{\star}_{\ell}}\right) + \frac{\theta^{\star}_{\ell+1}}{\theta^{\star}_{\ell}}\int_t^1 \Phi_{k+1,\ell+1}(\tau)\, \mathrm{d}\tau \\
    &\leq 
    (1-t)k!\left( 1 - \frac{\theta^{\star}_{\ell+1}}{\theta^{\star}_{\ell}}\right) + (1-t)k!\frac{1}{2}\left(\frac{\theta^{\star}_{\ell+1}}{\theta^{\star}_{\ell}} -1\right)= -k!(1-t)\frac{1}{2}\left(\frac{\theta^{\star}_{\ell+1}}{\theta^{\star}_{\ell}} -1\right).
    \end{align*}
    Hence for this interval, using the bound $\Phi_{k,\ell}(t) = \Gamma_k(-\ln Y_{\ell}(t)) \leq k! Y_{\ell}(t)(-\ln Y_{\ell}(t))^{k-1}$, 
    we have $Y_{\ell}(t)\geq (1-t)^2 (\theta^{\star}_{{\ell}+1}/\theta^{\star}_{\ell}-1)^2/4$.
    For ${\ell}=k$, we have
    \begin{align*}
        (1-t)k! \left( \frac{1}{k\theta_k} -1 \right) \leq \Gamma_k(-\ln Y_k(t)) \leq k! Y_k(t)(-\ln Y_k(t))^{k-1}.
    \end{align*}
    By part \ref{prop:nls-b}, we know that $Y_k(t)\leq b_k(1-t)$. Hence, for some $\delta_k>0$, $Y_k(t)^{1/2}(-\ln Y_k(t))^{k-1}\leq 1$ for all $t\in (\delta_k,1)$. Hence, $Y_k(t)\geq (1-t)^2 (1/k\theta_k -1)^2$.
 Part \ref{prop:nls-delta} follows by taking $\Delta_k=\max\{\delta_1,\ldots,\delta_k\}$ and $d_k=\min\{(\theta^{\star}_{j+1}/\theta^{\star}_j)-1: j\in \{1,\ldots,k-1\} \}/4$. Using Lemma~\ref{prop:nls}, we can conclude that $d_k > 0$.

From \eqref{ode1-phi} we have $\Phi''_{k,k}(t)=-\Phi'_{k+1,k}(t)\ge 0$, since $\Phi_{k+1,k}$ is non-increasing.
For $\ell\ne k$,
    \begin{align*}
        |\Phi_{k,\ell}''(t)| & = |-\Phi_{k+1,\ell}'(t) + \Phi_{k+1,\ell+1}'(t)| \\
        & \le |\Phi_{k,\ell}'(t)(-\ln Y_{\ell}(t))| + |\Phi_{k,\ell+1}'(t)(-\ln Y_{\ell+1}(t))| \\
        & \leq b_k \left( -\ln Y_{\ell}(t) - \ln Y_{\ell+1} (t) \right).
    \end{align*}
    Let $N_k=\max\{1/(1-\Delta_k)+1,1/d_k\}$.
    Then, for every $n\ge N_k$ we have $1-1/n>\Delta_k$. 
    By the previous part we have that $Y_{\ell}(1-1/n)\geq d_k n^{-2}$ for all $\ell$.
    Let $c_k=6b_k$.
    Since $-\ln Y_{\ell}(t) - \ln Y_{\ell+1} (t)$ is increasing as a function of $t$, for every $t\in (0,1-1/n)$ we have 
    $|\Phi''_{k,\ell}(t)|\le b_k\cdot 2\ln(n^2/d_k)\le b_k\cdot 2\ln(n^3)=c_k\ln(n)$.
    This concludes the proof of part \ref{prop:nls-nk}.


    {  For~\ref{prop:nls-lower_bound}, for $\ell\geq 2$, using a Taylor expansion around zero, for some $\xi\in (0,t)$ and $t<1-1/n$, we have
    \begin{align*}
    \Phi_{k,\ell}(t) &= \Gamma_k(-\ln Y_\ell (0)) + \Gamma_k(-\ln Y_\ell)'(0)t + \frac{1}{2}\Gamma_k(-\ln Y_\ell)''(\xi) t^2 \\
    & \geq (k-1)! - \frac{\theta_{\ell+1}^\star}{\theta_\ell^\star} k! t - \frac{c_k \ln(n)}{2}t^2 \tag{Using the previous part and $\Gamma_k(-\ln Y_\ell)(0)=0$ using NLS}\\
    &\geq (k-1)! - \frac{c_k \ln(n)}{2}t^2, 
    \end{align*}
    where we used the properties of $\nls_k(\theta^\star)$ and the definition of $b_k$. Since $Y_\ell(0)=1$, for some $\delta_k>0$ we have that $Y_\ell(t)=1-\varepsilon_\ell(t)$ for $t\in [0,\delta]$ with $\varepsilon_\ell(t)\leq 1/2$ for $t\in [0,\delta]$ and $\varepsilon_\ell(t)\to 0$ when $t\to 1$. We simply write $\varepsilon=\varepsilon_\ell(t)$ for convenience. Then, using the characterization of the gamma function $\Gamma_k$ as a Poisson distribution, we can deduce that
    \begin{align*}
    \frac{c_k \ln(n)}{2}t^2 &\geq \int_0^{-\ln Y_\ell(t)} s^{k-1} e^{-s}\, \mathrm{d}s \\
    &\geq \int_0^{\varepsilon} s^{k-1}e^{-s}\, \mathrm{d}s \\
    &= (k-1)! \sum_{j\geq k} e^{-\varepsilon} \frac{\varepsilon^j}{j!}\\
    &\geq \frac{(k-1)!}{k!} \varepsilon^k e^{-\varepsilon}\geq \frac{1}{2}\frac{(k-1)!}{k!} \varepsilon^k,    
    \end{align*}
    where in the second inequality we used that $\ln (1-\varepsilon)\leq -\varepsilon $ and the other inequalities follow by straightforward computations. From here, we obtain that $\varepsilon \leq \bar{c}_k \ln(n)^{1/k}  t^{2/k}$, where $\bar{c}_k = ( kc_k )^{1/k}$. This concludes~\ref{prop:nls-lower_bound}.   }
\end{proof}

\begin{proof}[Proof of Claim~\ref{claim:gamma_kk}] Using a Taylor expansion, we have
    \begin{align*}
        \Phi_{k,k}'\left( \frac{t}{\barnk} \right) - \frac{\Phi_{k,k}((t+1)/\barnk)) - \Phi_{k,k}(t/\barnk)}{1/\barnk} & = -\frac{1}{2\barnk} \Phi_{k,k}''(\xi) \tag{For some $\xi\in (t/\barnk, (t+1)/\barnk)$}
    \end{align*}
    We have $\Phi_{k,k}'' = - \Gamma_{k+1}(-\ln y_k)' = - (-\ln y_k)^{k} y_k'\geq 0$. This concludes the proof of the claim. 
\end{proof}

\begin{proof}[Proof of Claim~\ref{claim:phi-log}]
Using a Taylor expansion, we have
    \begin{align*}
    &\barnk(\Phi_{k,\ell}((t-1)/\barnk)-\Phi_{k,\ell}(t/\barnk)) + \Phi'_{k,\ell}((t-1)/\barnk)\\
    &=\barnk\left(\frac{1}{\barnk}\Phi_{k,\ell}'\left( \frac{t-1}{\barnk}\right)+\Phi_{k,\ell}\left(\frac{t-1}{\barnk}\right)-\Phi_{k,\ell}\left(\frac{t}{\barnk}\right)\right)=-\barnk\cdot \frac{1}{2\barnk^2} \Phi_{k,\ell}''(\xi)=-\frac{1}{2\barnk} \Phi_{k,\ell}''(\xi),
    \end{align*}
    for some value $\xi\in ((t-1)/\barnk, t/\barnk)$.
    Since $t/\barnk\le (\barnk-1)/\barnk=1-1/\barnk$, by Proposition \ref{prop:mon-phi}\ref{prop:nls-nk} we have $-\Phi_{k,\ell}''(\xi)\le c_k\ln(\barnk)$, which concludes the proof of the claim.
\end{proof}

\begin{proof}[Proof of Claim~\ref{claim:UB_for_terms_}]
    We first verify that for every $\ell \in [k]$, for $\barnk \geq 1/d_k$, and $t\leq \barnk-1$ we have $\varepsilon_{\ell,t} \leq 3\ln (\barnk)/\barnk$, where $d_k$ is defined in Proposition~\ref{prop:mon-phi}. Indeed, using Proposition~\ref{prop:mon-phi}\ref{prop:nls-delta} we obtain
    \begin{align*}
        -\ln Y_j(1-1/\barnk) & \leq -\ln(d_k)/\barnk + 2\ln(\barnk)/\barnk. 
    \end{align*}
    For $\barnk\geq 1/d_k =4/\min\{ \theta_{\ell+1}^\star/\theta_{\ell}^\star -1: { \ell \in \{1,\ldots,k-1\}}\}$, we obtain the desired result.
    Then
    \begin{align*}
        \left( 1- 4\frac{(k+1)^2}{n+1} \right)^{-1}\left( 1-\frac{\barnk \varepsilon_{j+1,t}^2}{1-\varepsilon_{j+1,t}}\right)^{-1} & \leq \left( 1 + 40 \frac{k^2}{n} \right) \left( 1 - \frac{\ln(\barnk)^2}{\barnk - \ln(\barnk)} \right)^{-1}\\
        &\leq 1 + 10 \frac{\ln (\barnk)^2}{\barnk},
    \end{align*}
    which holds for $n$ large.
\end{proof}

\section{Numerical Upper Bounds for Small $k$}\label{app:numerical_k_2}

In this section, for every $n\geq 1$, we provide a finite-dimensional LP with an optimal value that upper bounds the approximation ratio $\gamma_{n,k}$. Thus, solving this LP provides a venue to produce upper bounds on the worst-case approximation ratio, $\inf_{n\geq 1}\gamma_{n,k}$. Our construction is surprisingly simple as it only requires to provide a finite collection of points in $[0,1]$.
We use this LP to provide a new upper bound for small values of $k\in \{1,\ldots,5\}$; a summary is presented in Table~\ref{tab:table_2}. This complements our results in Table~\ref{tab:table_1}.
Let $0=q_0< q_1 < \cdots < q_m=1$ be a collection of points in $[0,1]$ and $Q=(q_0,\ldots,q_m)$. Consider the following finite-dimensional LP:
\begin{align}
	\min \quad\,\,  &d_{1,k} \tag*{\normalfont{\mbox{[P$(Q)$]$_{n,k}$}}}\label{form:LP_dual_Q}\\
	\normalfont\text{s.t.} \quad &d_{t,\ell} \geq \sum_{j=1}^m a_j \min\{q_j, q_i \} +   q_i d_{t+1,\ell-1} + (1-q_i)d_{t+1,\ell}, \, \text{ for } t\in[n], \ell\in[k], i \in \{0\} \cup [m], \label{const:dynamic_constr_dual_discrete} \\
	&\sum_{j=1}^m a_j \sum_{\ell=1}^k \Pr[\mathrm{Binom}(n,q_j)\geq\ell ] \geq 1,\label{const:max_value_const_dual_discrete} \\
	&a_j\geq 0,\qquad\qquad  \text{ for every } j\in \{0,\ldots,m\},\label{const:nonincreasing_dual_discrete}\\
    &d_{t,\ell}\ge 0, \qquad\qquad  \text{for every }t\in [n+1]\text{ and every }\ell\in \{0,1,\ldots,k\},\label{const:d-positive-dual_discrete}
\end{align}
Let $\gamma_{n,k}(Q)$ be the optimal value of~\ref{form:LP_dual_Q}. Our first result is a general methodology to bound $\gamma_{n,k}$ using~\ref{form:LP_dual_Q}. 

\begin{proposition}
    For any $Q=( q_0, q_1,\ldots, q_m)$ collection of points in $[0,1]$ with $q_0=0$, $q_m=1$, and for any $n\geq 1$, we have $\gamma_{n,k}(Q)\geq \gamma_{n,k} $. 
\end{proposition}

\begin{proof}
    It is enough to show that any solution $(a,d)$ to~\ref{form:LP_dual_Q} induces a feasible solution to~\ref{form:LP_dual} with objective value $d_{1,k}$. Let $h:[0,1]\to \R_+$ defined via 
    \[
    h(u) = \begin{cases}
        \sum_{j=0}^m a_j, & u=0,\\
        \sum_{j=i}^m a_j, & u\in (q_{i-1},q_i], i\in \{1,\ldots,m\}.
    \end{cases}
    \]
    Clearly, $h$ is non-negative and non-increasing. Note that for $q\in (q_{i-1},q_i]$, we have
    \begin{align*}
        \int_0^q h(u)\, \mathrm{d}u & = (q_1-q_0)\sum_{j=1}^m a_j+ \cdots + (q_{i-1}-q_{i-2})\sum_{j=i-1}^m a_j + (q-q_{i-1})\sum_{j=i}^m a_j\\
        & = \sum_{j=1}^{i-1} a_j q_j + \sum_{j=i}^m a_j q\\
        & = \sum_{j=1}^{m} a_j \min\{ q_j, q\}.
    \end{align*}
    Now, note that the function $q\mapsto \sum_{j=1}^m a_j \min \{q_j,q \} + (1-q)d_{t+1,\ell} + q d_{t+1,\ell-1}$
    is concave and piece-wise linear; hence, it attains its maximum in one of the breakpoints $q_0,\ldots,q_m$ when $q\in [0,1]$. This implies, that for any $q\in [0,1]$, and $\ell\in [k]$,
    \begin{align*}
        \int_0^q h(u)\, \mathrm{d}u + (1-q)d_{t+1,\ell} + q d_{t+1,\ell-1} & = \sum_{j=1}^m a_j \min \{q_j,q \} + (1-q)d_{t+1,\ell} + q d_{t+1,\ell-1}\\
        & \leq \max_{i\in \{0,\ldots,m\}} \left\{ \sum_{j=1}^m a_j \min \{q_j,q_i \} + (1-q_i)d_{t+1,\ell} + q_i d_{t+1,\ell-1} \right\} \\
        & \leq d_{t,\ell}
    \end{align*}
    where in the first equality we used the formula found for $\int_0^q h(u)\, \mathrm{d}u$, in the first inequality we used the fact in the previous paragraph and in the last inequality we used~\eqref{const:dynamic_constr_dual_discrete}. Furthermore,
    \begin{align*}
    \int_0^1 g_{n,k}(u) h(u)\, \mathrm{d}u & = \sum_{i=1}^m \int_{q_{i-1}}^{q_i} \sum_{j=i}^m a_j g_{n,k}(u)\, \mathrm{d}u\\
    & = \sum_{j=1}^m a_j \int_0^{q_j}g_{n,k}(u)\, \mathrm{d}u\\
    & = \sum_{j=1}^m a_j \sum_{\ell=1}^k \Pr[\mathrm{Binom}(n,q_j)\geq \ell]  \geq 1
    \end{align*}
    where in the equalities we use the definition of $f$, change the order of summation and performed the integral of $g_{n,k}$. The last inequality follows by~\eqref{const:max_value_const_dual_discrete}. From here, we obtain immediately that $(h,d)$ is feasible to~\ref{form:LP_dual} which concludes our proof.
\end{proof}

Numerically, we found that collections of points of the form $Q = ((i/100m)_{i \in \{0,1,\ldots,m\}},1)$ provide improved upper bounds compared to those by \cite{jiang2022tightness}. Table~\ref{tab:table_2} in Section~\ref{sec:final_remarks} summarizes our new numerical upper bounds for $n=1000$ and $m=1000$.

\end{document}